\newcounter{subsubsubsection}[subsubsection]
\renewcommand\thesubsubsubsection{\thesubsubsection .\@alph\c@subsubsubsection}
\newcommand\subsubsubsection{\@startsection{subsubsubsection}{4}{\z@}%
                                     {-3.25ex\@plus -1ex \@minus -.2ex}%
                                     {1.5ex \@plus .2ex}%
                                     {\centering\normalfont\small\textit}}
\newcommand*\l@subsubsubsection{\@dottedtocline{3}{10.0em}{4.1em}}
\newcommand*{\subsubsubsectionmark}[1]{}
\newtheorem{Theorem}{Theorem}[section]
\newtheorem{Definition}{Definition}[section]
\newtheorem{Lemma}{Lemma}[section]
\def\be{\begin{equation}}
\def\ee{\end{equation}}
\def\ba{\begin{eqnarray}}
\def\ea{\end{eqnarray}}
\def\bas{\begin{subequations}\begin{eqnarray}}
\def\eas{\end{eqnarray}\end{subequations}}
\def\eps{\varepsilon}
\def\tr{\text{tr}}
\def\de{\mathrm{d}}
\def\Pexp{\overrightarrow{\exp}}
\def\f{\frac}
\def\lb{\big\lbrace}
\def\rb{\big\rbrace}
\def\SU{\text{SU}}
\def\SO{\text{SO}}
\def\SL{\text{SL}}
\def\su{\mathfrak{su}}
\def\vp{\vphantom{-1}}
\def\openone{\mathds{1}}
\def\time{\,{\scriptstyle{\times}}\,}
\def\nn{\nonumber}
\def\q{\qquad}
\def\i{\mathrm{i}}
\def\g{\mathrm{g}}
\def\bX{\mathbf{X}}
\begin{document}

\title{Flux formulation of loop quantum gravity:\\ Classical framework}

\author{Bianca Dittrich}
\affiliation{Perimeter Institute for Theoretical Physics,\\ 31 Caroline Street North, Waterloo, Ontario, Canada N2L 2Y5}
\author{Marc Geiller}
\affiliation{Institute for Gravitation and the Cosmos \& Physics Department,\\ Penn State, University Park, PA 16802, U.S.A.}

\begin{abstract}
We recently introduced a new representation for loop quantum gravity, which is based on the BF vacuum and is in this sense much nearer to the spirit of spin foam dynamics. In the present paper we lay out the classical framework underlying this new formulation. The central objects in our construction are the so-called integrated fluxes, which are defined as the integral of the electric field variable over surfaces of codimension one, and related in turn to Wilson surface operators. These integrated flux observables will play an important role in the coarse graining of states in loop quantum gravity, and can be used to encode in this context the notion of curvature-induced torsion. We furthermore define a continuum phase space as the modified projective limit of a family of discrete phase spaces based on triangulations. This continuum phase space yields a continuum (holonomy-flux) algebra of observables. We show that the corresponding Poisson algebra is closed by computing the Poisson brackets between the integrated fluxes, which have the novel property of being allowed to intersect each other.
\end{abstract}

\maketitle

\section{Introduction}

\noindent Quantum gravity aims at providing a description of the dynamics of quantum geometry and of its interaction with quantum matter fields. As such, it requires in particular an understanding of the notion of quantum geometry. One of the central achievements of loop quantum gravity (LQG hereafter) is precisely to provide a rigorous and background-independent construction of a (Hilbert space) representation supporting geometrical degrees of freedom \cite{lqg1,lqg2,lqg3}. These degrees of freedom are encoded as intrinsic and extrinsic geometrical data in the so-called kinematical holonomy-flux algebra of observables. The representation of this latter has been worked out in the 90's by Ashtekar, Lewandowski and Isham \cite{ali1,ali2,ali3,ali4}, and is known as the Ashtekar--Lewandowski (AL hereafter) representation. It can be understood as a quantization of the (extended) space of connection fields, where the connections themselves appear through their holonomies.

The goal of the present work is to construct an alternative representation for quantum geometry. As we will see, instead of being based on the holonomies, this new representation gives a much more prominent role to the fluxes (which encode the intrinsic geometry). This therefore opens the road to quantizing the (extended) space of flux configurations. We believe and will argue that this new representation based on the fluxes could facilitate very much the imposition of a suitable dynamics, and thus constitutes a better starting point for the construction of a physical Hilbert space supporting solutions to all the constraints of the theory.

One central achievement of the AL representation is to deal successfully and for the first time with (spatial) diffeomorphisms \cite{abhaydiffeos}. This follows essentially from the topological character of the Poisson bracket relations for the holonomy-flux algebra, and from the fact that excitations have a very distributional nature. Here we will not alter these two properties, and therefore leave open the possibility of solving the spatial diffeomorphism constraint in a similar way to what is done in the AL representation.

Another notable (yet peculiar) property of the AL representation is that it is built over a vacuum state which is totally squeezed, and in which all expectation values of fluxes are vanishing. Excitations on top of this vacuum are generated by holonomy variables, and are therefore based on 1-dimensional curves along which the fluxes are then non-vanishing\footnote{The fluxes are based on $(d-1)$-dimensional surfaces, and if such a surface cuts the path of an holonomy it leads to a non-vanishing commutator.}. This picture has the drawback of making quite difficult the construction of states describing smooth geometries, which is in turn related (but not strictly equivalent) to the issue of constructing semi-classical states. In addition to this, the fact that LQG is based on such distributional geometries makes the contact with spin foam models slightly unclear, since these attempts to formulate the covariant dynamics are rather based on piecewise-flat manifolds.

These difficulties have motivated the attemps to construct and the search for alternative representations and setups for the quantization of theories of connections \cite{otherBaez}. Somehow paradoxically, investigations \cite{hanno1,hanno2} of the conditions under which the AL representation could be generalized, lead to the so-called F--LOST uniqueness theorem for the kinematical structure of LQG \cite{FLOST1,FLOST2}. This theorem states that the AL representation is the only one satisfying a certain number of assumptions, including irreducibility, the requirement that spatial diffeomorphisms act as automorphisms and leave the vacuum invariant, and the requirement that fluxes exist either as operators or as a weakly continuous operator family of exponentiated fluxes.

In light of this, it is clear that alternative representations of the holonomy-flux algebra should necessarily violate at least one of the assumptions of the uniqueness theorem. An example is given by the so-called Koslowski--Sahlmann representation \cite{KS1,KS2,KS3}, which is based on a vacuum defined over a non-vanishing background flux, and is therefore not invariant (but rather covariant) under spatial diffeomorphisms \cite{varadarajan1,varadarajan2,varadarajan3}. This background structure does however present the advantage of allowing for the description of condensate states corresponding to ``macroscopic manifolds''. Very recently, a proposal was made \cite{lanery1,lanery2,lanery3,lanery4}, based on earlier work \cite{kijowski,okolow1,okolow2}, to generalize the inductive limit Hilbert space construction underlying the known representations. This series of papers aims in particular at representing semi-classical states (in the sense of states that are not squeezed). This proposal has to be explored further in order to understand the way in which it deals with irreducibility, spatial diffeomorphisms, and the $\SU(2)$ gauge symmetry.

In the present paper, we develop the classical framework underlying the representation introduced in \cite{paper1}. As expected, this new representation violates one of the assumptions of the F--LOST uniqueness theorem. This is due to the fact that the integrated fluxes in the quantum theory will only exist in their exponentiated form (and not form weakly continuous operator families). In many aspects, this new (BF or flux) representation can be understood as being dual to the AL representation\footnote{The BF representation discussed here should not be confused with the ``non-commutative flux representation'' introduced in \cite{aristideandco}. This latter rather results from a unitary transformation (in fact, a non-commutative Fourier transform) on the holonomy representation of the usual AL framework. Thus, the word ``representation'' in \cite{aristideandco} is referring to a choice of (generalized) basis in the AL Hilbert space. This is in sharp contrast with the BF representation (of the flux-algebra), which will turn out to be unitarily inequivalent to the AL representation.}. Although the fact that such an alternative representation should exist has been suggested early on \cite{pullin,lewandowski,bianchi,FGZ,cylconsis,guedes}, the need to dualize every ingredient of the usual AL representation has been realized only recently in \cite{paper1}.

The underlying vacuum for this new representation is again a totally squeezed state, which is however peaked on vanishing curvature. Therefore, excitation on top of this vacuum will turn out to describe distributional curvature. The attractiveness of this proposal is that the new vacuum and its excitations coincide with the building blocks of spin foam (or even Regge) quantum gravity. In particular, the vacuum is a physical state of BF theory, and this latter underlies the whole construction of spin foam models. This representation therefore seems to be especially suited for the construction of the dynamics (i.e. the imposition of the constraints) of the theory.

Another strength of the new representation is the geometric interpretation, which is made more transparent and straightforward in the context of ``almost everywhere flat'' than in the context of ``almost everywhere degenerate'' configurations. For this reason, we will make our construction as geometrical as possible, and choose in particular to work with the so-called simplicial fluxes \cite{husain,qsd7}, which transform in a gauge covariant way. In fact, we are going to see that the new representation is based on the ability to compose or ``coarse grain'' fluxes, which can only be done consistently when considering these simplicial fluxes.

The description of how fluxes compose (as integrated fluxes) is going to be central for the construction of the inductive limit Hilbert space, and we are therefore going to study this point in details. One of the main results of the paper will be the definition of these integrated fluxes, and a discussion of their geometric interpretation. We will furthermore prove for the first time that the Poisson algebra of holonomies and fluxes is closed. In particular, contrary to what one may expect, the commutator between fluxes whose underlying surfaces are intersecting does not lead to (even more) distributional objects.

In this work we are also going to describe a continuum realization of the underlying phase space supporting the new BF representation. This continuum formulation will also provide a description of the holonomy-flux observable algebra that will be represented in the quantum theory (since we use simplicial fluxes, this algebra differs slightly from the one used for the AL representation). To this end, we will have to modify the standard way of constructing continuum phase spaces out of discrete phase spaces, which is via a projective limit. More precisely, we will introduce a modified projective limit taking into account the (flatness) constraints. This modification will restrict the applicability of the projections so that these do not erase any (curvature) information. Such constraints are also essential in the description of a dynamics involving varying phase space dimensions \cite{hoehn1,hoehn2}, which appear in the context of simplicial discretizations. Furthermore, the constraints can be used to describe the discrete phase spaces as reduced phase spaces \cite{FGZ}. We will prove here that the coarser phase spaces result from symplectic reductions of finer phase spaces, and that the (restricted) projections provide the symplectic maps. We will point out that all these techniques can also be used to define a classical framework underlying the AL representation. In fact the appearance of (first class) constraints in both cases explains the squeezed nature of both the AL and the BF vacuum states.

\subsection*{Outline}

\noindent This paper is organized as follows. In section \ref{sec:setup}, we describe the setup of our construction, and recall some general definitions related to triangulations of manifolds and their dual complexes. As required for the modified projective limit, we introduce the partially ordered set of triangulations, and specify the refinement operations which define the partial order.

Section \ref{sec:classical} briefly reviews the classical phase spaces based on a fixed triangulation, and, most importantly, also defines the flux observables and how these flux observables are composed to form ``integrated flux observables''.

In section \ref{sec:geom}, we discuss the geometric interpretation of the integrated flux observables and the way in which they depend on the underlying curve (in $d=2$ spatial dimensions) or surface (in $d=3$ spatial dimensions). We will point out that the coarse-grained fluxes may violate the (coarse) Gauss constraints, leading to an effect which we coin ``curvature-induced torsion''.

Section \ref{connecting} is devoted to the definition and the characterization of the continuum phase space starting from the family of discrete phase spaces. In order to do so, we will introduce (restricted) projection and (generalized) embedding maps connecting coarser and finer phase spaces. This will allow us to define a (modified) projective limit, and to prove that coarser phase spaces arise as symplectic reductions of finer phase spaces. We will also discuss the definition of the continuum observables as (consistent) families of observables defined on discrete phase spaces.

All this material will then enable us to consider, in section \ref{fluxalgebra}, the commutator algebra of fluxes. We will show that the Poisson algebra of holonomies and fluxes is closed, and discuss various cases for the commutator of the fluxes.

In section \ref{sec:diffeos} we will sketch the way in which spatial diffeomorphisms are going to act in the quantum theory. As in the AL representation, this action changes the embedding of the excitations, which for the BF representation are curvature excitations. We will provide an heuristic argument (for $d=2$ spatial dimensions) showing that the action which we propose is indeed related to an imposition of the spatial diffeomorphism constraints.

We will finally discuss various possibilities for the imposition of the dynamics in section \ref{sec:dyn}, and close in section \ref{sec:summary} with a summary of the new results and a discussion of open issues and possible generalizations.

The appendices contain a short discussion of the notion of inductive and projective limits, some basic formulas involving the group $\SU(2)$ and its Lie algebra that needed for the computation of various Poisson brackets, and detailed computations of the Poisson brackets of the fluxes on the discrete and continuum phase spaces.

\section{Setup}
\label{sec:setup}

\noindent Since the setup for the present work is a simplicial formulation of LQG, it is helpful to recall some basic definitions and notations that will be used throughout the main text. In particular, for the definition of the continuum phase space via a (modified) projective limit, we need to specify a partially ordered set of (discrete) structures labelling the phase spaces. This set will be given by the set of triangulations (for a review of some basic concepts, see for instance \cite{carfora-marzuoli}). In addition to this, we also need to specify the refinement operations that will define the partial order.

In this work, we will denote by $\Sigma$ a closed spacelike $d$-dimensional manifold (with $d=2$ or $d=3$ depending on the context). The group $G$ will be a semi-simple compact Lie group (typically $\SU(2)$) with Lie algebra $\mathfrak{g}$, but most of the results will carry over to the case of finite groups as well.

\subsection{Triangulations and dual complexes}
 
\begin{Definition}[Simplices]
A $k$-simplex $\sigma^k\equiv(v_0,\ldots,v_k)$ with vertices $v_0,\ldots,v_k$ is the subspace of $\mathbb{R}^d$ (with $k\leq d$) defined by
\be
\sigma^k=\left\{\sum_{i=0}^k\lambda_iv_i\Bigg|\sum_{i=0}^k\lambda_i=1,\ 0\leq\lambda_i\leq1,\ v_i\in\mathbb{R}^d\right\}.
\ee
The dimension of a k-simplex is $\dim\sigma^k=k$. For $j<k$, we will call a subsimplex of $\sigma^k$ any simplex $\sigma^j\subset\sigma^k$ whose vertices are a subset of those of $\sigma^k$.
\end{Definition}

We introduce the common notation according to which 0-simplices $\sigma^0$ are called vertices and denoted by $v$, $1$-simplices $\sigma^1$ are called edges and denoted by $e$, $2$-simplices $\sigma^2$ are called triangles and denoted by $t$, and $3$-simplices are called tetrahedra and denoted by $\tau$.

\begin{Definition}[Simplicial complex]
A finite simplicial complex $K$ is a finite collection of simplices such that:\\
i) If $\sigma^k$ is in $K$, then all of its subsimplices are in $K$ as well;\\
ii) If $\sigma^j,\sigma^k\in K$, then $\sigma^j\cap\sigma^k$ is either a subsimplex of both $\sigma^j$ and $\sigma^k$ or is empty.\\
The underlying space, or underlying polyhedron, denoted by $|K|$, is the set-theoretic union of the simplices forming $K$ equipped with the topology inherited from $\mathbb{R}^d$.
\end{Definition}

\begin{Definition}[$k$-skeleton]
The $k$-skeleton of a simplicial complex $K$ is the subcomplex that contains all the simplices of dimension lower than and equal to $k$.
\end{Definition}

\begin{Definition}[Star]
The star $\mathrm{st}(\sigma^k)$ of a simplex $\sigma^k$ is the union of all the simplices that have $\sigma^k$ as a subsimplex. The closure $\overline{\mathrm{st}}(\sigma^k)$ of the star of a simplex is the smallest subcomplex that contains $\mathrm{st}(\sigma^k)$.
\end{Definition}

\begin{Definition}[Triangulation]
A triangulation $\Delta$ of the manifold $\Sigma$ (this latter being viewed as a topological space), is a simplicial complex $K$ together with a homeomorphism from $|K|$ to $\Sigma$.
\end{Definition}

In a slight abuse of language, we will therefore refer to the simplices of the triangulation, and denote the $k$-skeleton by $\Delta_k$. We will denote the number of vertices of a triangulation by $|v|$, and use a similar notation for the other elements of the triangulation as well.

\begin{Definition}[Geometric triangulation]
Let $q$ be a fiducial $d$-dimensional Riemannian metric tensor on $\Sigma$. This fiducial metric can be used to equip the triangulation with a geometric structure. This can be done by embedding the vertices $v$ (i.e. by giving them coordinates), and considering that the edges and triangles are respectively geodesics and minimal surfaces with respect to the fiducial metric. This defines the notion of a geometric triangulation.
\end{Definition}

We assume that the triangulation is sufficiently fine so that the geodesics and minimal surfaces between its vertices are unique. We will also restrict ourselves to orientable manifolds, and assume that the top-dimensional simplices $\sigma^d$ carry a positive orientation. This excludes the formation of ``spikes'', that is for instance a subdivided tetrahedron for which the inner vertex lies outside the tetrahedron. Note that these requirements are formulated only with respect to the auxiliary metric. Any physical metric could still lead to such spikes. This will also restrict the action of spatial diffeomorphisms, which will be discussed later on.

In $d=2$ spatial dimensions, the fact that we choose to embed only the vertices will be justified later on as well. In fact, it will turn out that the (integrated) flux observables are labeled by curves, and only depend on the relative position of these curves with respect to the vertices. This will be different in $d=3$ spatial dimensions, where the position of the edges of the triangulation will also be relevant (for instance in the sense that they carry distributional curvature). For $d=3$, we could therefore also allow for an embedding of the edges (different from a geodesic embedding), but still define the 2-simplices as minimal surfaces.

\begin{Definition}[Dual complex]
For each triangulation, we consider the dual complex $\Upsilon$, which is a cellular complex (i.e. its cells are not necessarily simplices) consisting of $0$-cells called nodes $n$, $1$-cells called links $l$, $2$-cells called faces $f$, and $3$-cells. The duality is defined by a one-to-one correspondence between the $k$-dimensional cells of $\Upsilon$ and the $(d-k)$-dimensional simplices of $\Delta$. Furthermore, we assign an orientation to the links and the $(d-1)$-dimensional simplices $\sigma^{d-1}$ such that the orientation of each pair $(l,\sigma^{d-1})$ is direct (i.e. $l\wedge\sigma^{d-1}$ leads to a positive volume form with respect to the auxiliary metric). Finally the links $l\in\Upsilon$ meet the corresponding dual simplices $\sigma^{d-1}$ in the triangulation (i.e. the edges in $d=2$ and the triangles in $d=3$) transversally in the sense that:\\
i) The intersection $u=l\cap\sigma^{d-1}$ is given by a single point;\\
ii) There exists an open neighborhood $U$ of $u$ together with a diffeomorphism mapping $U$ to $\mathbb{R}^d$ (whose coordinates we denote by $x^1,\ldots,x^d$), $u$ to the origin of $\mathbb{R}^d$, $\sigma^{d-1}\cap U$ to the plane $x^d=0$, and $l\cap U$ to the line $x^1=\ldots=x^{d-1}=0$.
\end{Definition}

The nomenclature which we use to denote the elements of a $d$-dimensional triangulation $\Delta$ and of its dual complex $\Upsilon$ are summarized in table \ref{fig:1}, together with various notations which we will introduce and use later on. For simplicity, the graph corresponding to the 1-skeleton $\Upsilon_1$ of the dual complex $\Upsilon$ will be denoted by $\Gamma$. When writing dimension-independent expressions, we will sometimes denote for example the $(d-1)$-dimensional simplex $\sigma^{d-1}$ dual to the link $l$ simply by $l^*$.
\begin{center}
\begin{table}[h]
\begin{tabular}{|c|c||c|c|}
\hline
\multicolumn{2}{|c||}{$d=2$}&\multicolumn{2}{c|}{$d=3$}\\
\hline
~~~~~~~~~~~~$\Delta$~~~~~~~~~~~~&~~$\Upsilon$~~&~~~~~~~~~~~~~~$\Delta$~~~~~~~~~~~~~~&~~$\Upsilon$~~
\\\hline
vertex $v$ & face $f$ & vertex $v$ & 3-dimensional cell
\\
edge $e=l^*$ & ~~link $l$~ \rdelim\}{2}{0mm}[ 1-skeleton $\Gamma$]\q\q\q\q\q & edge $e$ & face $f$
\\
triangle $t$ & ~node $n$ \q\q\q\q\q & triangle $t=l^*$ & ~~link $l$~ \rdelim\}{2}{0mm}[ 1-skeleton $\Gamma$]\q\q\q\q\q
\\
& & tetrahedron $\tau$ & ~node $n$ \q\q\q\q\q
\\
edge path $\pi\subset\Delta_1$ & shadow graph $\Gamma_\pi\subset\Gamma$ & surface path $\pi\subset\Delta_2$ & shadow graph $\Gamma_\pi\subset\Gamma$
\\
& & & shadow tree $\mathcal{T}_\pi\subset\Gamma_\pi$
\\
& path $\gamma\subset\Gamma$ & & path $\gamma\subset\Gamma$
\\
& leaf $\ell$ & & leaf $\ell$
\\
& root $r$ & & root $r$
\\
& branch $b$ & & branch $b$
\\
& tree $\mathcal{T}$ & & tree $\mathcal{T}$
\\
\hline
\end{tabular}
\caption{Elements of a $d$-dimensional triangulation $\Delta$ and of its dual complex $\Upsilon$.}
\label{fig:1}
\end{table}
\end{center}

It will be convenient later on, in order to introduce the new flux variables, to fix a reference node $n$ in the dual complex $\Upsilon$ and call it the root $r$. This root is dual to a $d$-simplex $\sigma^d_r$ of the triangulation, and specifies a reference frame in which closed holonomies can be based and the fluxes can be transported. A path in the graph $\Gamma$ connecting the root $r$ to any node $n$ (or more generally any two nodes) can be specified uniquely by a choice of spanning tree, which we now define.

\begin{Definition}[Spanning tree]
A spanning tree $\mathcal{T}$ is a connected subgraph of the $1$-skeleton $\Gamma$ of the dual complex $\Upsilon$, which does not contain any cycles but includes all the nodes of the dual complex. We call the links of the tree branches and denote them by $b$, while the links of $\Gamma$ that are not in the tree are called leaves and denoted by $\ell$. These leaves are in one-to-one correspondence with the fundamental cycles $c$ of the graph $\Gamma$. A rooted spanning tree with root $r$ is a spanning tree where a preferred node $n$ is identified and called the root.
\end{Definition}

A spanning tree will be used later on to perform a partial gauge fixing by setting the group elements on the branches to the identity, thereby leaving degrees of freedom only on the leaves of the tree. If we denote by $|n|$ the number of nodes of the graph $\Gamma$ and $|l|$ the number of links, then a spanning tree has $|n|-1$ branches and $|\ell|=|l|-|n|+1$ leaves (and fundamental cycles).

The spanning tree serves a dual purpose, since it can be used to perform a partial gauge fixing of the group elements, but also to specify uniquely a path between any two nodes.

\subsection{Alexander moves}
\label{subsec:pachner}

\noindent As opposed to the AL representation, for which the refinement is based on the graph, the refinement for the BF representation will be based on the triangulation itself. This indeed seems to be the only possible choice since the excitations with respect to the BF vacuum are themselves based on the triangulation.

Triangulations can be refined by the so-called Alexander moves, which are also known as star subdivisions. Alexander proved that any two triangulations of a polyhedron can be transformed into each other by a finite sequence of star subdivisions and their inverses \cite{alexander}. The Alexander moves are defined as follows.

\begin{Definition}[Refining Alexander moves]\label{def:alexander}
The refining Alexander moves are obtained by placing a vertex in the interior of a $k$-dimensional simplex $\sigma^k$, for $k\in\{1,\ldots,d\}$, and connecting this vertex via new edges to the vertices of the closure $\overline{\mathrm{st}}(\sigma^k)$ of the star of $\sigma^k$. For $d=2$, these are the \hbox{$1$-$3$} and $2$-$4$ moves, which arise respectively from subdividing a triangle and an edge shared by two triangles. For $d=3$, the refining moves are the $1$-$4$, $2$-$6$, and $n$-$(2n)$ moves, which arise respectively from subdividing a tetrahedron, a triangle shared by two tetrahedra, and an edge shared by $n\geq3$ tetrahedra.
\end{Definition}

An example of Alexander 2-4 move in $d=2$ spatial dimensions is illustrated on figure \ref{fig:A2-4}, along with the behavior of the root which will be described below.

Every refining move involves the placement of a new vertex, for which new (embedding) variables need to be specified. The various moves differ in whether the vertices are placed into the bulk of a top-dimensional simplex $\sigma^d$ or onto its $(d-1)$ or $(d-2)$-dimensional boundaries, in which case the embedding variables are constrained to be on the various submanifolds.

Notice that the Alexander 1-3 and 1-4 moves are in fact equivalent to the Pachner 1-3 and 1-4 moves \cite{pachner}. There are however also Pachner moves which are neither refining nor coarse graining moves, which is the reason for which we choose to work with the Alexander moves instead\footnote{In $d=2$ spatial dimensions it is possible to work with the Pachner moves instead of the Alexander moves \cite{paper1}. However, this prevents the introduction of a new vertex exactly on an existing edge.}.

We will eventually define phase spaces associated to the triangulations and parametrized by the holonomy and flux variables. We will work with almost gauge-invariant phase spaces (the only gauge transformations left will be those acting at the root). We therefore have to actually work with the category of rooted triangulations and to specify the behavior of the root under refinements.

The behavior of the root under refinements can be described in two ways. One possibility is to specify a point in the manifold $\Sigma$ as the root. This root point singles out the enclosing $d$-dimensional simplex as the dual to the root node. For this, one has however to exclude refinements that result in a placement of this root point on some lower-dimensional simplex. Note that triangulations with different roots cannot be refined into each other. However, the phase spaces associated to (otherwise equivalent) triangulations with different roots will be connected by a global gauge transformation.

The second possibility to describe the behavior of the root under refinements is to introduce the notion of flagged structure for the root node.

\begin{Definition}[Flagged structure]\label{def:flag}
The flag $\mathrm{fl}(\sigma^k)$ of a $k$-simplex $\sigma^k$ is a set of subsimplices $\sigma^0\subset\sigma^1\subset\cdots\subset\sigma^k$ such that, if $j<k$, the elements of the flag $\mathrm{fl}(\sigma^j)$ of a simplex $\sigma^j\in\mathrm{fl}(\sigma^k)$ are again elements of $\mathrm{fl}(\sigma^k)$, i.e.
\be
\mathrm{fl}(\sigma^j)\subset\mathrm{fl}(\sigma^k),\ \forall\,\sigma^j\in\mathrm{fl}(\sigma^k).
\ee
\end{Definition}

Let $\sigma^d_r$ be the $d$-dimensional simplex dual to the root node, and $\mathrm{fl}(\sigma^d_r)$ a choice of flag for this simplex. After a refining Alexander move affecting the simplex $\sigma^d_r$, we simply define the new root node as the dual of the $d$-dimensional simplex of the refined triangulation that contains the subsimplices $\sigma^0\subset\cdots\subset\sigma^{d-1}$ of the initial flag $\mathrm{fl}(\sigma^d_r)$. In other words, in a refining move affecting the flagged $d$-simplex dual to the root, there is always a unique $d$-simplex in the refined triangulation that inherits the initial flag, and which therefore defines canonically a new root. An example of the behavior of the root under a refining Alexander move is represented in figure \ref{fig:A2-4} for the 2-4 move.

\begin{center}
\begin{figure}[h]
\includegraphics[scale=0.7]{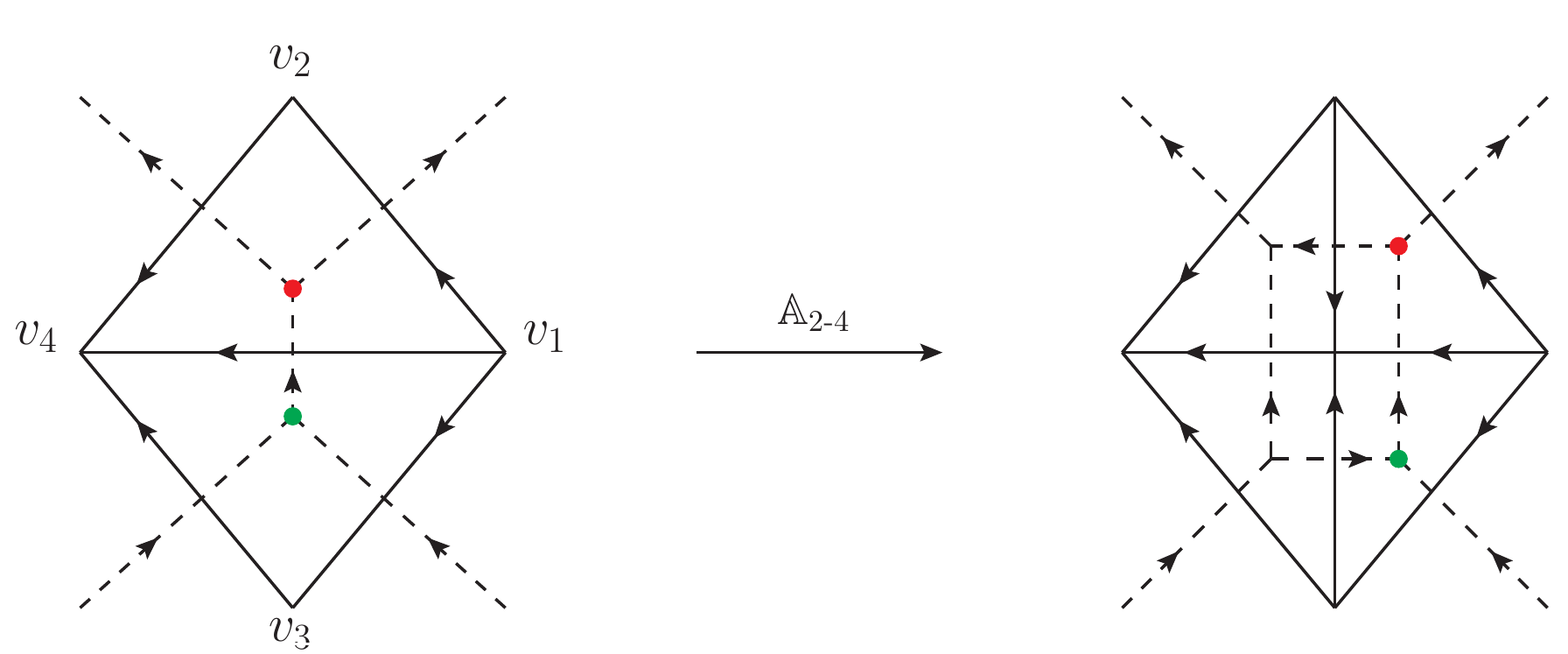}
\caption{Alexander 2-4 move obtained by placing a vertex in the edge connecting the vertices $v_1$ and $v_4$, and then connecting this new vertex to $v_2$ and $v_3$. We have illustrated the behavior of two possible choices for the root (red and green nodes). If we choose the flag of the upper triangle to consist of the vertex $v_1$ and the edge connecting $v_1$ and $v_2$, one can see that the triangle which inherits these flag simplices in the refined triangulation is chosen to be the new root. For the bottom triangle the flag is given by $v_1$ and the edge connecting $v_1$ and $v_3$.}
\label{fig:A2-4}
\end{figure}
\end{center}

Finally, and most importantly for the construction of the new representation, the refining Alexander moves can be used to equip the set of triangulations with a partial order as follows.

\begin{Definition}[Partial order]
A (rooted) triangulation $\Delta'$ is said to be finer than a (rooted) triangulation $\Delta$, which we denote by $\Delta\prec\Delta'$, if $\Delta'$ can be obtained from $\Delta$ by a finite series of refining Alexander moves.
\end{Definition}

This partial order can also be used for geometric triangulations, in which case we have to require in addition that the vertices of $\Delta$ after the refining Alexander moves have the same coordinates as the vertices of $\Delta'$. Note that we will consider two triangulations as being equivalent (or as refinements of each other) even if the orientation of their simplices disagrees.

\begin{Definition}[Common refinenement]
A common refinement of two triangulations $\Delta$ and $\Delta'$ is a triangulation $\mathrm{cr}(\Delta,\Delta')$ which is such that $\Delta\prec\mathrm{cr}(\Delta,\Delta')$ and $\Delta'\prec\mathrm{cr}(\Delta,\Delta')$.
\end{Definition}

The set of geometric triangulations sharing the same root is directed, which means that for any two such triangulations one can construct a common refinement.

\section{Classical phase space}
\label{sec:classical}

\noindent In this section we describe the basic phase space functionals that are of interest for our construction. We will first define and explain the various observables on a fixed triangulation and its dual complex, and then discuss the consistency relations that arise if one wants to connect observables based on triangulations related by a refinement.

The configuration space for theories of connection like LQG is the space $\mathcal{A}$ of smooth connections on a principal $G$-bundle over a base manifold which here is the spatial hypersurface $\Sigma$ (see \cite{lqg1,lqg2,lqg3} for an introduction). By choosing a local trivialization of this bundle, one can see the connection $A\in\Omega^1(\mathfrak{g},\Sigma)$ as a Lie algebra-valued 1-form, and its conjugate variable on the phase space $T^*\mathcal{A}$ as a $\mathfrak{g}$-valued $(d-1)$-form $E$. We will from now on specialize to the case $G=\SU(2)$, and choose the generators of the Lie algebra $\mathfrak{g}=\su(2)$ to be $\tau_i=-\i\sigma_i/2$, where $\sigma_i$ for $i=1,2,3$ are the Pauli matrices, and in terms of which the $\su(2)$ commutation relations are $[\tau_i,\tau_j]=\eps^{ijk}\tau_k$. In terms of these generators, the connection and its conjugated electric field can be written as $A^i\tau_i$ and $E_i\tau^i$.

Usually, one starts with the gauge-variant phase space, which is parametrized by (possibly open) holonomies and smeared fluxes. For a fixed dual graph, this amounts to having a pair of holonomy and flux variables for every link of the graph (see for instance \cite{husain,qsd7}).

Here we will however restrict ourselves to the $\SU(2)$ (almost) gauge-invariant phase space, and leave a global symmetry by the adjoint action on the root node (see also \cite{eteraSL2C,bahrthiemann2} for a gauge-fixed description). Therefore, the phase space associated to a fixed triangulation will be parametrized by closed holonomies and conjugated simplicial flux observables transported to the root node. Whereas the usual discrete phase space is equivalent to $(T^*\SU(2))^{|l|}$ where $|l|$ is the number of links of $\Gamma$, the gauge-reduced phase space will be given by $(T^*\SU(2))^{|\ell|}$ where $|\ell|$ is the number of leaves of $\Gamma$ (or equivalently the number of independent cycles). This will become clear with the Poisson bracket structure for this almost gauge-invariant phase space, which we give below.

Let us first start by recalling some basic facts about the usual LQG holonomy-flux phase space on a single link.

\subsection{Gauge-variant phase space}
\label{gauge-variantPS}

\noindent We recall in this subsection the structure of the phase space of LQG on a fixed graph $\Gamma$ dual to a $d$-dimensional triangulation (see also \cite{qsd7}). This phase space is parametrized by holonomies associated to the links $l$ of the graph, and by simplicial fluxes associated to the $(d-1)$-simplices $l^*$ dual to these links.

Let us assume that the oriented links $l(s)$ are parametrized by a continuous parameter $s\in[0,1]$ such that $l(0)$ is the source node of the link and $l(1)$ is its target node. The holonomy associated with this link is given by the path ordered exponential\footnote{Throughout this work, we will use $h$ to denote holonomies along single links, $g$ to denote holonomies along paths, and $\mathrm{g}$ to denote parameters of gauge transformations. When writing the holonomy along a path $\gamma$, we will use as a subscript either the path itself and write $g_\gamma$, or the source and target nodes and write $g_{\gamma(0)\gamma(1)}$.}
\be
h_l(A)=h_{l(0)l(1)}(A)=\Pexp\left(-\int_{l(0)}^{l(1)}A_a\big(l(s)\big)\dot{l}^a(s)\de s\right),
\ee
where $A=A^i_a\tau_i\de x^a\in\Omega^1(\su(2),\Sigma)$ is the connection. Under an orientation reversal $l\mapsto l^{-1}$ of the link, the holonomy becomes $h_l\mapsto h_{l^{-1}}=h_l^{-1}$, and under the action of finite $\SU(2)$ gauge transformations one has
\be
\mathrm{g}_n\triangleright h_l=\g_{l(1)}h_l\g_{l(0)}^{-1},
\ee
where $\mathrm{g}_n$ is a group element acting at the source and target nodes of the link. If $l_1$ and $l_2$ are two consecutive oriented links with holonomies $h_1$ and $h_2$, and such that $l_1(1)=l_2(0)$, we define the composition of holonomies along the path $l_2\circ l_1$ from $l_1(0)$ to $l_2(1)$ as $g_{l_1(0)l_2(1)}=h_2h_1$.

The conjugate variables to the holonomies are the so-called simplicial (or geometrical) fluxes. A simplicial flux is associated to a link $l$, and defined as the integral of a $(d-1)$-form over the $(d-1)$-dimensional simplex dual to the link $l$ (i.e. an edge in the case $d=2$ and a triangle in the case $d=3$). The explicit definition is given by
\be\label{simplicial flux}
X_l\coloneqq\int_{l^*}g^{-1}_{l(0)l^*(s)}(\star E)\big(l^*(s)\big)g_{l(0)l^*(s)}\de^{d-1}s,
\ee
where $l^*$ denotes the $(d-1)$-dimensional simplex dual to the link $l$, the object $\star E$ is a $(d-1)$-form obtained by dualizing $E^a_i$ in its spatial indices, and $g_{l(0)l^*(s)}$ is the holonomy that starts at the source $l(0)$ of the link $l$, goes along $l$ to the intersection point $u=l\cap l^*$, and then goes from $u$ to a point $s$ in $l^*$. Note that this parallel transport depends on a choice of path, which we live implicit for the sake of notational simplicity. There is in fact a canonical choice for the path in $d=2$ but not for $d=3$.

The simplicial fluxes differ from the standard fluxes one usually uses in LQG by the presence of an explicit parallel transport in \eqref{simplicial flux}. As we will see, the notion of cylindrical consistency that is imposed by using the BF dynamics for refining a given state, involves the addition of simplicial fluxes associated to subdivided edges or triangles. This addition has to take place in a common reference frame, which is why we indeed need the parallel transport in the fluxes.

These simplicial fluxes present the advantage of transforming locally and in a covariant way under gauge transformations, and one has
\be
\mathrm{g}_n\triangleright X_l=\text{Ad}_{\g_{l(0)}}(X_l)=\g_{l(0)}X_l\g_{l(0)}^{-1}.
\ee
Under an orientation reversal of the link, the fluxes transform in the following way:
\be\label{fluxinverse}
X_l\mapsto X_{l^{-1}}=-\text{Ad}_{h_l}(X_l)=-h_lX_lh_l^{-1},
\ee
where $h_l$ is the holonomy associated to the link.

The Poisson brackets between the holonomies and the simplicial fluxes can be computed from the knowledge of the basic continuum Poisson brackets between the connection and the electric field \cite{qsd7}. As is well-known, the holonomy-flux Poisson structure reproduces for each link that of the cotangent bundle $T^*\SU(2)=\SU(2)\time\su(2)^*$, and one has that
\be\label{SU(2)brackets1}
\lb X^i_l,X^j_{l'}\rb=\delta_{l,l'}\eps^{ijk}X^k_l,\q
\lb X^i_l,h_{l'}\rb=\delta_{l,l'}h_l\tau^i-\delta_{l^{-1},l'}\tau^ih_{l'},\q
\lb h_l,h_{l'}\rb=0.
\ee

\subsection{Gauge-invariant phase space}

\noindent The holonomy observables in which we are interested are closed holonomies starting and ending at the root, with loops going along the fundamental cycles of the dual graph $\Gamma$. As explained in the previous section, a description of these fundamental cycles can be obtained by choosing a spanning tree $\mathcal{T}$, the leaves $\ell$ of which are in one-to-one correspondence with the fundamental cycles $c_\ell$. Every such cycle contains exactly one leaf $\ell$, and an arbitrary number $|b_c|\geq2$ of branches. A choice of spanning tree defines a unique path between any two nodes of the graph, with this path going along branches of the tree only. Similarly, for a rooted tree there is a unique path between the root $r$ and any node $n$.

This path can be used to define an holonomy associated to a fundamental cycle $c_\ell$ and based at the root in the following canonical manner. Let $g_{r\ell(0)}$ be the holonomy that starts at the root $r$ and goes to the source node $\ell(0)$ of the leaf $\ell$ along the unique path in the tree. One can then define the closed holonomy
\be\label{leaf closed holonomy}
g_\ell\coloneqq g^{-1}_{r\ell(0)}\left(\overrightarrow{\prod_{b\in c_\ell}}h_b\right)h_\ell g_{r\ell(0)}=g^{-1}_{r\ell^{-1}(0)}h_\ell g_{r\ell(0)^{\vp}}.
\ee
This holonomy goes from the root $r$ to the source $\ell(0)$ of the leaf $\ell$, then along the cycle $c_\ell$ following the orientation of the leaf, and then from the source $\ell(0)$ back to the root. In this expression $h_b$ denotes a group element associated to a branch of the tree, $h_\ell$ is the holonomy along the leaf itself, and $\ell^{-1}$ denotes the leaf with opposite orientation. We are going to use this set of holonomies $g_\ell$ associated to the fundamental cycles as our point-separating set for the gauge-invariant configuration space.

The tree can be used to perform a gauge fixing of the gauge freedom at all the nodes except the root. This can be done by simply setting all the group elements associated to the branches of the tree to the identity, i.e. $h_b=\openone,\ \forall\,b\in\mathcal{T}$.

The conjugated variables to the holonomies $g_\ell$ are the simplicial fluxes $X_\ell$ associated to the leaves and transported to the root along the unique path defined by the tree. We call these variables the rooted fluxes and denote them by
\be\label{rooted fluxes}
\bX_\ell\coloneqq g^{-1}_{r\ell(0)}X_\ell g_{r\ell(0)}=\text{Ad}_{g_{r\ell(0)}^{-1}}(X_\ell).
\ee
This definition is taking the simplicial flux $X_\ell$, which is defined in the frame of the $d$-dimensional simplex dual to the node $\ell(0)$, and transporting it to the frame of the $d$-dimensional simplex dual to the root.

Now, using \eqref{SU(2)brackets1}, one can find the Poisson brackets between the phase space functions $g_\ell$ and $\bX_\ell$. These reproduce the symplectic structure of $(T^*\SU(2))^{|\ell|}$ and are given by
\be\label{SU(2)brackets}
\lb\bX^i_\ell,\bX^j_{\ell'}\rb=\delta_{\ell,\ell'}\eps^{ijk}\bX^k_\ell,\q
\lb\bX^i_\ell,g_{\ell'}\rb=\delta_{\ell,\ell'}g_\ell\tau^i-\delta_{\ell^{-1},\ell'}\tau^ig_{\ell'},\q
\lb g_\ell,g_{\ell'}\rb=0.
\ee
More complicated phase space functions can now be constructed starting from this basic set of holonomies and fluxes associated to the leaves and transported to the root. As noted earlier, the leaves define a set of fundamental cycles from which one can describe all possible cycles of the graph $\Gamma$.

So far we have only considered the fluxes associated to the leaves, and we need to show that it is also possible to reconstruct out of them the fluxes associated to the branches. To this end, recall that in terms of the simplicial fluxes the Gauss law at a node $n\in\Gamma$ is given by
\be\label{gauss1}
\mathcal{G}_n\coloneqq\sum_{l|l(0)=n}X_l+\sum_{l|l(1)=n}X_{l^{-1}}=0.
\ee
We need to express this constraint for the node $n$ in terms of the rooted fluxes. For links $l$ such that $l(0)=n$, we can simply parallel transport the fluxes to the root with the holonomy $g_{rl(0)}=g_{rn}$. Now, with $\bX_l=g^{-1}_{rl(0)}X_lg_{rl(0)}$ being the rooted flux associated to a link $l\in\Gamma$, we can define the rooted flux associated to the inverse link as\footnote{For branches, this formula becomes simply $\bX_{b^{-1}}=-\bX_b$. Because the paths from $r$ to $l(0)$ go only along branches of the tree, the holonomies $g_{rb(0)}$ and $g_{rb^{-1}(0)}$ differ only by an holonomy $h_{b(0)b(1)}$ along the branch itself, which can then be used in relation \eqref{fluxinverse}.} $\bX_{l^{-1}}=g^{-1}_{rl^{-1}(0)}X_{l^{-1}}g_{rl^{-1}(0)}$. For links with $l(1)=n$, we obtain $\bX_{l^{-1}}=g^{-1}_{rn}X_{l^{-1}}g_{rn}$. Therefore, we can just parallel transport all terms in equation \eqref{gauss1}, and arrive at the Gauss constraint in the form
\be
g^{-1}_{rn}\mathcal{G}_ng_{rn}=\sum_{l|l(0)=n}\bX_l+\sum_{l|l(1)=n}\bX_{l^{-1}}=0.
\ee
One can use a tree to solve the Gauss constraints (except the one at the root) iteratively and to find in this way the fluxes associated to all the branches of this tree.

We are now going to discuss more elaborate versions of the rooted fluxes, and in particular define integrated fluxes associated to a set of edges in $d=2$ and a set of triangles in $d=3$. In $d=2$, these integrated flux observables generalize the Dirac observables introduced for $(2+1)$-dimensional gravity with point particles in \cite{FL1}. For $d=3$, the integrated fluxes are related to the so-called Wilson surfaces operators \cite{wilsonsurfaces1,wilsonsurfaces2}.

\subsection{Integrated fluxes}
\label{sec:intflux}

\noindent The integrated fluxes $\bX_\pi$ are constructed from the elementary simplicial fluxes $X_l$, but instead of being associated to a single link (i.e. to a single edge or triangle), they are labeled by co-paths $\pi$ of $(d-1)$-dimensional simplices in the triangulation $\Delta$, that is, a collection of adjacent edges or triangles depending on the dimension. The definition of these co-paths is as follows.

\begin{Definition}[Co-path]\label{def:path}
A co-path $\pi$ in $\Delta$ is a collection of adjacent $(d-1)$-dimensional simplices connected via $(d-2)$-dimensional simplices. We require these co-paths to be such that every $(d-2)$-dimensional simplex of $\pi$ is shared by at most two $(d-1)$-dimensional simplices of $\pi$ (and a possibly arbitrary number of $(d-1)$-dimensional simplices of $\Delta/\pi$). Since all the $(d-1)$-dimensional simplices are oriented and their orientation can be reversed, one can always choose the same orientation for all the elements of $\pi$ and thereby define a global orientation for $\pi$.
\end{Definition}

By reversing the orientation of a $(d-1)$-dimensional simplex, we mean that one has to consider the flux element $X_{l^{-1}}$ defined in \eqref{fluxinverse} instead of $X_l$.

Note that in this definition we do not allow for self-intersections (along $(d-2)$-dimensional simplices) of $\pi$, as these would lead to situations in which more than two $(d-1)$-dimensional simplices of $\pi$ share a $(d-2)$-dimensional simplex. However, one may form self-intersecting co-paths by composing (more elementary) integrated fluxes. We will discuss this operation of composition later on.

If we denote by $|\pi|$ the number of $(d-1)$-dimensional simplices in the co-path $\pi$, we can label its elements by $l^*_1,\ldots,l^*_{|\pi|}$. This corresponds to a collection $e_1,\ldots,e_{|\pi|}$ of edges when $d=2$, and to a collection $t_1,\ldots,t_{|\pi|}$ of triangles when $d=3$. Notice that these are a priori unordered sets of simplices. However, in $d=2$ the global orientation of the co-path $\pi$ induces naturally a total order on the set of its edges, and one can unambiguously call $e_1$ and $e_{|\pi|}$ the first and last edges of $\pi$. In $d=3$ this is not true anymore, but although the set of triangles of a co-path $\pi$ is unordered we will still need to choose a ``first'' triangle since the definition of the integrated fluxes requires a choice of common frame. This common frame will therefore be chosen arbitrarily in $d=3$, and correspond to the node $l_1(0)$ dual to the frame in which the first triangle of $\pi$ is defined. Likewise, in $d=2$ it will be the node $l_1(0)$ dual to the frame in which the first edge of $\pi$ is defined.

To be more precise, the integrated fluxes are defined by transporting the individual fluxes $X_l$ associated to the $(d-1)$-dimensional simplices of $\pi$ into a common frame where they can be added, and then transporting the resulting sum of fluxes to the root. We choose the common frame to be the $d$-dimensional simplex dual to the node $l_1(0)$ of the first link. As we will see, it turns out to be more convenient to define the transport to this common frame independently from the choice of tree for the entire triangulation which could be used to define a the further parallel transport to the root. In $d=2$ spatial dimensions, the sole knowledge of the co-path $\pi$ can be used to define a canonical path in the dual graph $\Gamma$. This canonical path in $\Gamma$ defined by $\pi$ goes along the shadow graph of $\pi$, which we define below. In $d=3$, this notion of shadow graph does not specify a unique parallel transport, and we will need to further specify a choice of shadow tree of this shadow graph.

\begin{Definition}[Shadow graph]\label{def:shadow graph}
The shadow graph $\Gamma_\pi$ of a $(d-1)$-dimensional oriented co-path $\pi$ is a connected subgraph of $\Gamma$ which is uniquely defined by $\pi$ and its global orientation in the following way. The links of the shadow graph connect the source nodes $l(0)$ of all the links $l$ dual to the $(d-1)$-dimensional simplices of $\pi$, while staying as close as possible to $\pi$ in the sense that:\\
i) The simplices dual to the nodes and links of the shadow graph are included in the union of the stars of the (sub-) simplices which form the co-path $\pi$;\\
ii) The links of the shadow graph do not cross the co-path\footnote{If one allows for self-intersections in $\pi$ and crossing links are needed in order to connect all the source nodes $l(0)$, then we require that the only $(d-1)$-dimensional simplices of $\pi$ which should be crossed are those included in the union of the stars of the simplices forming the intersection. In general, the shadow graph in the case of self-intersecting co-paths can be defined by first decomposing the co-paths into non-self-intersecting (but mutually intersecting) parts, and then connecting these parts together.} $\pi$.
\end{Definition}

\begin{center}
\begin{figure}[h]
\includegraphics[scale=0.7]{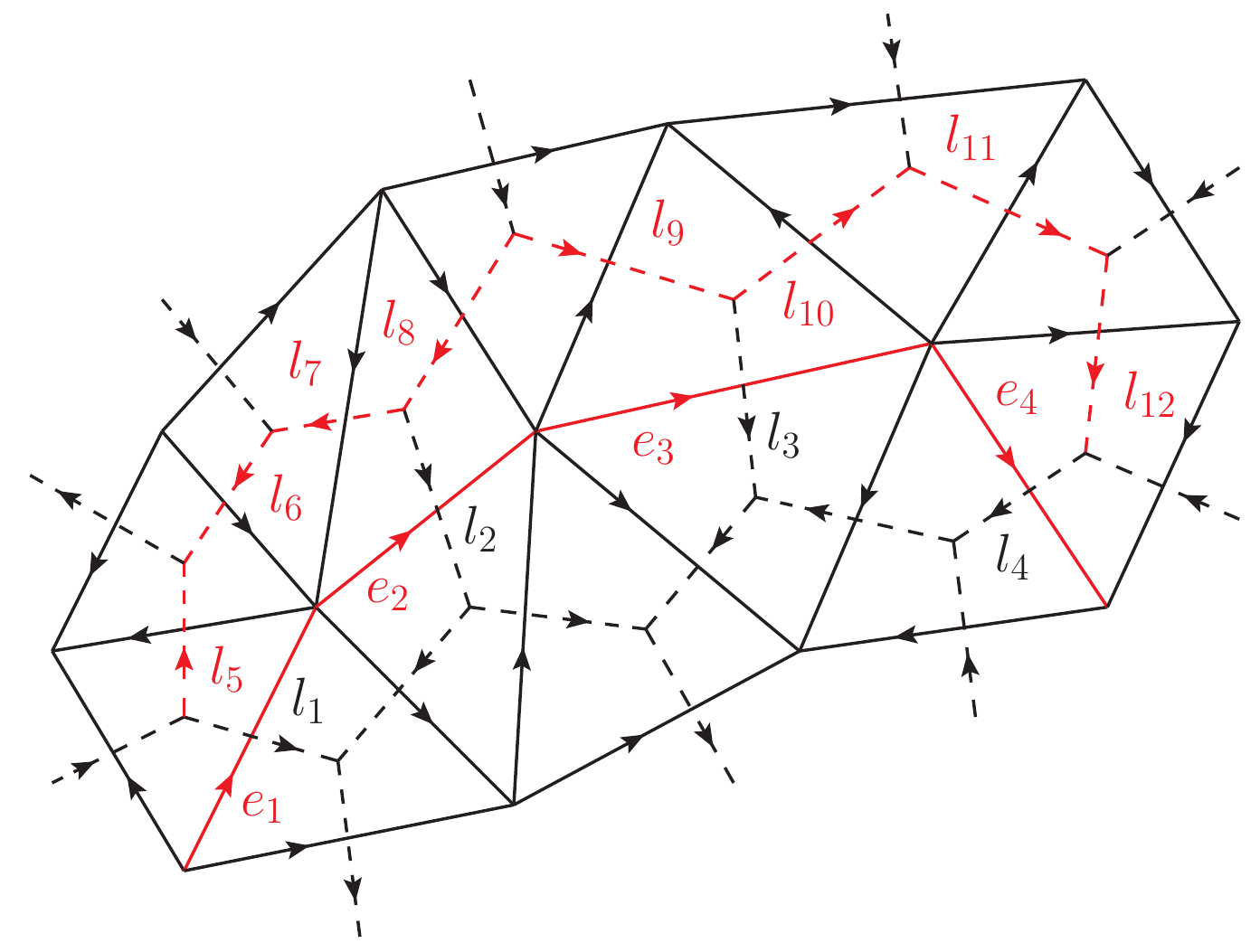}
\caption{Example of a 1-dimensional co-path $\pi$ (solid red) which consists of the four edges $e_1,\ldots,e_4$, and its shadow graph $\Gamma_\pi$ (dashed red) which consists of the links $l_5,\ldots,l_{12}$ connecting the source nodes of the links $l_1,\ldots,l_4$ while staying as close as possible to $\pi$.}
\label{fig:def-path-2d}
\end{figure}
\end{center}

In the case $d=2$, there is always a unique path going along $\Gamma_\pi$ between the reference nodes of any two fluxes in $\pi$ (as can be seen on the example of figure \ref{fig:def-path-2d}). The reason for this is that $\Gamma_\pi$ always has the structure of a spanning tree for the nodes $l_1(0),\ldots,l_{|\pi|}(0)$. In this sense, for $d=2$, the knowledge of $\pi$ is enough in order to uniquely define, via $\Gamma_\pi$, the parallel transport of the individual simplicial fluxes to the common frame $l_1(0)$.

For $d=3$ however, this is not the case anymore since the shadow graph of a 2-dimensional co-path $\pi$ can have a complicated structure and in particular contain closed cycles (it is therefore not a tree). This prevents the parallel transport along its links from being uniquely defined. Therefore, in the case $d=3$ it will be necessary to further introduce a choice of shadow tree $\mathcal{T}_\pi$ in the shadow graph $\Gamma_\pi$. It is then possible to uniquely define, via this tree $\mathcal{T}_\pi$, the parallel transport of the individual simplicial fluxes to the common frame $l_1(0)$.

In order to fix the notations, let us denote the path between the common frame $l_1(0)$ and the reference frame $l_i(0)$ of a flux $X_{l_i}$ (for $i=1,\ldots,|\pi|$) by $\gamma(l_1(0),l_i(0))$, and the corresponding holonomy by $g_{l_1(0)l_i(0)}$. This path always goes along $\Gamma_\pi$ and is uniquely defined in $d=2$, while in $d=3$ it requires a choice of tree $\mathcal{T}_\pi$ in $\Gamma_\pi$. Let us now discuss the precise definition of the integrated fluxes. Since, in light of the above discussion, this definition does depend slightly on the dimension $d$, we study the two cases of interest separately.

\subsubsection{Integrated fluxes in $d=2$ spatial dimensions}

\noindent In the case $d=2$, the links $l$ of the graph $\Gamma$ are dual to the 1-simplices of $\Delta$ that we call edges $e$. Our convention is such that the pairs $(l,e)$ are positively oriented, in the sense that $l$ is pointing to the right if $e$ is pointing upwards (as can be seen on figure \ref{fig:def-path-2d}).

In the previous subsection we have introduced the rooted fluxes $\bX_l$ dual to links $l$ and transported to the root. We are now going to introduce integrated fluxes associated to co-paths $\pi$ in the 1-skeleton $\Delta_1$ and transported to the root. A co-path $\pi$ in $\Delta_1$, as defined in definition \ref{def:path}, consists of a collection of adjacent edges $e_1,\ldots,e_{|\pi|}$, where $|\pi|$ is the number of these edges. The individual orientation of these edges can always be adjusted in such a way that they all have the same orientation, which defines the global orientation of the co-path $\pi$, and ensures that the beginning $\pi(0)$ of the co-path coincides with the beginning of the first edge. By virtue of definition \eqref{simplicial flux}, each simplicial flux $X_l$ is defined in the reference frame of the triangle dual to the node $l(0)$. Therefore, in order to sum each of the fluxes $X_l$ associated with the edges of the co-path $\pi$, these have to be transported to a common frame, which we choose to be the source $l_1(0)$ of the link $l_1$ dual to the first edge $e_1$ of $\pi$ (if the edge is pointing upwards, this is the triangle on its left). We therefore need to introduce, for each simplicial flux $X_l$, a path in $\Gamma$ going from $l_1(0)$ to $l(0)$. This path can be defined canonically by going along the shadow graph $\Gamma_\pi$ of the co-path $\pi$. As introduced in definition \ref{def:shadow graph}, this shadow graph connects all the nodes $l_1(0),\ldots,l_{|\pi|}(0)$ while staying as close as possible to the edges $e$ of $\pi$, and goes through the triangles to the left of $\pi$ (as seen when the edges are pointing upwards). We can then define the integrated fluxes
\be\label{2d integrated flux}
\bX_\pi\coloneqq g^{-1}_{rl_1(0)}\left(\sum_{i=1}^{|\pi|}g^{-1}_{l_1(0)l_i(0)}X_{l_i}g_{l_1(0)l_i(0)}\right)g_{rl_1(0)}.
\ee
In this formula, $g_{rl_1(0)}$ is an holonomy going from the root node to the node $l_1(0)$ where all the fluxes are transported and summed, and $g_{l_1(0)l_i(0)}$ is the holonomy along the unique path in the shadow graph $\Gamma_\pi$ that goes from $l_1(0)$ to the reference frame $l_i(0)$ of each individual flux $X_{l_i}$ (for the first flux this is therefore the identity). An example is represented in figure \ref{fig:def-intflux-2d}. Notice that we label the integrated flux $\bX_\pi$ only by the co-path $\pi$, since the path in $\Gamma$ used for the parallel transport can be defined canonically from the knowledge of $\pi$ by using the shadow graph $\Gamma_\pi$. For the sake of notational simplicity, we also drop the explicit dependence of the integrated fluxes on the path used to parallel transport from $l_1(0)$ to the root.

\begin{center}
\begin{figure}[h]
\includegraphics[scale=0.7]{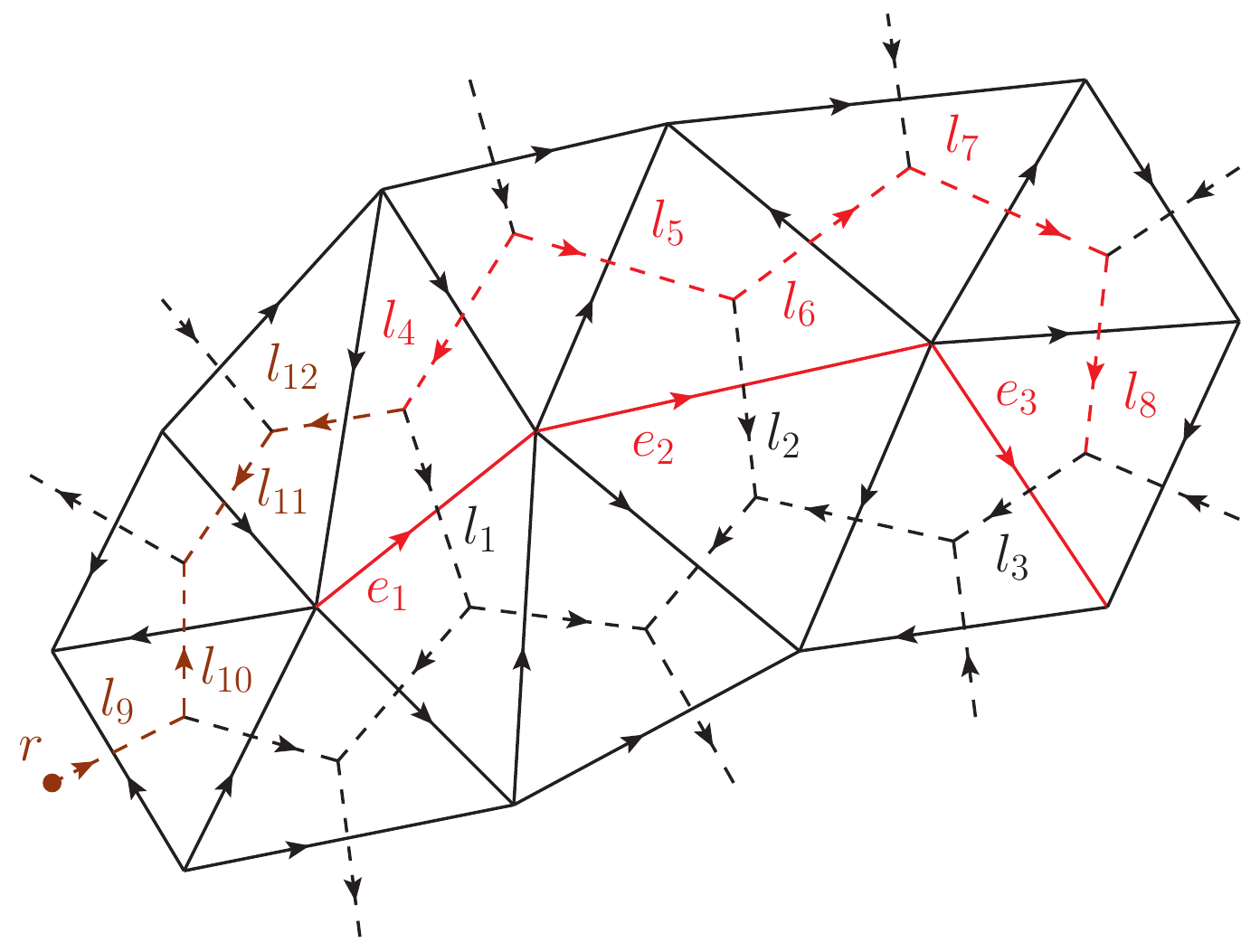}
\caption{Construction of a 2-dimensional integrated simplicial flux $\bX_\pi$ according to definition \eqref{2d integrated flux}. The holonomy $g_{rl_1(0)}=h_{12}^{-1}h_{11}^{-1}h_{10}h_9$ goes from the root $r$ to the node $l_1(0)$ along the tree (of portion of which is represented in dashed brown). The holonomy $g_{l_1(0)l_i(0)}$ goes from the node $l_1(0)$ to the node $l_i(0)$ along the shadow graph $\Gamma_\pi$ (dashed red). For example, we have $g_{l_1(0)l_2(0)}=h_5h_4^{-1}$, and $g_{l_1(0)l_3(0)}=h_8h_7h_6h_5h_4^{-1}$.}
\label{fig:def-intflux-2d}
\end{figure}
\end{center}

These integrated fluxes can be composed in a natural manner that involves only closed holonomies and integrated fluxes. To see this, consider two consecutive co-paths $\pi_1$ and $\pi_2$, i.e. such that the end vertex of $\pi_1$ is the first vertex of $\pi_2$, and denote their respective edges by $e^1_1,\ldots,e^1_{|\pi_1|}$ and $e^2_1,\ldots,e^2_{|\pi_2|}$. Let $\gamma_i\coloneqq\gamma(l^i_1(0),l^i_{|\pi_i|}(0))$ be the path in $\Gamma_{\pi_i}$ going from the triangle on the left of the first edge of $\pi_i$ to the one on the left of the final edge of $\pi_i$. Furthermore, we denote by $\gamma_{12}\coloneqq\gamma(l^1_{|\pi_1|}(0),l^2_1(0))$ the path connecting $l^1_{|\pi_1|}(0)$ to $l^2_1(0)$, i.e. the path going from the triangle on the left of the final edge of $\pi_1$ to the triangle on the left of the first edge of $\pi_2$. This path is defined as before as being as close as possible and to the left of the composed co-path $\pi_2\circ\pi_1$. We can then define a loop associated to $\pi_1$ and $\pi_2$ as
\be\label{2d composition loop}
\lambda_{12}\coloneqq\gamma(r,l^2_1(0))^{-1}\circ\gamma_{12}\circ\gamma_1\circ\gamma(r,l^1_1(0)),
\ee
where $\gamma(r,l^i_1(0))$ denotes the unique path in the tree going from the root to the starting node $l^i_1(0)$. With this data, we can finally define the composition of two fluxes $\bX_{\pi_1}$ and $\bX_{\pi_2}$ as
\be\label{compi2d}
\bX_{\pi_2\circ\pi_1}\coloneqq\bX_{\pi_2}\circ\bX_{\pi_1}=\bX_{\pi_1}+g^{-1}_{\lambda_{12}}\bX_{\pi_2}g_{\lambda_{12}},
\ee
where
\be
g_{\lambda_{12}}=g^{-1}_{rl^2_1(0)}g_{\gamma_{12}}g_{\gamma_1}g_{rl^1_1(0)}=g^{-1}_{rl^2_1(0)}g_{l^1_{|\pi_1|}(0)l^2_1(0)}g_{l^1_1(0)l^1_{|\pi_1|}(0)}g_{rl^1_1(0)}
\ee
is the holonomy along the loop $\lambda_{12}$. An example of this construction is given in figure \ref{fig:def-composition-2d}. This composition rule corresponds to (one component of) a semi-direct product structure, with the group acting on its Lie algebra via the adjoint action.

\begin{center}
\begin{figure}[h]
\includegraphics[scale=0.7]{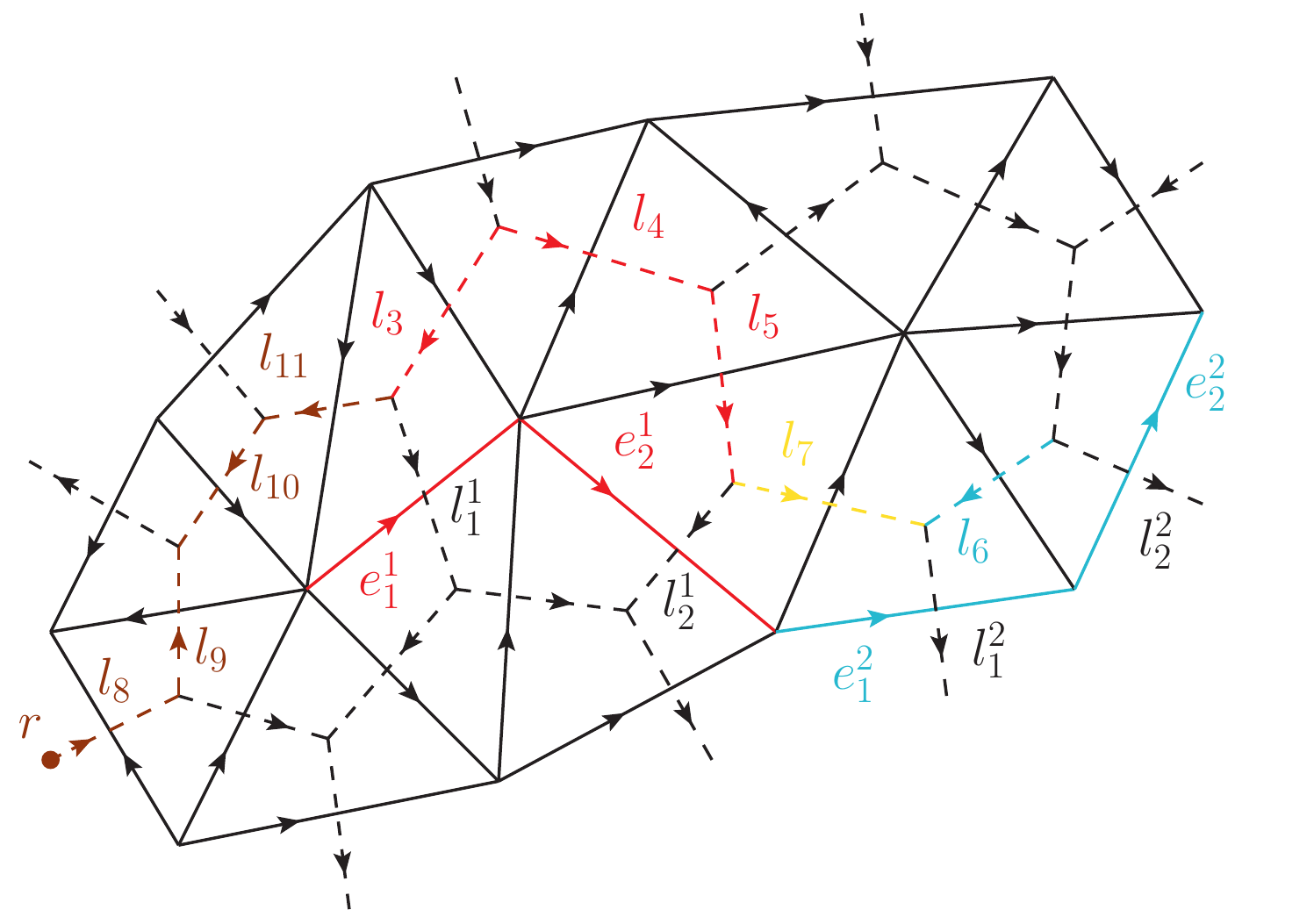}
\caption{Composition of the two integrated fluxes $\bX_{\pi_1}$ and $\bX_{\pi_2}$ associated to the co-paths $\pi_1=e^1_2\circ e^1_1$ (solid red) and $\pi_2=e^2_2\circ e^2_1$ (solid blue). The integrated flux $\bX_{\pi_1}$ is given by $\bX_{\pi_1}=g^{-1}_{rl^1_1(0)}\left(X_{l^1_1}+g^{-1}_{l^1_1(0)l^1_2(0)}X_{l^1_2}g_{l^1_1(0)l^1_2(0)}\right)g_{rl^1_1(0)}$, where $g_{l^1_1(0)l^1_2(0)}=h_5h_4h_3^{-1}$ and $g_{rl^1_1(0)}=h_{11}^{-1}h_{10}^{-1}h_9h_8$. The integrated flux $\bX_{\pi_2}$ is given by $\bX_{\pi_2}=g^{-1}_{rl^2_1(0)}\left(X_{l^2_1}+g^{-1}_{l^2_1(0)l^2_2(0)}X_{l^2_2}g_{l^2_1(0)l^2_2(0)}\right)g_{rl^2_1(0)}$, where $g_{l^2_1(0)l^2_2(0)}=h_6^{-1}$ and $g_{rl^2_1(0)}$ is the holonomy along the tree from the root node to the node $l^2_1(0)$ (and is not represented for the sake of clarity). To compose the two fluxes, one has to take $\bX_{\pi_2}$, undo its parallel transport to the root using $g_{rl^2_1(0)}$, transport it to the frame $l^1_2(0)$ using the holonomy $h_{7}$ along the path $\gamma_{12}$ connecting $l^1_2(0)$ to $l^2_1(0)$, transport it to the frame $l^1_1(0)$ using the holonomy $g_{l^1_1(0)l^1_2(0)}$, transport it to the root using the holonomy $g_{rl^1_1(0)}$, and finally add it with $\bX_{\pi_1}$.}
\label{fig:def-composition-2d}
\end{figure}
\end{center}

\subsubsection{Integrated fluxes in $d=3$ spatial dimensions}
\label{sf3d}

\noindent The integrated fluxes in $d=3$ dimensions are associated to 2-dimensional co-paths $\pi$ in the triangulation. Such a surface path consists of a collection of adjacent triangles as defined in \ref{def:path}. Our convention is such that a triangle and its dual link have a direct orientation, in the sense that if the triangle has a counter-clockwise (clockwise) orientation then the source of its dual link is under (above) it. For simplicity, we will consider only edge-connected surface paths. This means that if the total surface does not consist only of one triangle, then every triangle of the surface shares at least one edge with some other triangle of this surface. Also, we require the triangles to be in a consistent orientation, so that we can assign a global orientation to the surface.

As mentioned above, the set of triangles of a 2-dimensional co-path $\pi$ being unordered, the notion of first triangle $l^*_1=t_1$ exists only because we choose a reference frame $l_1(0)$, and there is no notion of ``last triangle'' of $\pi$. The definition of the integrated flux associated to a surface co-path $\pi$ requires, like in the previous subsection, a path in $\Gamma$ in order to bring all the individual fluxes in the same reference frame $l_1(0)$. However, we saw that at the difference with the case $d=2$ treated in the previous subsection, in the case $d=3$ this path cannot be defined canonically in the shadow graph $\Gamma_\pi$. Therefore, the definition of the integrated flux associated with a surface path $\pi$ requires an additional structure. This additional structure is given by a spanning tree in the dual of the surface co-path $\pi$ seen as a 2-dimensional triangulation, or equivalently by a shadow spanning tree $\mathcal{T}_\pi$ in the shadow graph $\Gamma_\pi$ (an example is given on figure \ref{fig:def-intflux-3d}). Given such a spanning tree in $\Gamma_\pi$, there is then a unique path going from the source node $l_1(0)$ (dual to the tetrahedron below the first triangle of $\pi$), to the source node $l_i(0)$ of the link dual to the triangle of interest. This path is again as close as possible to the surface path $\pi$, and going along the shadow tree $\mathcal{T}_\pi$ through the tetrahedra below $\pi$. The integrated fluxes are then defined as
\be
\bX_\pi\coloneqq g^{-1}_{rl_1(0)}\left(\sum_{i=1}^{|\pi|}g^{-1}_{l_1(0)l_i(0)}X_{l_i}g_{l_1(0)l_i(0)}\right)g_{rl_1(0)},
\ee
where $g_{rl_1(0)}$ is the holonomy from the root of $\Gamma$ to the node $l_1(0)$ dual to the tetrahedron under the first triangle of $\pi$.

\begin{center}
\begin{figure}[h]
\includegraphics[scale=0.7]{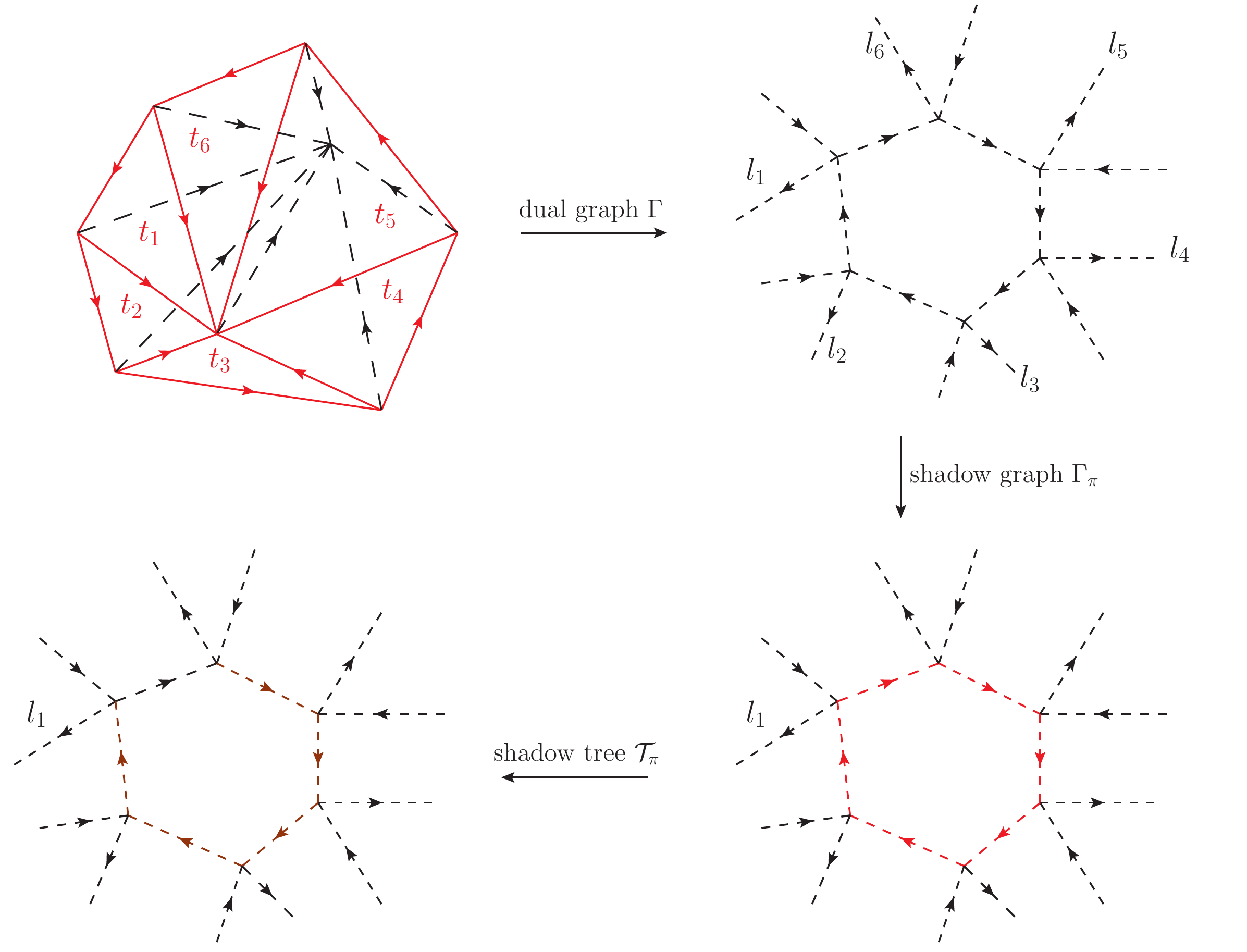}
\caption{A piece of 3-dimensional triangulation consisting of six tetrahedra glued together, and a choice of surface path $\pi$ consisting of six triangles (red). Because of the orientation of the edges, these six triangles all have a counter-clockwise orientation. The dual graph to this piece of triangulation consists of six 4-valent nodes, and the links dual to the triangles of $\pi$ are labelled by $l_1,\ldots,l_6$. The tetrahedron dual to the source $l_1(0)$ has been chosen as the reference frame in which the fluxes have to be transported and added. The shadow graph (dashed red) connects the source nodes of the links dual to the triangles of $\pi$, and a choice of shadow tree (dashed brown) enables to define uniquely the parallel transport between the frame of each flux and the frame $l_1(0)$. Notice that there are five other possible choices for the shadow tree.}
\label{fig:def-intflux-3d}
\end{figure}
\end{center}

These integrated fluxes can also be composed in a natural manner. We are going to describe this composition in the case of two surface paths $\pi_1$ and $\pi_2$ that have a disjoint set of triangles and at least one edge in common. More general situations are possible, for instance the gluing of a triangle onto itself but with an opposite orientation, but we will however not consider them here. We assume that the two surfaces $\pi_1$ and $\pi_2$ are such that the composed surface obtained by the gluing has a consistent orientation, and that the gluing is done along a common edge $e_{12}\in\pi_1\cap\pi_2$. For the sake of definiteness, the triangle of $\pi_i$ that contains the edge $e_{12}$ can then be called the gluing triangle $t^i_{|\pi_i|}$ of $\pi_i$. Therefore, in addition to the notion of first triangle $t^i_1$ for each path $\pi_i$, which is provided by the choice of reference frame $l^i_1(0)$, the existence of a gluing edge $e_{12}$ enables us to define a gluing triangle for each path $\pi_i$. Now, let $\gamma_i\coloneqq\gamma(l^i_1(0),l^i_{|\pi_i|}(0))$ be the path in the tree $\mathcal{T}_{\pi_i}$ of $\Gamma_{\pi_i}$ going from the tetrahedron under the first triangle of $\pi_i$ to the one under the gluing triangle of $\pi_i$. Furthermore, we denote by $\gamma_{12}$ the path connecting $l^1_{|\pi_1|}(0)$ to $l^2_{|\pi_2|}(0)$, i.e. the path going from the tetrahedron under the gluing triangle of $\pi_1$ to the tetrahedron under the gluing triangle of $\pi_2$ (notice the important difference with the case $d=2$). This path is defined as before as being as close as possible and under the composed surface $\pi_2\circ|_{e_{12}}\pi_1$, and its role is to glue the two shadow trees $\mathcal{T}_{\pi_1}$ and $\mathcal{T}_{\pi_2}$ together. We can then define the loop\footnote{Notice the presence of an additional path $\gamma_2^{-1}$ in this expression, as opposed to \eqref{2d composition loop}. This is due to the fact that the integrated flux $\bX_{\pi_2}$ can a priori be defined in a frame different from the triangle containing the edge $e_{12}$ used for the gluing. This means that the first triangle and the gluing triangle of $\pi_2$ can a priori be different. If they coincide we simply have $g_{\gamma_{12}}=\openone$.}
\be
\lambda_{12}\coloneqq\gamma(r,l^2_1(0))^{-1}\circ\gamma_2^{-1}\circ\gamma_{12}\circ\gamma_1\circ\gamma(r,l^1_1(0)),
\ee
where $\gamma(r,l^i_1(0))$ denotes a path along the dual graph $\Gamma$ going from the root to the reference node $l^i_1(0)$. With these data, we can finally define the composition of two fluxes $\bX_{\pi_1}$ and $\bX_{\pi_2}$ as
\be\label{compi3d}
\bX_{\pi_2\circ|_{e_{12}}\pi_1}\coloneqq\bX_{\pi_2}\circ|_{e_{12}}\bX_{\pi_1}=\bX_{\pi_1}+g^{-1}_{\lambda_{12}}\bX_{\pi_2}g_{\lambda_{12}}.
\ee
This composition corresponds to (one component of) a semi-direct product structure, with the group acting on its Lie algebra via the adjoint action. The new flux has a shadow graph which is obtained by connecting the shadow graphs $\Gamma_{\pi_1}$ and $\Gamma_{\pi_2}$ via $\gamma_{12}$, and the trees $\mathcal{T}_{\pi_1}$ and $\mathcal{T}_{\pi_2}$ are connected accordingly. The transport to the root of the composed flux is determined by the transport for the flux $\bX_{\pi_1}$.

\section[blabla]{Geometric interpretation}
\label{sec:geom}

\noindent In this section, we are going to present the geometric interpretation of the discrete phase spaces and the integrated flux obervables. In particular, we will see that the integrated flux observables can be used to define macroscopic variables which will be important for the coarse graining of spin networks \cite{eteradeformed,howmany} and spin foams \cite{eckert,holonomy,sffinite,decorated}. We will explain that the integrated (macroscopic) fluxes do not necessarily need to satisfy the Gauss constraints. In the context of coarse graining of spin networks, this was first observed in \cite{eteradeformed}. Here, we give a simple geometric interpretation of this phenomenon as ``curvature inducing torsion under coarse graining''. This effect is specific to non-Abelian groups.

\subsection{Geometric interpretation in $\boldsymbol{d=2}$ spatial dimensions}

\noindent In $d=2$ spatial dimensions, the phase space corresponds to the (kinematical) phase space of $(2+1)$-dimensional gravity. Since the physical solutions are locally flat, the holonomies around vertices should vanish. However, in the presence of point particles\footnote{Here the particles should be non-spinning since the Gauss constraint is assumed to hold. However, a generalization to defects that violate the Gauss constraints might be possible.} we will have curvature defects at the position of the particles. In fact, the (inductive limit) Hilbert space based on the BF vacuum constructed in \cite{paper1} can be understood as allowing physical solutions with an arbitrary number of particle insertions, and therefore allows for particles to meet or separate.

Note that although the Gauss-reduced phase space is still based on (a subset of) edges, all the gauge-invariant information is actually attached to the vertices. Therefore, the curvature defects are associated to the vertices of the triangulation. Since excitations in the new representation \cite{paper1} are described by the curvature defects, it is therefore appropriate to embed the vertices of the triangulation\footnote{For the AL representation one embeds the dual edges since these carry the (flux) excitations.}. We have defined the edges of the triangulation as arising as geodesics of an auxiliary (unphysical) metric. In fact, changing the embedding prescription of the edges will not change the physical content of the configurations.

This property, stating that the relevant geometric information is attached to the vertices, holds also with respect to the fluxes. There is however a slight caveat. Flux observables depend only on the homotopy class of the underlying co-path $\pi$ and the associated parallel transport curves. Here, the notion of homotopy equivalence treats every vertex of the triangulation as a puncture in the 2-dimensional manifold. This means that homotopy-equivalent curves (here those along which the parallel transport is defined) are the ones that can be deformed into each other without crossing any vertex. Heuristically, the deformation of the parallel transport across a vertex changes this parallel transport by the holonomy around the vertex. This prevents the Gauss constraints from being applicable, as we will show explicitly below. The equivalence classes can be extended in a phase space dependent way, which would in turn allow to cross vertices which are not carrying curvature.

We are going to explain these aspects in the following in more detail. The integrated flux observables are generalizations of Dirac observables, which are based on closed co-paths, and built from fluxes for $(2+1)$-dimensional gravity \cite{FL1}. In the context of $(2+1)$ gravity, i.e. for configurations with vanishing curvature, the integrated flux observables based on an open co-path $\pi$ correspond to the three-vector (in the frame of the root) pointing from the source vertex of the co-path $\pi$ to the target vertex of $\pi$. Naively, one might therefore expect that integrated fluxes based on closed co-paths should evaluate to zero due to the Gauss constraint. This makes it also clear that one should not associate the norm of the flux observables with the squared length of the paths. In fact, this association can only be done for the shortest possible paths, i.e. those consisting of only one edge.

Let us first discuss an example in which the co-path $\pi$ can be deformed without changing the value of the associated flux. This example is displayed in figure \ref{Fex1}. Here, we can consider the integrated flux $\bX_\pi$ associated to the co-path $\pi=e_7\circ e_1\circ e_8$, and whose expression is (we assume that the root is at the node $l_8(0)$ for notational simplicity)
\be\label{piflux}
\bX_\pi=X_{l_8}+g_{l_8(0)l_1(0)}^{-1}X_{l_1}g_{l_8(0)l_1(0)}+g_{l_8(0)l_7(0)}^{-1}X_{l_7}g_{l_8(0)l_7(0)}.
\ee
Now, because the edges $e_1$, $e_2$ and $e_3$ form a triangle, the closure constraint holds and takes the form
\be
X_{l^{-1}_1} +X_{l_2}+X_{l_3}=0,
\ee
which in turn implies that
\be\label{gau1}
X_{l_1}=h_1^{-1}X_{l_2}h_1+h_1^{-1}X_{l_3}h_1.
\ee
Using \eqref{gau1} in \eqref{piflux}, one can immediately see that $\bX_\pi=\bX_{\pi'}$, where $\pi'=e_7\circ e_3\circ e_2\circ e_8$ is the deformed path. The reason for which this equality holds is that the parallel transport necessary in order to reach $X_{l_2}$ and $X_{l_3}$ crosses (twice) the edge $e_1$, but not a vertex.

\begin{center}
\begin{figure}[h]
\includegraphics[scale=0.7]{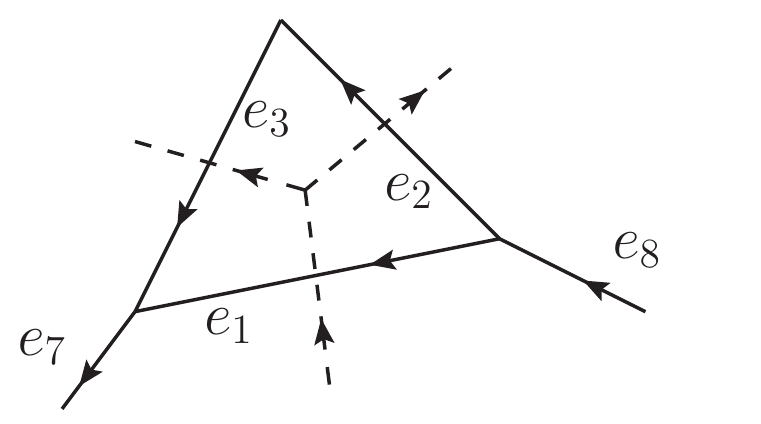}
\caption{Example of a triangle formed by the edges $e_1$, $e_2$, and $e_3$, and for which the path $\pi=e_7\circ e_1\circ e_8$ can be deformed to $\pi'=e_7\circ e_3\circ e_2\circ e_8$ without changing the associated integrated flux observable.}
\label{Fex1}
\end{figure}
\end{center}

This situation changes drastically if we invert the orientation of all the edges (and therefore of the co-path). The parallel transport (according to our conventions) that we have to consider in this case is indicated in figure \ref{Fex2}. In this example, we can consider the integrated flux $\bX_\pi$ associated to the co-path $\pi=e_8\circ e_1\circ e_7$, and the integrated flux $\bX_{\pi'}$ associated to the deformed co-path $\pi'=e_8\circ e_2\circ e_3\circ e_7$. It is easy to see that $\bX_{\pi'}$ can be obtained from $\bX_\pi$ by replacing $h_3^{-1}X_{l_1}h_3$ by $h_4^{-1}h^{-1}_5h_6^{-1}X_{l_2}h_6h_5h_4+X_{l_3}$. However, the Gauss constraint now implies that
\be\label{gau2}
h_3^{-1}X_{l_1}h_3=h_3^{-1}h_2X_{l_2}h_2^{-1}h_3+X_{l_3}=g^{-1}_vh_4^{-1}h^{-1}_5h_6^{-1}X_{l_2}h_6h_5h_4g_v+X_{l_3},
\ee
where $g_v=h_4^{-1}h_5^{-1}h_6^{-1}h_2^{-1}h_3$ is the holonomy around the vertex. We therefore see that, because of the presence of an additional adjoint action of $g_v$, the Gauss constraint cannot be used to rewrite $\bX_\pi$ as $\bX_{\pi'}$. Thus, in general we will have that $\bX_\pi\neq\bX_{\pi'}$ for this example. The reason for this is that the deformation of the parallel transport needed in order to go from $\bX_\pi$ to $\bX_{\pi'}$ crosses the vertex $v$. One could relax the definition of the parallel transport for the integrated fluxes in order for the Gauss constraint to be applicable, but this would delocalize the parallel transport from the co-path $\pi$, and just lead to more (dependent) flux observables, which just differ in their parallel transport.

\begin{center}
\begin{figure}[h]
\includegraphics[scale=0.7]{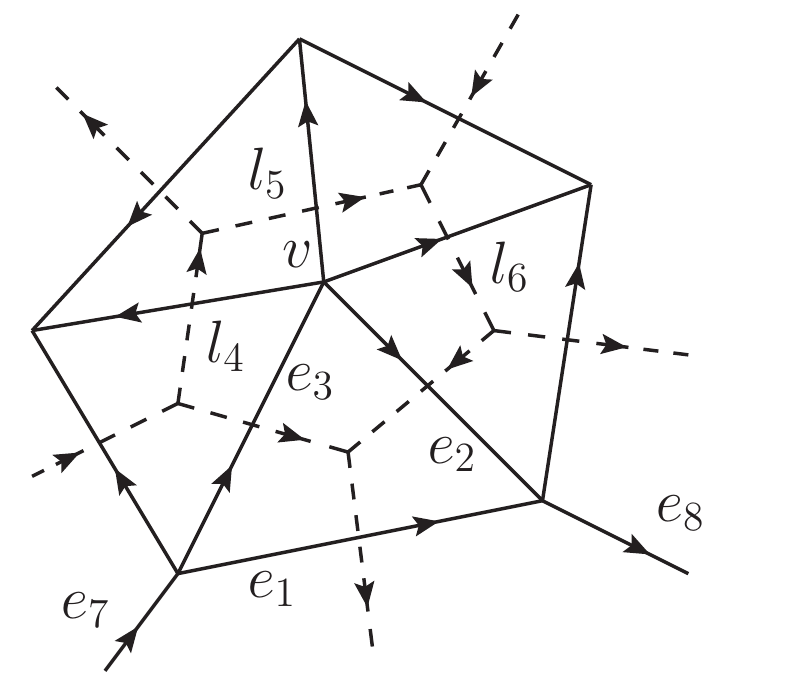}
\caption{Example of a triangle formed by the edges $e_1$, $e_2$, and $e_3$. The co-path $\pi=e_7\circ e_1\circ e_8$ is deformed to the co-path $\pi'=e_7\circ e_3\circ e_2\circ e_8$. Thus the co-paths are deformed into each other by adding or subtracting the triangle. If the vertex $v$ carries curvature, the Gauss constraint associated to the triangle deforming the co-paths can however not be used in order to make the two associated integrated fluxes coincide.}
\label{Fex2}
\end{figure}
\end{center}

Also, let us consider in both examples above the closed co-paths given respectively by $\pi=e_3\circ e_2\circ e_1$ and $\pi=e_1\circ e_2\circ e_3$, which are obtained by inverting the orientation of the edge $e_1$. One finds that in figure \ref{Fex1}, where the parallel transport for  adding up the fluxes is trivial, the closed co-path observable vanishes. However, in figure \ref{Fex2}, the parallel transport goes around the triangle, which prevents once again the Gauss constraint from being applicable.

Interestingly, this leads to ``curvature-induced torsion" and  shows up in the coarse graining of spin networks \cite{eteradeformed} and spin foams \cite{asger}. It is also important in the context of quantum groups describing constant curvature geometries \cite{maite1,maite2,aldoetal}, as it shows that the Gauss constraints have to be deformed in a specific way in order to hold.

In order to illustrate the way in which curvature induces torsion, let us consider the example in figure \ref{Fex3}. There, we consider a closed co-path $\pi$ formed by the three edges of a triangle which is subdivided into three smaller triangles. One could now expect that the Gauss constraints for the three smaller triangles imply that $\bX_\pi$ is vanishing. Indeed, we expect $\bX_\pi$ to give the distance vector from the source vertex to the target vertex of $\pi$, which happen to coincide. However, as in the previous example, a parallel transport is involved in the precise definition of $\bX_\pi$, and instead of a vanishing flux one finds
\be
\bX_\pi=g_{rl_1(0)}^{-1}h_1^{-1}\left(h_4X_{l_4}h^{-1}_4-g^{-1}_vh_4X_{l_4}h_4^{-1}g_v\right)h_1g_{rl_1(0)},
\ee
where $g_v=h_4h_6h_5$ is the holonomy around the vertex subdividing the triangle.

\begin{center}
\begin{figure}[h]
\includegraphics[scale=0.7]{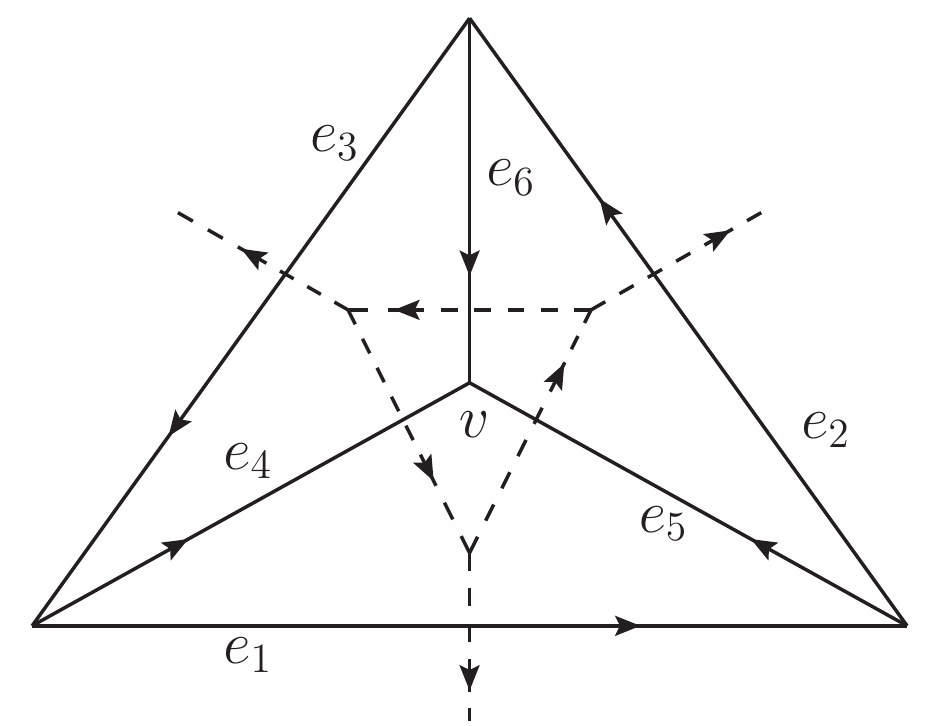}
\caption{Example of a triangle subdivided into three smaller triangles, and for which, if there is curvature around the vertex $v$, the integrated flux observable defined on $\pi=e_3\circ e_2\circ e_1$ is not vanishing.}
\label{Fex3}
\end{figure}
\end{center}

To summarize, in $d=2$ spatial dimensions the geometric quantities are associated to the vertices of the triangulations (we could indeed just keep the (embedded) vertices in our construction). These carry the (eventually) distributional curvature. A set of phase space point-separating variables is given by the holonomies around every vertex (parallel transported to the root) and a set of integrated fluxes with underlying co-path going from a vertex adjacent to the root (triangle) to all the other vertices.

Although the fluxes are a priori associated to co-paths made out of triangulation edges, it is rather the source and target vertices of these co-paths, and the way in which the associated parallel transport is defined with respect to the other vertices, that determines the flux observable. Without curvature defects, the flux observable would indeed give simply the distance vector (with $G=\SU(2)$) from the source to the target vertex of $\pi$ in the embedding $(2+1)$-dimensional flat geometry. In this case the flux observables would be independent of the chosen co-path and only depend on the source and target vertices.

\subsection{Geometric interpretation in $\boldsymbol{d=3}$ spatial dimensions}

\noindent The case $d=3$ is quite analogous to the case $d=2$ treated above, at the difference that we need to replace vertices in $d=2$ with edges in $d=3$. We have however defined an embedding based on vertices (edges are again determined as geodesics with respect to an auxiliary metric on the underlying manifold), since eventually we wish to understand diffeomorphisms as vertex displacements \cite{williams,dittrichryan,bahrdittrich09a,bonzomdittrich}. The translational symmetry of BF theory in $(2+1)$ dimensions can indeed be interpreted (on-shell) in this way \cite{louaprediff}. However, the translational symmetry of BF theory in $(3+1)$ dimensions leads to edge displacements instead of vertex displacements. In order to obtain gravity from the topological BF theory, one would therefore like to break this symmetry via the imposition of simplicity constraints \cite{zapata,dittrichryan}.

This does not exclude the possibility of changing our initial definitions, and of working with triangulations (or other polyhedral complexes) where the edges are embedded. An interesting question is whether and how these different choices lead to different continuum limits.

In $d=3$ spatial dimensions, curvature defects are associated to the edges of the triangulation. Without curvature, the integrated flux observables would only depend on the boundary of the surfaces (made out of edges) associated to the fluxes. This does however change if the choice of parallel transport matters. In this case a parallel transport crossing an edge of the triangulation can lead to curvature terms that prevent the application of the Gauss constraints.

For the $\SU(2)$ case, the fluxes can be interpreted as normals to the triangles (weighted by the triangle areas). Considering a tetrahedron, one can reconstruct uniquely out of these normals the geometry of this tetrahedron (the Gauss constraints hold since we are working on the gauge-invariant phase space) \cite{minkowski,polyhedra}.

However, it is not guaranteed that neighboring tetrahedra can be glued consistently, since the shapes of the triangles reconstructed from the normals do not need to match. This feature has been identified and discussed in \cite{dittrichryan} and led to the term ``twisted geometries'' \cite{twisted,polyhedra}. Gluing (or shape-matching) constraints \cite{areaangle} can be imposed, and lead to a phase space describing proper (Regge) geometries. It has been argued that these constraints arise as secondary simplicity constraints \cite{dittrichryan,simplicity2,anza}, and they might therefore be required in order to realize diffeomorphisms as vertex displacements. However, these additional constraints are second class, which makes their imposition at the quantum level difficult, and so far it is not known whether this imposition would still allow to use the powerful techniques associated to the $\SU(2)$ phase space.

As in $d=2$ spatial dimensions, we will also have the effect of curvature-induced torsion. We can define the integrated flux associated to the surface of a tetrahedron, and then subdivide this tetrahedron into four smaller tetrahedra. Then the curvature associated to the edges in the subdivided tetrahedron can again lead to a non-vanishing flux observable associated to the boundary of this subdivided tetrahedron.

\section{Connecting the discrete phase spaces}
\label{connecting}

\noindent So far we have introduced the (gauge-invariant) phase space $\mathcal{M}_\Delta$ associated to a given triangulation $\Delta$, and in section \ref{sec:setup} a partial order on the set of triangulations. One can therefore define in principle either the inductive or projective limit of structures (i.e. of phase spaces) labelled by the elements of this partial order. For this construction, the definition of (consistent) embedding or projection maps respectively is essential. These maps ``stitch'' the discrete Hilbert spaces or phase spaces together, and are required to satisfy certain consistency conditions. The inductive or projective limit then defines the corresponding continuum structure. We review these constructions in the appendix \ref{appendix:limits}. We are going to use an inductive limit for the construction of the continuum Hilbert space, as is also done for the AL representation \cite{lqg3}.

For phase spaces, it is customary to use a projective limit, as was done for example for a phase space construction corresponding to (a variant of) the AL representation by Thiemann in \cite{qsd7} (see also \cite{lanery1}). However, the construction of \cite{qsd7} relies on a family of regular discretizations (cubical lattices). Indeed, the projective maps defined in \cite{qsd7} would not be consistent if the partially ordered set was to include more general graphs, and if the refinement operations were to allow for an inversion of the edges. This foreshadows a difficulty that we will also meet with the BF representation. In fact, since the definition of the simplicial fluxes involves parallel transports with the gauge connection and in $d=3$ a choice of surface tree, the construction of consistent projection maps defined on the full phase spaces will turn out to be problematic (at least if we consider general triangulations). Another difficulty is that, as discussed in section \ref{sec:geom}, curvature induces torsion for the coarser fluxes. As we will see, an inductive limit cannot be defined either, as it is equally difficult to construct consistent embedding maps (as proper maps).

One could attempt to alleviate this situation by changing the set of labels, for instance turning to flagged triangulations, and adjusting the refinement operations. However, we do want to keep the setup as simple as possible, and also to stay as near as possible to the spirit of the quantum theory.

Interestingly, these difficulties at the phase space level do not appear for the construction of the continuum Hilbert space via an inductive limit. The quantum embedding maps will be well defined and consistent. The reason for this is that the quantum maps need in some sense less information than the classical maps. To be more precise, the quantum embeddings $\iota_{\Delta,\Delta'}:\mathcal{H}_\Delta\rightarrow\mathcal{H}_{\Delta'}$ map states on a coarser triangulation to states in a finer triangulation and since this finer triangulation supports in general more degrees of freedom than the coarser triangulation, we have to specify the quantum state for these additional degrees of freedom. In the case of the BF-based representation discussed here, this quantum state is defined by demanding that the curvature be vanishing, i.e. that the holonomies be trivial for all the additional (finer) cycles in the graph dual to $\Delta'$. In the same way, one can in fact define embedding maps for the AL representation, and require that the flux variables be vanishing for all the new edges in the finer graph.

Therefore, one can characterize in both cases the image of the embedding maps by constraint operators. As we will comment on later in more detail, this was also used in \cite{FGZ} in order to characterize the classical phase spaces underlying LQG. The constraints encode the vanishing of the curvature for the finer holonomies for the BF representation, and the vanishing of fluxes for the additional edges for the AL representation. Since these constraints are first class, their classical equivalents generate (gauge) transformations along the constraint hypersurfaces. Therefore, we see that the classical equivalents of the embedding maps are actually not proper maps. Instead, they map phase space points in the coarser phase space $\mathcal{M}_\Delta$ to gauge orbits in the constraint hypersurface of the finer phase space $\mathcal{M}_{\Delta'}$. This is what prevents us from using these improper maps to define an inductive limit for the phase spaces. On the other hand, one can still attempt to construct projection maps as inverses of these embeddings. However, these projections can only be defined on the constraint hypersurface (which is here given by the phase space points where the curvature of finer cycles is vanishing) in a consistent manner. The reason is that due to the vanishing curvature the choice of paths for the parallel transport of fluxes or for the holonomies does not matter.

In summary, by remaining as close as possible to the quantum theory, we see that we face the problem of having improper maps for the embeddings, and restricted projections defined only on constraint hypersurfaces. Later on, we will therefore propose a modified projective limit, taking into account that the projections can only be defined on constraint hypersurfaces.

This situation is however natural to expect if one attempts to define refining maps that preserve the symplectic structure \cite{timeevol}. In fact, post- or pre-constraints for embeddings and projections respectively always appear if one considers the canonical (time) evolution between phase spaces of different dimensions, which is defined via a generating function. A framework to deal with such an evolution between phase spaces of different dimensions has been developed in \cite{hoehn1,hoehn2}, which defines a canonical time evolution in simplicial discretizations where the triangulation in general changes from one (discrete) time step to the next. We refer the reader to \cite{qhoehn1,qhoehn2} for an implementation of such a discrete time evolution into the quantum theory, where the post- and pre-constraints also play a central role.

The post- or pre-constraints necessarily arise in order to allow for a canonical transformation between phase spaces of different dimensions. If one uses a generating function for this transformation, both set of constraints $\{\mathcal{C}^I\}_\text{pre}$ and $\{\mathcal{C}^J\}_\text{post}$ are first class (among themselves). Therefore, one has to consider instead the reduced phase spaces $\mathcal{M}_\text{pre}\sslash\{\mathcal{C}^I\}_\text{pre}$ and $\mathcal{M}_\text{post}\sslash\{\mathcal{C}^J\}_\text{post}$, which turn out to be of equal dimension. This explains how a transformation between phase spaces of a priori different dimensions can be canonical. In fact, we could also define the BF refining maps via generating functions. These generating functions should be given by the BF action associated to the building blocks that are glued to the triangulation in the various Alexander moves. This will be clearer in the quantum theory, where the quantum embedding maps will be given by a quantization of such maps. We refrain from using these maps explicitly since the rigorous construction of the associated discrete classical action (at the gauge-variant level) is much more complicated than the action of the maps which we will describe below (or the actual quantization of these maps).

For the BF vacuum, the post-constraints which appear with a refining move impose the flatness of the part of the connection that is being added to the connection on the coarse-grained triangulation. The values of the finer fluxes are not completely fixed, and instead we have ``gauge'' orbits determining that the composition of finer fluxes gives the coarse flux in $\mathcal{M}_\Delta$. Because all finer connection data are flat, the specification of how we exactly compose the finer fluxes, i.e. the choice of surface tree, does not actually matter.

On the other hand, if we consider a map from a finer to a coarser triangulation, i.e. a projection, we will have pre-constraints. These pre-constraints do again impose the flatness of the part of the connection which is eliminated when going from the finer to the coarser triangulation. The projection map can be understood as the inverse of the embedding map. Therefore the pre-constraints are first class, and each gauge orbit generated by these constraints is mapped to one point in the coarser phase space. The projections are proper maps (as opposed to the embeddings), which however can a priori only be defined on the pre-constraint hypersurface.

In the rest of this section we are going to describe the embeddings and projections in more detail and also compare to the analogous entities for the AL representation. We will propose a modification of the projective limit construction, that takes into account that the projections can only be defined on constraint hypersurfaces of a given phase space. In fact if a phase space point is off the constraint hypersurface of a given projection, it means that this phase space point describes curvature that cannot be accommodated in the phase space one is projecting to. Thus we will not demand  such a projection in our modified projective limit construction.

We will also show that a coarser phase space arises from a finer phase space by symplectic reduction with respect to the constraints. Such a symplectic reduction has been used first in \cite{FGZ} to construct the discrete phase spaces out of the continuum phase space. In our setting, both the initial and final phase spaces will however be discrete. Also, we will rather prove that the projection maps define a symplectic reduction, as opposed to using the symplectic reduction to define coarse phase spaces from (infinitely) finer ones.

Finally, in this section we will introduce a continuum observable algebra, which will basically correspond to the set of phase space functions on the modified projective limit. Such phase space functions will be represented by a (consistent) family $\{\mathcal{O}_\Delta\}$ of observables. This actually captures the fact that, in the quantum theory, we mostly deal with the Hilbert spaces associated to a given discretization, and hence also with the operators associated to this discretization. On a given triangulation, we need only a certain amount of information about the observables, i.e. the surface tree up to a certain fineness scale given by the triangulation. Moreover, since coarser phase spaces arise by symplectic reduction from finer phase spaces, one can conclude that the (spectral) properties of an observable $\mathcal{O}_{\Delta}$ on a given discrete phase space coincide with the (spectral) properties of the observable $\mathcal{O}_{\Delta'}$ from the same family, if considered on the subspace of states resulting from an embedding from the coarser triangulation $\Delta$.

\subsection{The embedding and projection maps}
\label{embpro}

\noindent Here we describe in more detail the embedding and projection maps for the BF representation (we use the term ``maps'' in a generalized sense, since we allow for multiple images in the form of post-gauge orbits). As discussed above, this will basically require the imposition of flatness conditions for finer connection degrees of freedom, and a corresponding conjugated gauge orbit for the flux degrees of freedom.

\subsubsection{The embedding maps}

\noindent Let us start by considering the embedding maps $\mathcal{E}_{\Delta,\Delta'}$, where $\Delta$ is a triangulation coarser than $\Delta'$. The connection information at the level of the gauge-invariant phase space is encoded in holonomies associated to closed paths $\gamma$ starting and ending at the root $r$. For a given triangulation, we can choose the set of independent curves such that each curve is generic, i.e. does only cross $(d-1)$-dimensional simplices and no lower-dimensional ones.

Let us consider a given set of holonomies $\{g_\gamma\}_\gamma$ determining the connection part of a point $p$ in the coarser phase space $\mathcal{M}_\Delta$. This is mapped to a set of holonomies $\{g_{\gamma'}\}_{\gamma'}$ describing (eventually) an orbit of phase space points $\mathcal{E}_{\Delta,\Delta'}(p)$ in $\mathcal{M}_{\Delta'}$, with
\be\label{defh}
g_\gamma=g_{\textbf{P}(\gamma')}.
\ee
Here $\textbf{P}(\gamma')$ denotes a path in the dual to the coarser triangulation $\Delta$, and therefore $g_{\textbf{P}(\gamma')}$ is either given by one of the holonomies of the set $\{g_\gamma\}_\gamma$ or can be reconstructed from this set. $\textbf{P}(\gamma')$ can be understood as a choice of projection of the path $\gamma'\subset \Gamma'$ to a path $\gamma\subset\Gamma$ in the dual to the coarser triangulation. Here $\gamma'$ starts and ends at the root $r'$ of the refined triangulation $\Delta'$, whereas $\gamma$ starts and ends at the root $r$ in $\Delta$. The roots $r$ and $r'$ are related in the way described in section \ref{sec:setup}.

The map $\textbf{P}$ needs to satisfy certain conditions. Since the triangulation $\Delta'$ is a refinement of the triangulation $\Delta$, we can identify simplices $\sigma_\Delta$ of $\Delta$ with complexes of simplices $\cup\sigma_{\Delta'}$ (of the same dimension as $\sigma_\Delta$) in the finer triangulation. In other words, $\cup \sigma_{\Delta'}$ gets coarse-grained to $\sigma_\Delta$. We will denote this relationship by $\textbf{P}^{-1}(\sigma_\Delta)=\cup\sigma_{\Delta'}$. Therefore, the curve $\textbf{P}(\gamma')$ enters and leaves a $d$-dimensional simplex $\sigma^d_\Delta$ in the same order and through the same neighboring simplices $\tilde\sigma^d_\Delta$ as $\gamma'$ enters and leaves the complex $\textbf{P}^{-1}(\sigma^d_\Delta)$ with respect to neighboring  complexes $\textbf{P}^{-1}(\tilde \sigma^d_\Delta)$.

Thus we can partition the simplices of the finer triangulation into sets corresponding to the simplices of the coarser triangulation, and consider the independent cycles of the dual graph to the coarser triangulation. There will be equivalence classes of path $\gamma'$ that map to the same independent cycle of $\Gamma$, and all such elements $\gamma'$ in a given equivalence class will be assigned the same value for the holonomy. The definition (\ref{defh}) also prescribes the behavior of the global reference system of the root under refinement, and the choice of reference system at the root $r$ is copied over to the refined root $r'$.

Clearly, there will be cycles in $\Gamma'$ that are mapped by $\textbf{P}$ to trivial cycles in $\Gamma$, and therefore the embedding assigns the identity element to such cycles. These conditions give rise to the post-constraints mentioned above, which here take the form
\be\label{fconstraints}
\mathcal{C}^I_{\Delta,\Delta'}=g_I-\openone\stackrel{!}{=}0.
\ee
Here the label $I$ enumerates the additional independent cycles that are present in $\Gamma'$ and not in $\Gamma$, and the post-constraints impose that these cycles carry trivial holonomies. These constraints commute and lead to a ``gauge'' action on the fluxes.
 
Indeed, as mentioned in the previous section, the embedding $\mathcal{E}_{\Delta,\Delta'}$ will not describe unique values for the fluxes. Instead, we will have ``refining gauge orbits'' along the constraint hypersurface. Given a set of independent fluxes (obtained by choosing a tree and considering the rooted fluxes associated to the leaves) $\{{\bX}_l\}_l$ for the coarser triangulation $\Delta$, the condition for the fluxes $\textbf{ X}_{l'}$ in the gauge orbit associated to $\{{\bX}_l\}_l$ is given by
\be\label{embf}
\bX_l=\bX_{l'_n}\circ\cdots\circ\bX_{l'_1},
\ee
where the right-hand side denotes composition of fluxes as defined in \eqref{compi2d} and \eqref{compi3d}. Here $\{l'_1,\ldots,l'_n\}$ is the set of links dual to the edges ($d=2$) or triangles ($d=3$) of $\Delta'$ that make up the edge or triangle dual to the link $l$. In other words, $\textbf{P}^{-1}(l^*)=(l'_1)^*\cup\cdots\cup(l'_n)^*$. Furthermore, if the parallel transport of ${\bX}_l$ to the root is done along the path $\gamma$, then the parallel transport to the root of the composed fluxes on the right hand side of \eqref{embf} has to be done with a path $\textbf{P}^{-1}(\gamma)$. This path is in the pre-image of $\gamma$ with respect to the map $\textbf{P}$, which we here extend to paths $\gamma'$ starting at the root and ending elsewhere. The path in $\textbf{P}^{-1}(\gamma)$ starts at the root in the refined dual, and ends at the root of the surface tree.

This surface tree (and its root) can be chosen arbitrarily, as long as the conditions laid out in section \ref{sf3d} are satisfied. Note that in \eqref{embf} we consider only elementary fluxes in the coarser triangulation, for which the (coarser) surface trees are trivial.

The choice of a particular surface tree does not influence the value of the composition of the fluxes. The reason for this is that we are restricted (by the embeddings) to the  post-constraint hypersurface, because the image of the embedding maps prescribes that all finer holonomies should be flat. The same holds for the choice of a particular path in the pre-image $\textbf{P}^{-1}(\gamma)$.

In section \ref{reduced}, we will show that the gauge orbit described by \eqref{embf} is indeed preserved by the flow of the constraints by showing that the Poisson brackets of the right-hand of \eqref{embf} with the constraints vanish on the constraint hypersurface.

So far we have specified embedding maps that map coarser phase spaces to (gauge orbits in) a finer phase space. Inverting these maps gives (restricted) projection maps $\mathcal{P}^\text{R}_{\Delta',\Delta}$. These are proper maps in the sense that ``gauge orbits'' in $\mathcal{M}_\Delta$ do not appear. However, these maps can only be defined on the (now) pre-constraint hypersurface, which coincides with the post-constraint hypersurface of the corresponding embedding map.

\subsubsection{The projection maps}

\noindent As we said, the restricted projection maps can be obtained by inverting the embedding maps, and are given explicitly by
\be
g_\gamma=g_{\textbf{P}^{-1}(\gamma)},\q\bX_l=\bX_{l'_n}\circ\cdots\circ\bX_{l'_1},
\ee
where again $\textbf{P}^{-1}(l^*)=(l'_1)^*\cup\cdots\cup(l'_n)^*$. The restriction to the constraint hypersurface ensures that the projection is well defined, despite the fact that one has to choose surface trees and paths in the pre-image $\mathbf{P}^{-1}(\gamma)$.

This restriction to the constraint hypersurfaces is also essential in order for the restricted projection maps (and equivalently for the embeddings) to be consistent, i.e. to satisfy
\be
\mathcal{P}^\text{R}_{\Delta',\Delta}\circ\mathcal{P}^\text{R}_{\Delta'',\Delta'}=\mathcal{P}^\text{R}_{\Delta'',\Delta}
\ee
for any triple $\Delta\prec\Delta'\prec\Delta''$ of increasingly finer triangulations. To make sense out of this equation, the composed map on the left-hand side has to be restricted to the constraint hypersurface described by $\mathcal{C}^I_{\Delta,\Delta''}$. Again, the reason for this is that we do not need to specify surface trees or a unique path in $\textbf{P}^{-1}(\gamma)$ since all the refined holonomies are flat. Thus, we just need to convince ourselves that the pre-images $\textbf{P}^{-1}$ are cylindrically-consistent, which is indeed the case, as shown in the next subsection.

\subsection{Consistency of the symplectic structure}

\noindent We are now going to show that the (restricted) projection maps preserve the symplectic structure. This also follows from the fact that we could have used a ``refining time evolution'' derived from the BF action in order to define the embeddings, or a ``coarse graining time evolution'' to define the (restricted) projections. Indeed, in this case the use of a generating function guarantees that the map is canonical \cite{hoehn1,hoehn2}.

Let us consider a function $f$ on the phase space $\mathcal{M}_\Delta$. This function can be pulled back with the restricted projection map to a function on the constraint hypersurface described by $\mathcal{C}^I_{\Delta,\Delta'}$ in the phase space $\mathcal{M}_{\Delta'}$ associated to the refined triangulation. Explicitly, this pullback is given by
\be
\big(\mathcal{P}^\text{R}_{\Delta',\Delta}\big)^*(f)\big[(g_{\ell'},\bX_{\ell'})_{\ell'}\big]\coloneqq f\big(\mathcal{P}^\text{R}_{\Delta',\Delta}\big[(g_{\ell'},\bX_{\ell'})_{\ell'}\big]\big),
\ee
and is only defined on the constraint hypersurface. On the other hand, it associates unique values to the refining gauge orbits, and thus the resulting functions on $\mathcal{M}_{\Delta'}$ are gauge-invariant with respect to the transformation generated by the pre-constraints. We are going to show this explicitly in the next subsection. For the moment, let us focus on the consistency of the symplectic structure, and on the following theorem.

\begin{Theorem}\label{PBtheorem}
The restricted projection maps preserve the Poisson brackets, i.e. the equations
\be\label{pbcond1}
\big(\mathcal{P}^\mathrm{R}_{\Delta',\Delta}\big)^*\lb f_1,f_2\rb_{\Delta}=\lb\big(\mathcal{P}^\mathrm{R}_{\Delta',\Delta}\big)^*(f_1),\big(\mathcal{P}^\mathrm{R}_{\Delta',\Delta}\big)^*(f_2)\rb_{\Delta'}
\ee 
hold on the constraint hypersurface for arbitrary (smooth) phase space functions $f_1$ and $f_2$ on $\mathcal{M}_\Delta$.
\end{Theorem}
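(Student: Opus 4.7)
The plan is to reduce the statement to a check on generators. Since the pullback $(\mathcal{P}^\text{R}_{\Delta',\Delta})^*$ is a linear algebra homomorphism and Poisson brackets obey the Leibniz rule, it suffices to verify \eqref{pbcond1} when $f_1$ and $f_2$ range over the basic coarse observables $\{g_\ell,\bX_\ell\}_\ell$ associated to the leaves of a spanning tree $\mathcal{T}\subset\Gamma$. On $\mathcal{M}_\Delta$ these satisfy the canonical $(T^*\SU(2))^{|\ell|}$ brackets \eqref{SU(2)brackets}. Explicitly, the pullback of a leaf holonomy $g_\ell$ is a holonomy $g_{\textbf{P}^{-1}(c_\ell)}$ along a closed path in $\Gamma'$ built out of the elementary refined link holonomies, while the pullback of a rooted flux $\bX_\ell$ is the composition $\bX_{\ell'_n}\circ\cdots\circ\bX_{\ell'_1}$ of the refined fluxes attached to the subdivision of $\ell^*$, defined via \eqref{compi2d} or \eqref{compi3d}.

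I would then check the three fundamental bracket types on the refined phase space. The bracket $\{(\mathcal{P}^\text{R})^*(g_\ell),(\mathcal{P}^\text{R})^*(g_{\ell'})\}$ vanishes trivially, since both sides are polynomials in commuting link holonomies $h_{l'}$. For $\{(\mathcal{P}^\text{R})^*(\bX^i_\ell),(\mathcal{P}^\text{R})^*(g_{\ell'})\}$, the bracket produces contributions at every refined link $l'$ that lies both in the composition of fluxes and in the pulled-back cycle path; on the constraint hypersurface $\mathcal{C}^I_{\Delta,\Delta'}=g_I-\openone=0$, the transport segments between these insertion points collapse to trivial holonomies, leaving precisely the single Lie-algebra insertion $\delta_{\ell,\ell'}\,(\mathcal{P}^\text{R})^*(g_\ell)\tau^i-\delta_{\ell^{-1},\ell'}\,\tau^i(\mathcal{P}^\text{R})^*(g_{\ell'})$ expected from \eqref{SU(2)brackets}.

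The main obstacle is the flux-flux bracket $\{(\mathcal{P}^\text{R})^*(\bX^i_\ell),(\mathcal{P}^\text{R})^*(\bX^j_{\ell'})\}$. The composition rules \eqref{compi2d} and \eqref{compi3d} mix the elementary fluxes with adjoint actions by shadow-tree and root-path holonomies, so a priori many spurious contributions appear: brackets among elementary fluxes, and cross brackets between the elementary fluxes of one composition and the transport holonomies of the other. The key technical step is to show that all these cross terms vanish on the constraint hypersurface. This uses the fact that on $\{\mathcal{C}^I_{\Delta,\Delta'}=0\}$ any two paths between the same two nodes in the refined graph, provided they differ only by cycles introduced by the refinement, yield the same holonomy; equivalently, the choice of shadow tree $\mathcal{T}_\pi$ and the particular path chosen in $\textbf{P}^{-1}(\gamma)$ are immaterial. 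This allows one to deform the transport paths so that the extra Lie-algebra insertions form pairs that cancel via the Jacobi identity, leaving only $\delta_{\ell,\ell'}\eps^{ijk}\,(\mathcal{P}^\text{R})^*(\bX^k_\ell)$.

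Once the brackets are checked on the generators, the general statement on the constraint hypersurface follows from the Leibniz rule. As a conceptual sanity check, and to explain why the intricate cancellations in the flux-flux bracket are in fact guaranteed, I would note that the same embedding/projection pair could be derived from a generating function built from the BF action on the Alexander-move building blocks, which automatically yields a canonical transformation between the pre- and post-constraint reduced phase spaces \cite{hoehn1,hoehn2}. This explains, at the structural level, the result that will be verified by the direct calculation outlined above.
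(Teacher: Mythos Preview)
Your overall strategy---reduce to generators, check the three bracket types, invoke Leibniz---is exactly what the paper does. However, you have misdiagnosed where the work lies, and the mechanism you propose for the flux-flux bracket is not the one that actually operates.

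For the flux-holonomy bracket with $\ell=\ell'$, you say the bracket ``produces contributions at every refined link $l'$ that lies both in the composition of fluxes and in the pulled-back cycle path'' and that these collapse on the constraint hypersurface. In fact there is exactly \emph{one} such link: by construction of $\mathbf{P}$, the refined cycle $\mathbf{P}^{-1}(\ell)$ crosses the subdivided simplex $\mathbf{P}^{-1}(\ell^*)=(l'_1)^*\cup\cdots\cup(l'_n)^*$ precisely once, since the coarse cycle crosses $\ell^*$ precisely once. The paper's computation is therefore a single-term calculation, not a collapse of many terms; the only place the constraint hypersurface enters is in identifying the parallel transport $g_{\gamma'_s}$ along the refined cycle with the transport $g_{r'l'_i(0)}$ used in the flux composition.

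For the flux-flux bracket you identify as the ``main obstacle'' the cross terms between elementary fluxes $X_{l'_i}$ of one composition and the transport holonomies of the other, and you propose that these cancel via the Jacobi identity after deforming paths on the constraint hypersurface. But these cross terms are \emph{absent}, not cancelling: by definition \ref{def:shadow graph}, the shadow graph (and hence the shadow tree used for the transport $g_{l'_1(0)l'_i(0)}$) does not cross the co-path, so none of the holonomies appearing in the composed flux contain any $h_{l'_j}$. Likewise the transport to the root runs along refined branches, disjoint from $\mathbf{P}^{-1}(\ell^*)$. The only nonvanishing contributions are therefore the diagonal ones $\lb X^p_{l'_i},X^q_{l'_i}\rb=\eps^{pq}_{~~r}X^r_{l'_i}$, and summing these with the adjoint-action matrices $D^{jk}(g_{r'l'_i(0)})$ reproduces $\eps^{ijk}(\mathcal{P}^\text{R})^*(\bX^k_\ell)$ by a two-line computation. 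No Jacobi-identity cancellation is needed or invoked.

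So your proposal is not wrong in spirit, but the argument you sketch for the hard case is both more complicated than required and vague at the point where precision matters. The actual proof hinges on the geometric fact that the shadow graph avoids the co-path; once you use that, the flux-flux check is as short as the others.
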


At first sight, one might be worried about how to define the Poisson brackets between functions that are only defined on a submanifold. However, this submanifold is here defined by first class constraints, and therefore the symplectic form on $\mathcal{M}_{\Delta'}$ can simply be pulled back to the constraint hypersurface. Furthermore, by construction, the phase space $\mathcal{M}_\Delta$ and the reduced phase space $\mathcal{M}_{\Delta'}\sslash\mathcal{C}^I_{\Delta,\Delta'}$ are of equal dimension, and therefore the conditions \eqref{pbcond1} are sufficient in order to test the consistency of the Poisson brackets. In fact, using the techniques of \cite{hoehn1,hoehn2}, one can show that the pullback of the symplectic structure of $\mathcal{M}_\Delta$ with $\mathcal{P}^\text{R}_{\Delta',\Delta}$ coincides with pullback of the symplectic structure of $\mathcal{M}_{\Delta'}$ to the constraint hypersurface.

\begin{proof}[Proof of theorem \ref{PBtheorem}]
In order to prove \eqref{pbcond1}, it is sufficient to consider the pullbacks of some basic set of phase space functions. We choose these to be the holonomies and fluxes associated to the leaves $\ell$ with respect to some choice of tree in the dual graph of $\Delta$. These variables satisfy the Poisson bracket relations \eqref{SU(2)brackets}.

Let us start with a function $f=g_\ell$ corresponding to a basic holonomy. Then we have that
\be\label{pullbackhol}
\big(\mathcal{P}^\text{R}_{\Delta',\Delta}\big)^*(g_\ell)\stackrel{\text{\tiny{$\mathcal{C}$}}}{=}g_{\mathbf{P}^{-1}(\ell)},
\ee
where ${\mathbf{P}^{-1}(\ell)}$ should be understood as a (particular) path in the pre-image under $\textbf{P}$ of the cycle associated to the leaf $\ell$, i.e. including the transport from and to the root. Here, we denote with $\stackrel{\text{\tiny{$\mathcal{C}$}}}{=}$ an equation which holds on the constraint hypersurface with respect to the pre-constraints \eqref{fconstraints}. The phase space functions obtained via a pullback are a priori only defined on the constraint hypersurface. However, with a choice of a particular path in the pre-image of $\ell$, we can choose a particular extension of the phase space function away from the constraint hypersurface. Functions vanishing on the constraint hypersurfaces form an ideal with respect to multiplication, and we can define equivalence classes of (the multiplicative algebra of) functions, where two functions are equivalent if they only differ away from the constraint hypersurface.

Now, concerning the pullback of a basic rooted flux, we have that
\be\label{pullbackflux}
\big(\mathcal{P}^\text{R}_{\Delta',\Delta}\big)^*(\mathbf{X}_\ell)\stackrel{\text{\tiny{$\mathcal{C}$}}}{=}\bX_{l'_n}\circ\cdots\circ\bX_{l'_1},
\ee
where $\textbf{P}^{-1}(\ell^*)=(l'_1)^*\cup\cdots\cup(l'_n)^*$. Here again, a choice of surface tree and of parallel transport defines a particular extension of the left-hand side of this equation away from the constraint hypersurface.

It is now straightforward to check the consistency of the Poisson brackets. First of all, these relations are trivial for phase space functions involving only holonomies since holonomies commute. Let us therefore move on to the Poisson brackets between holonomies and fluxes. For this, we choose the phase space functions to be $f_1=\bX_\ell$ and $f_2=g_\ell$. By construction, the path $\textbf{P}^{-1}(\ell)$ will include one and only one of the finer links $l'_i$ appearing in the decomposition $\textbf{P}^{-1}(\ell^*)=(l'_1)^*\cup\cdots\cup(l'_n)^*$. Therefore, the only non-vanishing contribution to the Poisson bracket will be $\lb g_{\mathbf{P}^{-1}(\ell)},X_{l'_i}\rb$. In order to compute this contribution, let us rewrite $g_{\mathbf{P}^{-1}(\ell)}$ as $g_{\mathbf{P}^{-1}(\ell)}=g_{\gamma'_f}h_{l'_i}g_{\gamma'_s}$, where $g_{\gamma'_s}$ starts at the root $r'$ of the refined triangulation. We also denote by $g_{r'l'_i(0)}$ the transport from the refined root to the source vertex of $l'_1$ and then along the surface tree to the source vertex of $l'_i$. We get\footnote{See appendix \ref{appendix:1} for some helpful identities concerning $\SU(2)$ and its Lie algebra.}
\ba
\lb\big(\mathcal{P}^\mathrm{R}_{\Delta',\Delta}\big)^*\big(\bX_\ell^k\big),\big(\mathcal{P}^\mathrm{R}_{\Delta',\Delta}\big)^*(g_\ell)\rb_{\Delta'}
&=&\lb\big(\bX_{l'_n}\circ\cdots\circ\bX_{l'_1}\big)^k,g_{\mathbf{P}^{-1}(\ell)}\rb\nn\\
&=&g_{\gamma'_f}\lb\left(g^{-1}_{r'l'_i(0)}X_{l'_i}g_{r'l'_i(0)}\right)^k,h_{l'_i}\rb g_{\gamma'_s}\nn\\
&=&-2\tr\left(g^{-1}_{r'l'_i(0)}\tau^mg_{r'l'_i(0)}\tau^k\right)g_{\gamma'_f}\lb X^m_{l'_i},h_{l'_i}\rb g_{\gamma'_s}\nn\\
&\stackrel{\text{\tiny{$\mathcal{C}$}}}{=}&-2\tr\left(g^{-1}_{\gamma'_s}\tau^mg_{\gamma'_s}\tau^k\right)g_{\gamma'_f}h_{l'_i}\tau^mg_{\gamma'_s}\nn\\
&=&g_{\mathbf{P}^{-1}(\ell)}\tau^k,
\ea
where in the fourth equality we have used the fact that $g_{\gamma'_s}\stackrel{\text{\tiny{$\mathcal{C}$}}}{=}g_{r'l'_i(0)}$, since the parallel transport from the root to the leaf has to be performed for both the flux and the holonomy along the (coarse) tree. This calculation therefore confirms that \eqref{pbcond1} holds on the constraint hypersurface for Poisson brackets between pullbacks of fluxes and holonomies associated to the same leave.

Let us now consider a flux $f_1=\bX_{\ell_1}$ and an holonomy $f_2=g_{\ell_2}$ associated to two different leaves $\ell_1\neq\ell_2$. In this case, one can see that the path $\textbf{P}^{-1}(\ell_1)$ will not cross any of the simplices in $\textbf{P}^{-1}(\ell^*_2)$. Therefore, the cycle associated to $\ell_1$ only includes branches of the tree and the leaf $\ell_1$ itself. These branches and the leaf are dual to $(d-1)$-dimensional simplices $\sigma^{d-1}_{\Delta}$ that are fine-grained to certain sets $\textbf{P}^{-1}(\sigma^{d-1}_{\Delta})$, and the path $\textbf{P}^{-1}(\ell_1)$ can only intersect simplices in these sets. However, since these sets are completely disjoint from the set $\textbf{P}^{-1}(\ell^*_2)$, the pullbacks of the functions $f_1$ and $f_2$ commute, again confirming that \eqref{pbcond1} holds.

Finally, let us look at the Poisson bracket between two fluxes associated to the same leave $\ell$. Since the parallel transport needed in order to compose the finer edges or triangles into a coarse-grained edge or triangle does not cross any of the finer edges or triangles, the only non-vanishing contribution to the Poisson bracket will come from the commutation relation $\lb\bX^k_{l'_i},\bX^m_{l'_i}\rb=\eps^{km}_{~~~j}\bX^j_{l'_i}$ between the elementary fluxes. More precisely, introducing the notation $D^{jk}(g)=-2\tr\big(g^{-1}\tau^jg\tau^k\big)$ (see appendix \ref{appendix:1}), we have that
\ba\label{fluxflux}
\lb\big(\mathcal{P}^\mathrm{R}_{\Delta',\Delta}\big)^*\big(\bX_\ell^k\big),\big(\mathcal{P}^\mathrm{R}_{\Delta',\Delta}\big)^*\big(\bX_\ell^m\big)\rb_{\Delta'}
&=&\lb\big(\bX_{l'_n}\circ\cdots\circ\bX_{l'_1}\big)^k,\big(\bX_{l'_n}\circ\cdots\circ\bX_{l'_1}\big)^m\rb\nn\\
&=&\sum_{i=1}^nD^{pk}\left(g_{r'l'_i(0)}\right)D^{qm}\left(g_{r'l'_i(0)}\right)\lb X_{l'_i}^p,X^q_{l'_i}\rb\nn\\
&=& \sum_{i=1}^nD^{jp}\left(g^{-1}_{r'l'_i(0)}\right)\eps^{kmj}X^p_{l'_i}\nn\\
&=&\eps^{km}_{~~~j}\big(\bX_{l'_n}\circ\cdots\circ\bX_{l'_1}\big)^j,
\ea
which again confirms \eqref{pbcond1}. Here, we have assumed that both fluxes have the same choice of parallel transport $g_{r'l'_i(0)}$ from the root to the source of the link $l'_i$. However, even if we choose different parallel transports, these do still have to agree on the constraint hypersurface, which ensure that \eqref{fluxflux} holds at least there.

Furthermore, fluxes associated to different leaves, which commute in $\mathcal{M}_\Delta$, will commute also after the pullback. This can be seen by using the same reasoning we used for a flux and an holonomy associated to different leaves.

This completes the proof on the consistency of the projection maps with respect to the symplectic structure.
\end{proof}

In section \ref{fluxalgebra}, we are going to compute Poisson brackets between non-elementary (i.e. integrated) fluxes. These results will once again be consistent.

\subsection{Symplectomorphism between coarser and reduced finer phase spaces}
\label{reduced}

\noindent Although the existence of constraints prevents us from defining a projective limit phase space, they actually lead to an interesting feature and to a much stronger statement than the consistency of symplectic structures given by \eqref{pbcond1}. The statement is that for any pair $\Delta\prec\Delta'$ the finer phase space $\mathcal{M}_{\Delta'}$ reduced by the constraints $\mathcal{C}_{\Delta, \Delta'}^I$ is symplectomorphic to the coarser phase space $\mathcal{M}_\Delta$.

\begin{Theorem} Given any pair of triangulations such that $\Delta\prec\Delta'$, we have that
\be\label{symplectic}
\mathcal{M}_{\Delta'}\sslash\lb\mathcal{C}_{\Delta,\Delta'}^I\rb_I\simeq\mathcal{M}_\Delta,
\ee
where the symplectomorphic map is given by the restricted projection $\mathcal{P}^\mathrm{R}_{\Delta',\Delta}$.
\end{Theorem}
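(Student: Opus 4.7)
The plan is to verify the three hallmarks of a symplectic reduction: (i) the constraints $\{\mathcal{C}^I_{\Delta,\Delta'}\}$ form a first class system, (ii) the restricted projection $\mathcal{P}^\mathrm{R}_{\Delta',\Delta}$ descends to the quotient by the constraint flow, and (iii) the resulting map is a symplectomorphism onto $\mathcal{M}_\Delta$. The first point is essentially free: each constraint has the form $g_I-\openone$, and since holonomies Poisson-commute by \eqref{SU(2)brackets}, the brackets $\lb\mathcal{C}^I,\mathcal{C}^J\rb$ vanish identically on the whole of $\mathcal{M}_{\Delta'}$.

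For (ii), I would analyze the Hamiltonian flow of each $\mathcal{C}^I$ on the constraint hypersurface. This flow acts trivially on all holonomies, so it is enough to check its action on fluxes. Using \eqref{SU(2)brackets}, the flow of $\mathcal{C}^I=g_I-\openone$ modifies a flux $\bX_{l'}$ only when the cycle underlying $g_I$ crosses the dual simplex $(l')^*$, and the induced change is the addition of a term proportional to $\mathcal{C}^I$ itself, hence a ``gauge'' shift that vanishes on the constraint surface. Consequently, I would verify that the specific composition appearing in \eqref{pullbackflux},
\be
\bX_{l'_n}\circ\cdots\circ\bX_{l'_1},
\ee
is annihilated by $\lb\,\cdot\,,\mathcal{C}^I\rb$ on shell. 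Geometrically, this is the statement already emphasized in the excerpt: on the post-constraint hypersurface the finer holonomies are flat, so neither the choice of surface tree nor the choice of path in $\mathbf{P}^{-1}(\gamma)$ used to build $\mathcal{P}^\mathrm{R}_{\Delta',\Delta}$ matters. Similarly, the pullback of a coarse holonomy in \eqref{pullbackhol} is path-independent on the constraint surface, hence invariant under the constraint flow. Thus $\mathcal{P}^\mathrm{R}_{\Delta',\Delta}$ passes to a well-defined map $\widetilde{\mathcal{P}}:\mathcal{M}_{\Delta'}\sslash\{\mathcal{C}^I\}\to\mathcal{M}_\Delta$.

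For (iii), the map $\widetilde{\mathcal{P}}$ is surjective because the embedding $\mathcal{E}_{\Delta,\Delta'}$ constructed in section \ref{embpro} provides, for every $p\in\mathcal{M}_\Delta$, an explicit preimage: a gauge orbit inside the constraint hypersurface whose coarse-grained holonomies and composed fluxes reproduce $p$ by construction. Injectivity follows by the same token: if two orbits share the same coarse image, then by \eqref{defh} and \eqref{embf} all coarse holonomies and fluxes agree, and since on the constraint surface the ``gauge'' freedom is exactly the ambiguity in the refinement, the orbits must coincide. A dimension count confirms consistency: each additional independent cycle $I$ in $\Gamma'$ contributes three first class constraints, reducing $\mathcal{M}_{\Delta'}$ by six dimensions per cycle, which matches the difference in the ranks of $(T^*\SU(2))^{|\ell'|}$ and $(T^*\SU(2))^{|\ell|}$ exactly. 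Finally, the symplectic property of $\widetilde{\mathcal{P}}$ is not a new computation but a direct consequence of Theorem \ref{PBtheorem}: the pullback formula \eqref{pbcond1} asserts the equality of Poisson brackets on the constraint hypersurface, which is precisely the statement that the induced map on the reduced space is Poisson, and being a bijection between symplectic manifolds of equal dimension it is a symplectomorphism.

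The main obstacle is step (ii), specifically showing on shell gauge-invariance of the composed flux under all the constraints $\mathcal{C}^I$ simultaneously. One must argue that although a single $\mathcal{C}^I$ may shift an individual $\bX_{l'_i}$, the shift is arranged precisely so that, when the $\bX_{l'_i}$ are combined by the semi-direct product composition \eqref{compi2d} or \eqref{compi3d}, all flatness-violating terms recombine into multiples of the $\mathcal{C}^J$. This is the classical counterpart of the cylindrical consistency of the BF-vacuum embeddings, and it is where the non-abelian subtleties (parallel transport around finer cycles, the arbitrariness of surface trees in $d=3$) must cancel in a nontrivial way; everything else in the proof is either a direct dimension count or an appeal to Theorem \ref{PBtheorem}.
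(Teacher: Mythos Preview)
Your overall architecture matches the paper's: reduce the theorem to (a) Theorem~\ref{PBtheorem} for the symplectic property and (b) the statement that pullbacks of coarse observables are Dirac observables with respect to the $\mathcal{C}^I_{\Delta,\Delta'}$, the latter being what the paper isolates as Lemma~\ref{PBlemma}. Your points (i), (iii), the surjectivity via $\mathcal{E}_{\Delta,\Delta'}$, and the dimension count are all as in the paper.

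There is however a genuine error in your step~(ii). You write that the flow of $\mathcal{C}^I=g_I-\openone$ on a single finer flux $\bX_{l'}$ produces ``the addition of a term proportional to $\mathcal{C}^I$ itself''. This is false: if the cycle underlying $g_I$ crosses $(l')^*$ once, then $\lb\bX_{l'}^k,g_I\rb$ is of the form $g_{\gamma'_f}\tau^m g_{\gamma'_s}$ (up to parallel transport), which is \emph{not} proportional to $g_I-\openone$ and does not vanish on shell. Individual finer fluxes are \emph{not} weak Dirac observables; only the composed object $\bX_{l'_n}\circ\cdots\circ\bX_{l'_1}$ is. The mechanism that makes the composition invariant is not the one you sketch.

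The paper's actual argument (its Lemma~\ref{PBlemma}) hinges on a topological fact you do not state: since $\mathbf{P}(\gamma')$ is the trivial loop in $\Gamma$, every time $\gamma'$ crosses one of the simplices $(l'_i)^*\subset\mathbf{P}^{-1}(\ell^*)$ there must be a compensating ``inverse'' crossing of some $(l'_j)^*$ in the same set. Writing $g_{\gamma'}=g_{\gamma'_f}h_j^{-1}g_{\gamma'_m}h_ig_{\gamma'_s}$ and computing $\lb(\bX_{l'_n}\circ\cdots\circ\bX_{l'_1})^k,g_{\gamma'}\rb$, the two crossing contributions combine into a commutator $\big[g^{-1}_{l'_i(0)l'_j(0)}h_j^{-1}g_{\gamma'_m}h_i\,,\,g_{r'l'_i(0)}\tau^kg_{r'l'_i(0)}^{-1}\big]$, whose first entry is a holonomy along a closed path with trivial $\mathbf{P}$-image and hence equals $\openone$ on the constraint surface. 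This pairing-and-commutator cancellation is the non-trivial content of the lemma; your final paragraph gestures toward a cancellation in the composition, but the specific mechanism you need is this one, not a proportionality to $\mathcal{C}^I$.
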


As mentioned above, this result would follow more directly if we had defined the embedding maps via a generating function (the BF action), which can be interpreted as leading to a refining BF time evolution. In this case, the machinery of \cite{hoehn1,hoehn2} applies, and can be used to show that the target phase space reduced by the post-constraints is symplectomorphic to the source phase space, with the symplectomorphism being described by the canonical map derived from the generating function. The reduction of the (continuum) phase space by BF constraints (almost everywhere) is also the key ingredient in \cite{FGZ} in order to define discrete phase spaces from the continuum one.

In order to prove \eqref{symplectic}, we have to show (in addition to property \eqref{pbcond1} which is already established) that the pullback $\big(\mathcal{P}^\text{R}_{\Delta',\Delta}\big)^*(f)$ of any (smooth) phase space function $f$ on $\mathcal{M}_\Delta$ commutes weakly, i.e. on the constraint hypersurface, with the constraints. In other words, we have to show that $\big(\mathcal{P}^\text{R}_{\Delta',\Delta}\big)^*(f)$ is a Dirac observable. Since this expression is only defined on the constraint hypersurface, we consider any (smooth) function $f'$ which coincides weakly with $\big(\mathcal{P}^\text{R}_{\Delta',\Delta}\big)^*(f)$. Therefore, with $\stackrel{\text{\tiny{$\mathcal{C}$}}}{=}$ denoting again an equation valid on the constraint hypersurface, we have to show the following lemma.

\begin{Lemma}\label{PBlemma}
For any smooth $f'\stackrel{\text{\tiny{$\mathcal{C}$}}}{=}\big(\mathcal{P}^\mathrm{R}_{\Delta',\Delta}\big)^*(f)$, we have that
\be\label{PBlemma equation}
\lb f',\mathcal{C}_{\Delta,\Delta'}^I\rb_{\Delta'}\stackrel{\text{\tiny{$\mathcal{C}$}}}{=}0
\ee
for each of the constraints $\mathcal{C}_{\Delta, \Delta'}^I$.
\end{Lemma}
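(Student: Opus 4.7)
My plan is to reduce \eqref{PBlemma equation} to a generating family of functions on $\mathcal{M}_\Delta$ and then verify it case by case. By linearity and the Leibniz rule for the Poisson bracket, it suffices to verify the claim when $f$ ranges over the basic leaf holonomies $g_\ell$ and rooted leaf fluxes $\bX_\ell$ associated to a chosen spanning tree of $\Gamma$, since any smooth function on $\mathcal{M}_\Delta$ is a function of these. I would also first observe that the bracket $\{f',\mathcal{C}^I_{\Delta,\Delta'}\}_{\Delta'}$ restricted to the constraint hypersurface is independent of the chosen smooth extension $f'$: two extensions agreeing on the constraint surface differ by a function in the ideal generated by $\{\mathcal{C}^J_{\Delta,\Delta'}\}_J$, and the bracket of such a function with $\mathcal{C}^I$ vanishes weakly because the constraints, being built entirely from holonomies, Poisson-commute with one another by the last relation in \eqref{SU(2)brackets1}. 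Consequently, one is free to work with any convenient smooth representative of the pullback.

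The holonomy case is then immediate: by \eqref{pullbackhol}, the pullback $\big(\mathcal{P}^\text{R}_{\Delta',\Delta}\big)^*(g_\ell)=g_{\mathbf{P}^{-1}(\ell)}$ is a holonomy in $\Gamma'$, and holonomies Poisson-commute by \eqref{SU(2)brackets1}, so $\{g_{\mathbf{P}^{-1}(\ell)},g_I\}=0$ identically. For the flux case, one uses \eqref{pullbackflux} to expand the pullback as a sum $\sum_i \text{Ad}_{g^{-1}_i}(X_{l'_i})$ of adjoint-transported elementary fluxes, where $g_i$ is a product of link holonomies along the path from the refined root to the source of $l'_i$ which follows the chosen surface tree. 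Since each $g_i$ is itself a holonomy, it Poisson-commutes with $g_I$, and the only surviving contributions to $\big\{\big(\mathcal{P}^\text{R}_{\Delta',\Delta}\big)^*(\bX_\ell),g_I\big\}$ come from brackets $\{X_{l'_i},g_I\}$, which are non-zero precisely for those links $l'_i$ traversed by the cycle associated to $I$. Writing $g_I$ as an ordered product of elementary link holonomies and applying the second relation in \eqref{SU(2)brackets1} term by term then yields the full bracket.

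The central task is to organize this sum and show that it weakly vanishes. The geometric picture is that, on the constraint hypersurface, the composed flux $\bX_{l'_n}\circ\cdots\circ\bX_{l'_1}$ depends neither on the specific surface tree used to define the parallel transports nor on the representative of the path $\mathbf{P}^{-1}(\gamma)$; the gauge flow generated by $\mathcal{C}^I$ precisely implements such changes of choice. Accordingly, I expect the telescoped sum to collect into an expression proportional to $g_I-\openone$, possibly supplemented by factors that vanish on the constraint surface. A complementary strategy, which I would use whenever possible thanks to the extension independence established in the first paragraph, is to exploit the freedom in the choice of paths and surface trees and to select an extension whose defining data involves no link traversed by $I$; the bracket is then zero identically by disjoint support of the symplectic generators. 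The main obstacle is to verify that this circumvention is possible for every additional cycle $I$ in every refinement $\Delta\prec\Delta'$, or else to push the explicit telescoping computation through in full generality, including a careful treatment of the orientations of the links of $I$ relative to the links $l'_i$ in the composition.
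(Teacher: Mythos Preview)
Your reduction to the generating set $\{g_\ell,\bX_\ell\}$, the extension-independence argument, and the immediate dismissal of the holonomy case are all correct and match the paper's proof. The decomposition $f'=f_0(\bX_\ell',g_\ell')+\sum_I\varphi_I\mathcal{C}^I$ and the observation that the constraints, being holonomy functions, commute with everything except the elementary fluxes $X_{l'_i}$ is exactly the skeleton the paper uses.

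The genuine gap is that you stop precisely at the point where the actual work begins. Your two proposed strategies are not carried out, and strategy~(b) is not generally available: the links $l'_i$ are \emph{fixed} by the refinement $\mathbf{P}^{-1}(\ell^*)=(l'_1)^*\cup\cdots\cup(l'_n)^*$, and an additional cycle $\gamma'$ with $\mathbf{P}(\gamma')$ trivial may unavoidably traverse some of them. You cannot choose these away by adjusting the surface tree or the path to the root, since the non-vanishing brackets come from $\{X_{l'_i},h_{l'_i}^{\pm1}\}$ with the $l'_i$ themselves, not from the transport holonomies.

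The paper executes your strategy~(a), and the organizing insight you are missing is this: because $\mathbf{P}(\gamma')$ is trivial, every crossing of $\gamma'$ through some $(l'_i)^*$ must be matched by an inverse crossing through some $(l'_j)^*$ (possibly $j=i$). One may therefore reduce without loss of generality to the two-crossing case $g_{\gamma'}=g_{\gamma'_f}h_j^{-1}g_{\gamma'_m}h_ig_{\gamma'_s}$. Computing the bracket explicitly, the two contributions combine into an expression of the form
\[
g_{\gamma'_f}\,g_{l'_i(0)l'_j(0)}\Big[\,g_{l'_i(0)l'_j(0)}^{-1}h_j^{-1}g_{\gamma'_m}h_i\ ,\ g_{r'l'_i(0)}\tau^kg_{r'l'_i(0)}^{-1}\,\Big]g_{\gamma'_s},
\]
i.e.\ a group commutator rather than a telescoping difference proportional to $g_I-\openone$. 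The left entry is the holonomy along a closed loop whose image under $\mathbf{P}$ is trivial (it is built from $\gamma'_m$, the crossing links, and a segment of the surface tree), hence equals $\openone$ on the constraint hypersurface, and the commutator vanishes weakly. This pairing-and-commutator mechanism is the missing idea; without it the ``telescoping'' you anticipate does not materialize in the form you suggest.
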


\begin{proof}[Proof of lemma \ref{PBlemma}]
Let us start by choosing a tree $\mathcal{T}$ in the dual graph $\Gamma$ of the coarser triangulation $\Delta$. A point-separating set of phase space functions is given by the fluxes $\bX_\ell$ and holonomies $g_\ell$ associated to the leaves. Let us denote by $\bX_\ell'$ and $g_\ell'$ some choice of extension of the pullbacks of the corresponding phase space variables, as given in \eqref{pullbackhol} for the holonomies and in \eqref{pullbackflux} for the fluxes. Here, this choice of extension is made by making certain choices in the construction of the right-hand sides of \eqref{pullbackhol} and \eqref{pullbackflux}, for example for the precise parallel transport in the finer triangulation.

Now, according to a theorem in \cite{henneauxteitelboim}, any (smooth) phase space function $\phi$ on $\mathcal{M}_{\Delta'}$ that vanishes on the constraint hypersurface can be written as
\be
\phi=\sum_I\varphi_I\mathcal{C}_{\Delta,\Delta'}^I,
\ee
for some functions $\varphi_I$. Thus, any smooth function $f'\stackrel{\text{\tiny{$\mathcal{C}$}}}{=}\big(\mathcal{P}^\text{R}_{\Delta',\Delta}\big)^*(f)$ can be written as
\be \label{pro2}
f'=f_0\big(\bX_\ell',g_\ell'\big)+\sum_I\varphi_I\mathcal{C}_{\Delta,\Delta'}^I.
\ee
Since the constraints $\mathcal{C}_{\Delta,\Delta'}^I$ are functions of the holonomies only (see \eqref{fconstraints}), they commute (at least weakly) with the second term in \eqref{pro2}, as well as with the holonomies appearing in $f_0$. Therefore, in order to show \eqref{PBlemma equation} using \eqref{pro2}, we just need to show that we have the following vanishing Poisson bracket:
\be\label{pro3}
\lb\bX_\ell',\mathcal{C}_{\Delta, \Delta'}^I\rb\stackrel{\text{\tiny{$\mathcal{C}$}}}{=}0.
\ee
To this end, let us denote once again by $\textbf{P}^{-1}(\ell^*)=(l'_1)^*\cup\cdots\cup(l'_n)^*$ the union of the $(d-1)$-dimensional simplices corresponding to the refinement of the simplex $\ell^*$ dual to the leave $\ell$, as in \eqref{pullbackflux}. Furthermore, let us use the fact that the constraints $\mathcal{C}_{\Delta, \Delta'}^I$ form a basis for the more general set of (over-complete) constraints
\be
\mathcal{C}_{\gamma'}=g_{\gamma'}-\openone,
\ee
where the curves $\gamma'$ are such that $\mathbf{P}(\gamma')$ is the trivial curve. If $\gamma'$ does not cross any of the simplices $(l'_i)^*$, the flux observable will commute with $g_{\gamma'}$. Now, because the image of $\gamma'$ under $\textbf{P}$ has to be trivial, for any crossing between the curve $\gamma'$ and one of the simplices $(l'_i)^*$ there has to be an ``inverse'' crossing of one of the simplices $(l'_j)^*$ (where $j=i$ is allowed). Without loss of generality, we can reduce to the case where there are two crossings and the holonomy observable splits into $g_{\gamma'}=g_{\gamma'_f}h_j^{-1}g_{\gamma'_m}h_ig_{\gamma'_s}$ with $j\geq i$. Here, the holonomies $g_{\gamma'_f}$, $g_{\gamma'_m}$, and $g_{\gamma'_s}$ do not cross any of the simplices $(l'_k)^*$, and $h_i$ and $h_j$ are the group variables associated to the links $l'_i$ and $l'_j$ respectively.

With this, we can then compute the non-vanishing contributions to the Poisson bracket $\lb\bX_\ell',\mathcal{C}_{\gamma'}\rb$. These are given by
\ba\label{bracket XC}
&&\lb\big(\bX_{l'_n}\circ\cdots\circ\bX_{l'_1}\big)^k,g_{\gamma'_f}h_j^{-1}g_{\gamma'_m}h_{_i}g_{\gamma'_s}\rb\nn\\
&=&D^{mk}\left(g_{r'l'_j(0)}\right)g_{\gamma'_f}\lb X^m_{l'_j},h_j^{-1}\rb g_{\gamma'_m}h_ig_{\gamma'_s}+D^{mk}\left(g_{r'l'_i(0)}\right)g_{\gamma'_f}h_j^{-1}g_{\gamma'_m}\lb X^m_{l'_i},h_i\rb g_{\gamma'_s}\nn\\
&=&-D^{mk}\left(g_{r'l'_j(0)}\right)g_{\gamma'_f}\tau^mh_j^{-1}g_{\gamma'_m}h_ig_{\gamma'_s}+D^{mk}\left(g_{r'l'_i(0)}\right)g_{\gamma'_f}h_j^{-1}g_{\gamma'_m}h_i\tau^mg_{\gamma'_s}\nn\\
&=&g_{\gamma'_f} \left(-g_{r'l'_j(0)}\tau^kg_{r'l'_j(0)}^{-1}h_j^{-1}g_{\gamma'_m}h_i+h_j^{-1}g_{\gamma'_m}h_ig_{r'l'_i(0)}\tau^kg_{r'l'_i(0)}^{-1}\right)g_{\gamma'_s},
\ea
where we have denoted by $g_{r'l'_i(0)}$ and $g_{r'l'_j(0)}$ the holonomies appearing in the definition of the composed flux observable $\bX_\ell'=\ldots+g_{r'l'_i(0)}^{-1}X_{l'_i}g_{r'l'_i(0)}+g_{r'l'_j(0)}^{-1}X_{l'_j}g_{r'l'_j(0)}$ Now, we can use that $g_{r'l'_j(0)}=g_{l'_i(0)l'_j(0)}g_{r'l'_i(0)}$, where $g_{l'_i(0)l'_j(0)}$ is the holonomy along an open path in the surface tree going from the source node of $l'_i$ to the source node of $l'_j$.  With this decomposition we can rewrite the terms in bracket in the last line of  \eqref{bracket XC} as
\ba\label{pro6}
&&-g_{r'l'_j(0)}\tau^kg_{r'l'_j(0)}^{-1}h_j^{-1}g_{\gamma'_m}h_i+h_j^{-1}g_{\gamma'_m}h_ig_{r'l'_i(0)}\tau^kg_{r'l'_i(0)}^{-1}\nn\\
&=&g_{l'_i(0)l'_j(0)}\left(-g_{r'l'_i(0)} \tau^k g^{-1}_{r'l'_i(0)}g^{-1}_{l'_i(0)l'_j(0)}h_j^{-1}g_{\gamma'_m}h_i+g^{-1}_{l'_i(0)l'_j(0)}h_j^{-1}g_{\gamma'_m}h_ig_{r'l'_i(0)}\tau^kg_{r'l'_i(0)}^{-1}\right) \nn\\
&=&g_{l'_i(0)l'_j(0)}\left[g^{-1}_{l'_i(0)l'_j(0)}h_j^{-1}g_{\gamma'_m}h_i\ ,\ g_{r'l'_i(0)}\tau^kg^{-1}_{r'l'_i(0)}\right].
\ea
The path associated to the holonomy in the left entry of the commutator in \eqref{pro6} is mapped by $\textbf{P}$ to a trivial loop, and therefore the holonomy evaluates to the identity on the constraint hypersurface. This shows \eqref{pro3}, and finally achieves to prove the lemma.
\end{proof}

In summary, we have shown that the pullbacks of coarser phase space functions are Dirac observables with respect to the constraints. A counting of the phase space dimensions also shows that these pullbacks give a complete set of Dirac observables. We can divide out from the set of Dirac observables the ideal (with respect to multiplications) of functions vanishing on the constraint hypersurfaces. Two observables are equivalent if they coincide on the constraint hypersurface. The resulting (Poisson) algebra is then homeomorphic to the Poisson algebra of functions on the coarser phase space $\mathcal{M}_\Delta$. In particular, the Poisson algebra of observables is (weakly) closed, in the sense that for any two observables $f'$ and $g'$ of the form \eqref{pro2} we have
\be
\lb f',g'\rb_{\Delta'}=\big(\{f,g\}_\Delta\big)'+\sum_I\varphi_I\mathcal{C}^I_{\Delta,\Delta'}.
\ee

\subsection{Modified projective limit}
\label{projlimit}

\noindent One method for defining a continuum phase space starting from a directed partially ordered set (poset hereafter) of discrete ones is to use a projective limit. Given a consistent system of projection maps $\mathcal{P}_{\Delta',\Delta}$, the elements of such a projective limit $\mathcal{M}_{\infty}$ are defined as elements $p_\Delta$ of the direct product $\prod_\Delta\mathcal{M}_\Delta$ satisfying $p_\Delta=\mathcal{P}_{\Delta',\Delta}(p_{\Delta'})$ for each pair $\Delta\prec\Delta'$. In other words,
\be
\mathcal{M}_\infty=\lb p_\Delta\in{\prod}_\Delta\mathcal{M}_\Delta\ |\ p_\Delta=\mathcal{P}_{\Delta',\Delta}(p_{\Delta'}),\ \forall\ \Delta\prec\Delta'\rb. 
\ee

As mentioned in the beginning of this section, in the present setup we cannot apply this definition unaltered since we only have restricted projections. However, the fact that we have restricted projections means that certain phase space points (having fine-grained curvature) cannot be projected onto triangulations, since these cannot capture this fine-grained curvature. This suggests that one should alter the definition of a projective limit in order to account for the fact that the projections are restricted. Thus, instead of nets $(p_\Delta)$ that do have an element $p_\Delta$ for \textit{every} triangulation $\Delta$, we can consider nets $(p_\Delta)$ that do not have entries for triangulations $\Delta$ onto which $p$ cannot be projected due to the pre-constraints. This can also be described by assigning a symbolic value $\emptyset$ to the (restricted) projections acting on phase space points outside the pre-constraint hypersurface, and then applying the standard definition of the projective limit.  

\begin{Definition}[Modified projective limit]
We extend the restricted projections to projections on the full phase space by assigning the symbolic value $\emptyset$ to a projection acting away from the constraint hypersurface. We then define the modified projective limit as
\be
\mathcal{M}_\infty=\lb p_\Delta\in{\prod}_\Delta\mathcal{M}_\Delta\ |\ p_\Delta=\mathcal{P}_{\Delta',\Delta}(p_{\Delta'}),\ \forall\ \Delta\prec\Delta'\rb,
\ee
where we allow $p_\Delta$ to take the symbolic value $\emptyset$.
\end{Definition}

\subsection{The observable algebra}

\noindent We have seen so far that the phase spaces $\mathcal{M}_\Delta$ are connected by (restricted) projections which have embeddings as generalized inverse maps. These maps respect the symplectic structure of the phase spaces $\mathcal{M}_\Delta$, as formulated in \eqref{pbcond1}. Although it is not possible to define a continuum phase space via a standard projective limit, we can adopt the modified definition described right above in section \ref{projlimit}. Thus, we can consider observables as functions on this phase space. Let us describe these observables in terms of a family of observables defined on the discrete phase spaces. 

\begin{Definition}[Observables]\label{Observables}
An observable $\mathcal{O}=(\mathcal{O}_\Delta)$ is given by a net of (smooth) phase space functions $\mathcal{O}_\Delta\in\mathcal{C}^\infty(\mathcal{M}_\Delta)$. The poset $\mathcal{S}_\mathcal{O}$ of net labels $\Delta$ defining an observable $\mathcal{O}$ might be smaller than the poset of triangulations. We demand however that if a triangulation is an element of this poset, $\Delta\in\mathcal{S}_\mathcal{O}$, then this holds also for any refined triangulation $\Delta'\succ\Delta$. The elements of this net of phase space functions have to satisfy the condition
\be\label{extcond}
\big(\mathcal{P}^\mathrm{R}_{\Delta',\Delta}\big)^*(\mathcal{O}_\Delta)\stackrel{\text{\tiny{$\mathcal{C}$}}}{=}\mathcal{O}_{\Delta'},
\ee
for all pairs $\Delta\prec\Delta'$ with $\Delta\in\mathcal{S}_\mathcal{O}$, and where the equality refers once again to the constraint surface $\mathcal{C}^I_{\Delta,\Delta'}=0$. Finally, an observable family $\mathcal{O}'$ extends an observable family  $\mathcal{O}$ if $\mathcal{S}_\mathcal{O}\subset\mathcal{S}_{\mathcal{O}'}$ and $\mathcal{O}_\Delta=\mathcal{O}_{\Delta}'$ for all $\Delta\in\mathcal{S}_\mathcal{O}$.
\end{Definition}

This definition captures the idea that observables are naturally attached to a certain triangulation and to the associated notion of coarseness. For instance, consider a flux observable associated to a certain triangular surface (minimizing the area with respect to the auxiliary metric). This flux observable can be defined on phase spaces based on triangulations which are sufficiently fine to include this triangle as one of their simplices or as a union of their simplices. Let $\Delta_1$ be such a triangulation and $\mathcal{O}_{\Delta_1}$ the corresponding observable. $\mathcal{O}_{\Delta_1}$ can then be evaluated on $\mathcal{M}_{\Delta_1}$ and, by pullback with the projections, on all the finer phase spaces $\mathcal{M}_\Delta$ with $\Delta\succ\Delta_1$. However, on these finer phase spaces we have to restrict to the (pre-) constraint hypersurfaces with respect to the restricted projections $\mathcal{P}^\text{R}_{\Delta,\Delta_1}$. If we wish to evaluate (an extension of) $\mathcal{O}_{\Delta_1}$ away from the constraint hypersurfaces, we have in general to provide more information, e.g. about the refined surface tree or the parallel transport to the root in the refined triangulation. This takes into account the fact that different choices for objects in the pre-images of $\textbf{P}$ (used in section \ref{embpro}) may lead to observables that differ away from the constraint hypersurface. Therefore, $\mathcal{O}_{\Delta_1}$ has to be extended to $\mathcal{O}_{\Delta_2}$ with $\Delta_2\succ\Delta_1$, where the extension property is captured by the condition \eqref{extcond}. This in turn ensures that $\mathcal{O}_{\Delta_2}$ gives the same result as $\mathcal{O}_{\Delta_1}$ on sufficiently coarse phase space points (i.e. for those lying on the constraint hypersurface). Thus, $\mathcal{O}_{\Delta_1}$ can be defined on the full phase space associated to an arbitrary fine triangulation, provided that one extends this observable to this finer triangulation.

Let us make an additional remark concerning the refinement of flux observables in light of the ``curvature-induced torsion'' effect discussed in section \ref{sec:geom}. There, we found that the integrated fluxes depend on the choice of co-path if and only if curvature is present. Without curvature, the integrated fluxes depend only on the boundary of the co-path since fluxes over closed surfaces vanish in this case. The post-constraints impose the vanishing of curvature and it is therefore possible to define refined flux observables in which the exact position of the coÐpath in the finer triangulation is changed from the position in the coarser triangulation, but condition \eqref{extcond} is still satisfied. Of course, there is always an extension in which the co-path for the refined flux observable does coincide with the (image of the) co-path of the coarser flux observable.

Having observables labelled by a triangulation (or a coarseness scale) $\Delta$, we can restrict the quantization of these observables on the inductive limit Hilbert space to a subspace spanned by states that are cylindrical over $\Delta$. These are states arising from embeddings of states defined on the discrete Hilbert space $\mathcal{H}_\Delta$. This possibility is due to the theorem of section \ref{reduced}, which states that any $\mathcal{O}_{\Delta'}=(\mathcal{O}_\Delta)'$ arising as an extension of a certain $\mathcal{O}_{\Delta}$ is a Dirac observable with respect to the set $\lb\mathcal{C}^I_{\Delta,\Delta'}\rb_I$ of constraints describing the embedding of a coarser phase space $\mathcal{M}_\Delta$ into the finer one $\mathcal{M}_{\Delta'}$. 

Thus, given a Hilbert space representation $\mathcal{H}_{\Delta'}$ for the algebra of observables $\mathcal{O}_{\Delta'}$, we can attempt a Dirac quantization, i.e. impose the constraints $\lb\mathcal{C}^I_{\Delta,\Delta'}\rb_I$, and in this way find a representation of equivalence classes of observables $(\mathcal{O}_\Delta)'$. The constraints are satisfied for states in $\mathcal{H}_{\Delta'}$ that arise as embeddings of states in $\mathcal{H}_\Delta$, i.e. for states that are cylindrical over $\Delta$. These states are therefore ``physical'' states, and we can attempt to define a ``physical'' Hilbert space. Here we will (in future work) have to face the difficulty that with a standard choice of inner product, i.e. if the Hilbert spaces are of the form $L^2\big(\SU(2)^N,\de\mu_{\mathrm{Haar}}\big)$, the spectra of the constraints are continuous. Thus, as is well known, the ``physical'' states will not be normalizable with respect to this ``kinematical'' inner product. In fact, as discussed in \cite{paper1}, we expect that the inductive limit Hilbert space will lead to an inner product based on the Bohr compactification of the dual of the gauge group. This would make the spectra of the constraints discrete (in the sense that eigenstates are normalizable), and allow to identify ``physical'' Hilbert spaces as proper subspaces of the ``kinematical'' ones.

In this case, the properties of the quantum observables $\widehat{\mathcal{O}}_\Delta$ would coincide with the properties of any extension $\widehat{(\mathcal{O}_\Delta)'}$ restricted to the ``physical'' subspace spanned by states that are cylindrical over $\Delta$.

\subsection{ Comparison with the Ashtekar--Lewandowski representation}

\noindent We would now like to compare the classical  BF embedding maps to the corresponding classical embedding maps for the AL representation, which we will briefly review in this section. For more details we refer the reader to \cite{qsd7}, in which the projective limit of phase spaces for the AL representation is defined (for cubic graphs).

The AL construction is based on graphs $\Gamma$ and on the associated phase spaces $\mathcal{M}_\Gamma$. The variables and the symplectic structure characterizing a phase space associated to a given graph coincide with that of the gauge-variant phase space described in section \ref{gauge-variantPS} (we therefore choose simplicial fluxes). An AL embedding goes from a coarser phase space $\mathcal{M}_\Gamma$ to a finer phase space $\mathcal{M}_{\Gamma'}$, where the phase spaces are associated to graphs. Here, a finer graph $\Gamma'$ can be reached from a coarser graph $\Gamma$ by three basic operations on the links of the graph $\Gamma$: (a) subdividing a link, (b) adding a link, and (c) inverting a link.

Let us consider the case in which a link $l$ in $\Gamma$ is subdivided into several links $l=l'_n\circ\cdots\circ l'_1$ in $\Gamma'$, together with the associated embedding map. The AL embedding imposes that the fluxes along the subdivided link be constant\footnote{The reader should not be confused by this embedding for the fluxes, which looks so different from the conditions \eqref{embf} in which finer fluxes are added to coarser fluxes. The geometrical situations are very different. Whereas for the BF embedding we consider a triangle or an edge dual to a link $l$, which is glued from finer triangles or edges dual to links $l'_i$, in the case of the AL embedding we consider a link $l$ composed of finer links $l'_i$. Thus, triangles dual to $l'_i$ are now rather ``stacked'' behind each other. Hence, the flux going through these triangles should be constant (see also the discussion in \cite{guedes}).}. Taking the parallel transport to the various source nodes of the new links into account, this means that the embedding $\mathcal{E}_{\Gamma,\Gamma'}$ maps a phase space point $p$ with a flux $X_l$ for the link $l$ into (eventually) a set of phase space points with fluxes
\be
X_{l'_i}\coloneqq h_{l'_{i-1}}\ldots h_{l'_1}X_lh_{l'_1}^{-1}\ldots h_{l'_{i-1}}^{-1},
\ee
for all the finer links $l'_i$ that form the link $l$. The fine-grained holonomies $h_{l'_i}$ appearing here will be specified below.

In the case of the operation (b), where a new link $l'$ is added, the AL embedding sets $X_{l'}=0$. For the operation (c), where a link is inverted to form $l'=l^{-1}$, the embedding sets $X_{l'}=-h_l X_l h_l^{-1}$. This leads to post-constraints, which for the subdividing operation (a) take the form
\be
X_{l'_j}=h_{l'_{j-1}}\ldots h_{l'_i}X_{l'_i}h_{l'_i}^{-1}\ldots h_{l'_{j-1}}^{-1}
\ee
for any pair $(l'_i,l'_j)\subset l$ with $j>i$. Furthermore, we have that $X_{l'}=0$ for any link $l'$ that is added with operation (b).
 
Therefore, defining projection maps as the inverse of embedding maps is only possible on these constraint hypersurfaces, which require constant flux\footnote{The work \cite{qsd7} manages to define projections on the full phase space. This is due to the restriction to a family of cubic graphs which furthermore does not include the inversion of edges as refining operation. Indeed, the projection map in \cite{qsd7} (which selects $X_{l'_1}$ as the flux associated to the coarser link $l$) would not be consistent if such inversions were allowed. One possibility to define consistent projections on general graphs, is to not allow inversions as a refinement operation, which would eventually make less states equivalent to each other in the inductive Hilbert space construction.}. As in the case of the BF embedding, we have to deal with restricted projections.

Because of the existence of these constraints, we expect that the embeddings for the connection degrees of freedom will lead to gauge orbits (actually, the flow of the constraints also affect the fluxes themselves). Indeed, for the operation (a) the gauge orbits are described by the following condition for the finer holonomies:
\be
h_l=h_{l'_n}\ldots h_{l'_1}.
\ee
In case (b), when adding an additional edge $l'$, the new holonomy variable $h_{l'}$ is arbitrary, and the gauge orbit is parametrized by $h_{l'}$. Finally, in case (c) we have that $h_{l^{-1}}=h_l^{-1}$.

The reader will notice that the role of the fluxes and the holonomies is reversed when going from the AL embedding to the BF embedding. Indeed, these two representations are based on vacua that are dual to each other in the statistical physics sense \cite{savit} (see also \cite{BBR}). The different vacua describe different phases of lattice gauge theory. AL corresponds to the strong coupling limit, while BF describes the lattice weak coupling limit. The wish to have also a representation for the BF phase, which is underlying the spin foam construction \cite{spinfoamreview1,spinfoamreview2} and also appears as a possible phase when coarse graining spin foams \cite{eckert,holonomy,sffinite,decorated}, is one major motivation for the present work (see also the discussion in \cite{cylconsis}).

\section{Completing the commutator algebra of fluxes}
\label{fluxalgebra}

\noindent An essential feature of the BF-based representation is that it incorporates naturally the coarse graining of fluxes. This was our main motivation for introducing the integrated fluxes as observables. With this at hand, we can now easily consider the Poisson algebra of these fluxes, even in the case of intersecting surfaces that cut each other along a curve made up of edges. Note that an analogous result for the AL representation is not known. There, one usually defines the fluxes without parallel transport, and the non-commutativity of the fluxes is argued to arise because of the requirement of a regularization for the computation of the Poisson brackets \cite{AshtekarCorichi} (the fluxes are observables smeared only over two-dimensional surfaces instead of three-dimensional volumes in $d=3$). This is also the reason for which the more singular cases, for example with surfaces or lines cutting each other, is not known.

Using for the fluxes a definition involving a parallel transport, the non-commutativity becomes more apparent, as noted in particular in \cite{qsd7} (see also \cite{husain}). Here, we complete the picture of the non-commutativity of the fluxes by also considering the more ``singular'' cases for the computation of the Poisson brackets. This might eventually help to construct an interpretation of the BF-based representation as arising from a $L^2$ Hilbert space $\mathcal{H}\simeq L^2(\bar{\mathcal{E}})$ over an extension $\bar{\mathcal{E}}$ of flux configurations, in a similar way in which the AL quantization leads to a Hilbert space $L^2(\bar{\mathcal{A}})$ over generalized connections (see the discussion in \cite{aristideandco,guedes}).

\subsection{Algebra of fluxes in $\boldsymbol{d=2}$ spatial dimensions}

\noindent Let us consider two integrated fluxes $\bX_{\pi_1}$ and $\bX_{\pi_2}$ defined on co-path $\pi_1$ and $\pi_2$. Because the integrated fluxes are constructed out of basic link holonomies $h_l$ and simplicial fluxes $X_l$, and given the elementary Poisson brackets \eqref{SU(2)brackets1} between these variables, the Poisson bracket between $\bX_{\pi_1}$ and $\bX_{\pi_2}$ will get a non-vanishing contribution for each link that appears in the expression of both integrated fluxes. This can happen in various situations, which we discuss separately below. In short, a non-vanishing Poisson bracket can come from the commutation of a simplicial flux with an holonomy involved in the transport to the root (case 1 below), from the commutation of two identical simplicial fluxes (case 2 below), or from the commutation of a simplicial flux with an holonomy involved in the transport of the fluxes to the flux reference frame (case 3 below).

As we will see, the Poisson bracket between any two arbitrary integrated fluxes can be decomposed in local contributions corresponding to these three cases, which we now describe in detail. For the sake of clarity, we just give here the result of the Poisson brackets, and the explicit computations are presented in appendix \ref{appendix:2}. Let us briefly recall what the notations introduced in section \ref{sec:intflux} are. For two integrated fluxes defined on co-paths $\pi_i$, with $i\in\{1,2\}$, we denote by $g_{rl^i_1(0)}$ the transport from the root to the source node of the link dual to the first edge of $\pi_i$, and by $g_{l^i_1(0)l(0)}$ the transport along the shadow graph of $\pi_i$ between $l^i_1(0)$ and some $l(0)$.

\subsubsection{Case 1}

\noindent There can be a non-vanishing contribution to the Poisson bracket between $\bX_{\pi_1}$ and $\bX_{\pi_2}$ even if the co-paths $\pi_1$ and $\pi_2$ are completely disjoint. The reason behind this is that the integrated fluxes $\bX_{\pi_1}$ and $\bX_{\pi_2}$ are transported to the root, and it can happen that the transport involved in (say) $\bX_{\pi_1}$ cuts the co-path $\pi_2$. In this case, there will be a non-vanishing contribution to the Poisson bracket between the integrated fluxes, coming from the elementary Poisson bracket between an holonomy $h_l$ in $\bX_{\pi_1}$ and a simplicial flux $X_l$ in $\bX_{\pi_2}$.

Such a situation happens if we consider (an example is represented on figure \ref{fig:Flux-bracket-2d-1}, but the following calculation is generic) the two integrated fluxes defined by
\be\label{fpb1}
\bX_{\pi_1}=g^{-1}_{rl^1_1(0)}\bX_{\pi_1\backslash r}g_{rl^1_1(0)},\q g_{rl^1_1(0)}=g_{l(1)l_1^1(0)}h_lg_{rl(0)},
\ee
and
\be
\bX_{\pi_2}=g^{-1}_{rl^2_1(0)}\left(\ldots+g^{-1}_{l^2_1(0)l(0)}X_lg_{l^2_1(0)l(0)}+\ldots\right)g_{rl^2_1(0)}.
\ee
Here $\bX_{\pi_1\backslash r}$ denotes the flux observable without the parallel transport to the root, and the superscripts carried by the links refer to the co-paths. In other words, $g_{rl^i_1(0)}$ denotes the parallel transport from the root to the source $l^i_1(0)$ of the link dual to the first edge of $\bX_{\pi_i}$. In \eqref{fpb1}, we have decomposed the transport to the root involved in $\bX_{\pi_1}$ so that the holonomy $h_l$ appears explicitly. We see that the simplicial flux $X_l$ appears in $\bX_{\pi_2}$, whereas $\bX_{\pi_1}$ contains a term of the form $h_l^{-1}Zh_l$, where $Z$ is some Lie algebra element. This leads to a non-vanishing Poisson bracket contribution of the form
\be\label{fpb2a}
\lb\big(h_l^{-1}Zh_l\big)^i,X_l^j\rb=\eps^{ijk}\big(h_l^{-1}Zh_l\big)^k.
\ee
Including all the additional parallel transports involved in the integrated fluxes, one therefore obtains for the Poisson bracket the formula
\be\label{fpb3}
\lb\bX_{\pi_1}^i,\bX_{\pi_2}^j\rb=\eps^{ikm}D^{kj}\left(g_{rl(0)}^{-1}g_{l^2_1(0)l(0)}g_{rl^2_1(0)}\right)\bX_{\pi_1}^m.
\ee
Note that the holonomy appearing in the matrix element $D^{kj}$ in this formula is based on a closed loop.

A variation of this situation arises if the parallel transport to the root cuts the flux $X_l$ in the other direction. In this case, we replace the parallel transport to the root in \eqref{fpb1} by
\be
g_{rl_1^1(0)}=g_{l(0)l^1_1(0)}h^{-1}_lg_{rl(1)}.
\ee
Then, instead of \eqref{fpb2a} one now gets
\be\label{fpb2b}
\lb\big(h_lZh^{-1}_l\big)^i,X_l^j\rb=-\eps^{ikm}D^{kj}(h_l)\big(h_lZh_l^{-1}\big)^m,
\ee
and the total Poisson bracket becomes
\be\label{fpb4}
\lb\bX_{\pi_1}^i,\bX_{\pi_2}^j\rb=-\eps^{ikm}D^{kj}\left(g_{rl(1)}^{-1}h_lg_{l^2_1(0)l(0)}g_{rl^2_1(0)}\right)\bX_{\pi_1}^m.
\ee
Again, the holonomy appearing in the matrix element $D^{kj}$ is based on a closed loop.

The situation in which the parallel transport of one flux to the root crosses a second flux several times can be also derived by adding the contributions from each crossing and using the two formulas given above.

\begin{center}
\begin{figure}[h]
\includegraphics[scale=0.7]{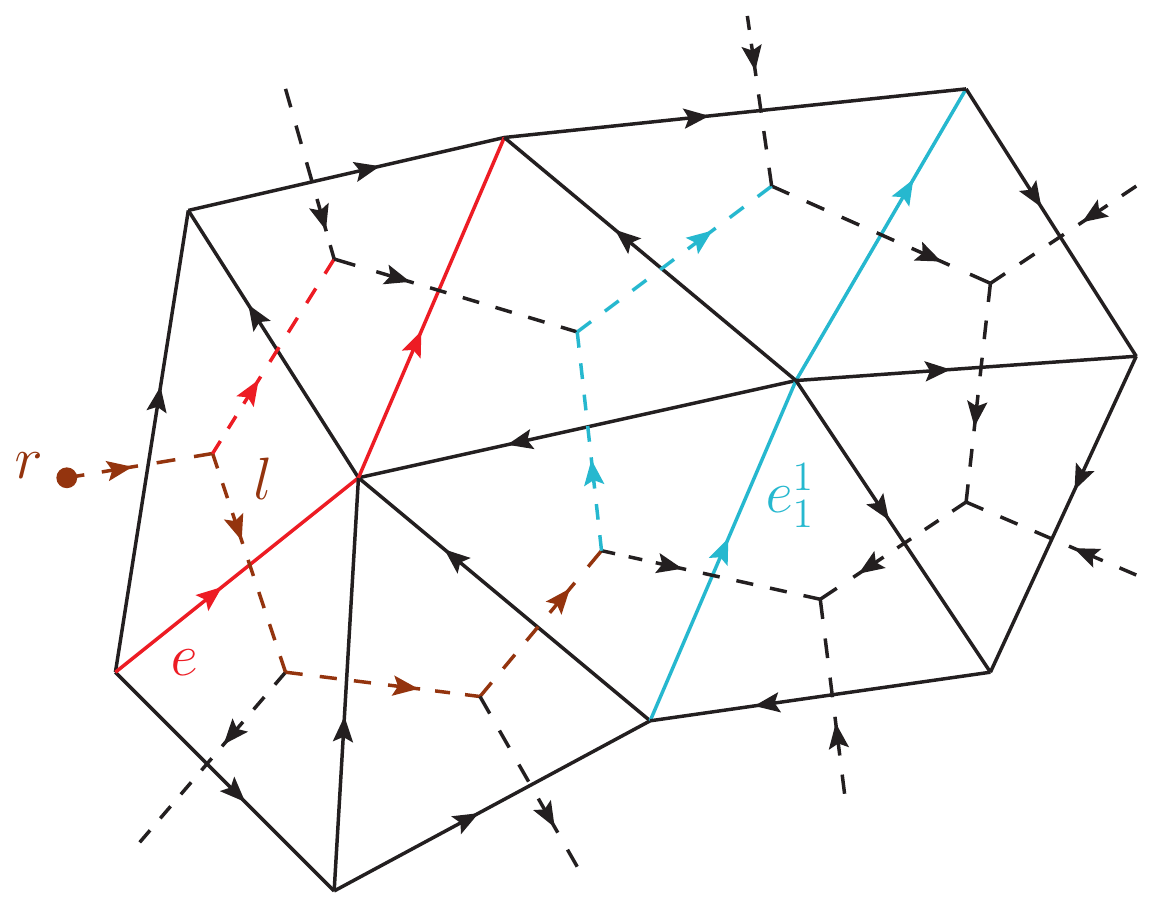}
\caption{Example of disjoint co-paths $\pi_1$ (solid blue) and $\pi_2$ (solid red) with their shadow graphs (dashed blue and red, respectively). One can see that the transport to the root along the tree (dashed brown) involved in $\bX_{\pi_1}$ is intersecting $\pi_2$.}
\label{fig:Flux-bracket-2d-1}
\end{figure}
\end{center}

\subsubsection{Case 2}

\noindent Let us put aside the case treated above, and assume that the transport to the root of the integrated fluxes $\bX_{\pi_1}$ and $\bX_{\pi_2}$ does not cross the co-paths $\pi_1$ or $\pi_2$. Now, a necessary condition for the Poisson bracket between the fluxes to be non-vanishing is that the co-paths $\pi_1$ and $\pi_2$ meet each other. Locally, it can happen that $\pi_1$ and $\pi_2$ share an edge of the triangulation or share a vertex.

We here consider the situation in which the two integrated fluxes share an edge $e$ dual to a link $l$ of the triangulation. In principle, this can also lead to the situation described above, where the parallel transport of one flux to the root cuts an edge of the other flux. Here we will however ignore such a contribution, since it can always be added at the end of the calculation to the contribution coming from the fact that the two integrated fluxes share an edge. Note that the two fluxes have to posses the same orientation in order to obtain a non-vanishing result since we have that $\lb X^i_l,X^j_{l^{-1}}\rb=0$.

One way to describe this situation is to consider the two integrated fluxes given by
\be
\bX_{\pi_1}=Z_1+g_{rl^1_1(0)}^{-1}g_{l^1_1(0)l(0)}^{-1}X_lg_{l^1_1(0)l(0)}g _{rl^1_1(0)},\q\bX_{\pi_2}=Z_2+g_{rl^2_1(0)}^{-1}g_{l^2_1(0)l(0)}^{-1}X_lg_{l^2_1(0)l(0)}g _{rl^2_1(0)},
\ee
where the Lie algebra elements $Z_1$ and $Z_2$ denote the remaining terms in the fluxes $\bX_{\pi_1}$ and $\bX_{\pi_2}$ respectively, and can be taken to be constants on the phase space for the purpose of this calculation. The non-vanishing contribution to the Poisson bracket comes from $\lb X_l^i,X_l^j\rb=\eps^{ijk}X^k_l$. Taking into account the parallel transports in the integrated fluxes, we obtain
\be
\lb\bX_{\pi_1}^i,\bX_{\pi_2}^j\rb=\eps^{ikm}D^{kj}\left(g_{rl^1_1(0)}^{-1}g_{l^1_1(0)l(0)}^{-1}g_{l^2_1(0)l(0)}g _{rl^2_1(0)}\right)\left(g_{rl^1_1(0)}^{-1}g_{l^1_1(0)l(0)}^{-1}X_lg_{l^1_1(0)l(0)}g _{rl^1_1(0)}\right)^m,
\ee
and one can see that the holonomy appearing in the matrix element $D^{kj}$ is based on a closed loop.

\subsubsection{Case 3}

\noindent Finally, the third case to discuss is the situation in which the two co-paths $\pi_1$ and $\pi_2$ only share a vertex. Let us remark that in this case the Poisson bracket between the two integrated fluxes is not necessarily non-vanishing. Indeed, if $\pi_1$ and $\pi_2$ have an opposite global orientation such that the shadow graph of $\pi_1$ never intersects $\pi_2$ (and the other way around), then the Poisson bracket will be vanishing (assuming that the two crossings treated above do not appear).

However, if $\pi_1$ and $\pi_2$ cut each other, then the shadow graph of $\pi_1$ will necessarily intersect $\pi_2$, the shadow graph of $\pi_2$ will intersect $\pi_1$, and the Poisson bracket between the integrated fluxes will get two non-vanishing contributions. Note that these contributions are a generalization of the first case treated above.

An example of the situation in which the shadow graphs and the co-paths of the integrated fluxes cut each other is represented in figure \ref{fig:Flux-bracket-2d-3}. Without loss of generality, we can describe this situation with the following integrated fluxes:
\begin{subequations}\label{3f1}
\ba
\bX_{\pi_1}&=&g^{-1}_{rl^1_1(0)}\left(Z_1+g_{l_1^1(0)l_a(0)}^{-1}X_{l_a}g_{l_1^1(0)l_a(0)}+g^{-1}_{l^1_1(0)l_b(1)}h_bY_bh_b^{-1}g_{l^1_1(0)l_b(1)}\right)g_{rl^1_1(0)},\\
\bX_{\pi_2}&=&g^{-1}_{rl^2_1(0)}\left(Z_2+g_{l^2_1(0)l_b(0) }^{-1}X_{l_b}g_{l^2_1(0)l_b(0)}+g_{l^2_1(0)l_a(0)}^{-1}h_a^{-1}Y_ah_ag_{l^2_1(0)l_a(0)}\right)g_{rl^2_1(0)}.
\ea
\end{subequations}
Here $Z_1$ and $Z_2$ denote the parts of the flux observables that will not give any contribution to the Poisson bracket. Also, $Y_a$ and $Y_b$ can be treated as constants for the sake of computing the Poisson brackets. $Y_a$ represents the part of the flux $\bX_{\pi_2}$ whose parallel transport involves $h_a$, and $Y_b$ represents the part of the flux $\bX_{\pi_1}$ whose parallel transport involves $h_b$. We will now have two contributions to the Poisson brackets. The first one, denoted by $T_1$, arises from the non-commutativity of $X_{l_b}$ with $h_b$, and the second one, denoted by $T_2$, arises from the non-commutativity of $X_{l_a}$ with $h_a$. We can therefore write that
\be
\lb\bX_{\pi_1}^i,\bX_{\pi_2}^j\rb=T_1^{ij}+T_2^{ij}.
\ee
Let us now compute these two contributions to the Poisson bracket separately.

In order to compute $T_1$, we can treat $X_{l_a}$ and $h_a$ as constants. This enables us to rewrite the flux observables \eqref{3f1} in the form
\bas
\bX_{\pi_1}&=&Z'_1+g_{rl^1_1(0)}^{-1}g_{l^1_1(0)l_b(1)}^{-1}h_bY_bh_b^{-1}g_{l^1_1(0)l_b(1)}g_{rl^1_1(0)},\\
\bX_{\pi_2}&=&Z'_2+g^{-1}_{rl^2_1(0)}g_{l^2_1(0)l_b(0)}^{-1}X_{l_b}g_{l^2_1(0)l_b(0)}g_{rl^2_1(0)},
\eas
where $Z'_1$ and $Z'_2$ are Lie algebra elements that can be treated as constants for the computation of $T_1$. With this rewriting of the fluxes, we get that the first contribution to the Poisson bracket is given by
\ba\label{T1}
T_1^{ij}&=&\lb\left(g_{rl^1_1(0)}^{-1}g_{l^1_1(0)l_b(1)}^{-1}h_bY_bh_b^{-1}g_{l^1_1(0)l_b(1)}g_{rl^1_1(0)}\right)^i,\left(g^{-1}_{rl^2_1(0)}g_{l^2_1(0)l_b(0)}^{-1}X_{l_b}g_{l^2_1(0)l_b(0)}g_{rl^2_1(0)}\right)^j\rb\nn\\
&=&-\eps^{ikm}D^{kj}\left(g_{rl^1_1(0)}^{-1}g_{l^1_1(0)l_b(1)}^{-1}h_bg_{l^2_1(0)l_b(0)}g_{rl^2_1(0)}\right)\left(g_{rl^1_1(0)}^{-1}g_{l^1_1(0)l_b(1)}^{-1}h_bY_bh_b^{-1}g_{l^1_1(0)l_b(1)}g_{rl^1_1(0)}\right)^m\nn\\
&=&-\eps^{ikm}D^{kj}\left(g_{rl^1_1(0)}^{-1}g_{l^1_1(0)l_b(1)}^{-1}h_bg_{l^2_1(0)l_b(0)}g_{rl^2_1(0)}\right)\big(\bX_{\pi_1}^{>\text{cr}}\big)^m.
\ea
Here we have used the fact that the Poisson bracket in the first line coincides with the second example of the first case treated above, which is computed in \eqref{fpb4}. In the last equality, we have introduced $\bX_{\pi_1}^{>\text{cr}}$, which is the portion of the flux that lies after the crossing.

To compute the contribution $T_2$ to the Poisson bracket, we can now rewrite the flux observables \eqref{3f1} in the form 
\ba
\bX_{\pi_1}&=&Z''_1+g^{-1}_{rl^1_1(0)} g^{-1}_{l^1_1(0)l_a(0)}X_{l_a}g_{l^1_1(0)l_a(0)}g_{rl^1_1(0)},\nn\\
\bX_{\pi_2}&=&Z''_2+g_{rl^2_1(0)}^{-1}g_{l^2_1(0)l_a(0)}^{-1}h^{-1}_aY_ah_ag_{l^2_1(0)l_a(0)}g_{rl^2_1(0)},
\ea
where $Z''_1$ and $Z''_2$ are Lie algebra elements that can be treated as constants for the computation of $T_2$. The Poisson bracket to compute therefore coincides with the first example of the first case treated above (modulo a reversed order for the entries), which is computed in \eqref{fpb3}. We therefore get
\ba\label{T2}
T_2^{ij}&=&\lb\left(g^{-1}_{rl^1_1(0)}g^{-1}_{l^1_1(0)l_a(0)}X_{l_a}g_{l^1_1(0)l_a(0)}g_{rl^1_1(0)}\right)^i,\left(g_{rl^2_1(0)}^{-1}g_{l^2_1(0)l_a(0)}^{-1}h^{-1}_aY_ah_ag_{l^2_1(0)l_a(0)}g_{rl^2_1(0)}\right)^j\rb\nn\\
&=&-\eps^{jkm}D^{ki}\left(g_{rl^2_1(0)}^{-1}g_{l^2_1(0)l_a(0)}^{-1}g_{l^1_1(0)l_a(0)}g_{rl^1_1(0)}\right)\left(g_{rl^2_1(0)}^{-1}g_{l^2_1(0)l_a(0)}^{-1}h^{-1}_aY_ah_ag_{l^2_1(0)l_a(0)}g_{rl^2_1(0)}\right)^m\nn\\
&=&\eps^{ikm}D^{kj}\left(g_{rl^1_1(0)}^{-1}g^{-1}_{l^1_1(0)l_a(0)}g_{l^2_1(0)l_a(0)}g_{rl^2_1(0)}\right)\nn\\
&&\phantom{\eps^{ikm}D^{kj}}\left(g_{rl^1_1(0)}^{-1}g^{-1}_{l^1_1(0)l_a(0)}g_{l^2_1(0)l_a(0)}g_{rl^2_1(0)}\bX_{\pi_2}^{>\text{cr}}g_{rl^2_1(0)}^{-1}g_{l^2_1(0)l_a(0)}^{-1}g_{l^1_1(0)l_a(0)}g_{rl^1_1(0)}\right)^m,
\ea
where in the last equality we have introduced the part of the flux $\bX_{\pi_2}$ that lies after the crossing.

Now we can combine this with the first contribution to find the total Poisson bracket. It is given by
\ba
&&\lb\bX_{\pi_1}^i,\bX_{\pi_2}^j\rb\nn\\
&=&\eps^{ikm}D^{kj}\left(g_{rl^1_1(0)}^{-1}g^{-1}_{l^1_1(0)l_a(0)}g_{l^2_1(0)l_a(0)}g_{rl^2_1(0)}\right)\nn\\
&&\phantom{\eps^{ikm}D^{kj}}\left(g_{rl^1_1(0)}^{-1}g^{-1}_{l^1_1(0)l_a(0)}g_{l^2_1(0)l_a(0)}g_{rl^2_1(0)}\bX_{\pi_2}^{>\text{cr}}g_{rl^2_1(0)}^{-1}g_{l^2_1(0)l_a(0)}^{-1}g_{l^1_1(0)l_a(0)}g_{rl^1_1(0)}\right)^m\nn\\
&&-\eps^{ikm}D^{kj}\left(g_{rl^1_1(0)}^{-1}g_{l^1_1(0)l_b(1)}^{-1}h_bg_{l^2_1(0)l_b(0)}g_{rl^2_1(0)}\right)\big(\bX_{\pi_1}^{>\text{cr}}\big)^m\nn\\
&=&\eps^{ikm}D^{kj}\left(g_{rl^1_1(0)}^{-1}g^{-1}_{l^1_1(0)l_a(0)}g_{l^2_1(0)l_a(0)}g_{rl^2_1(0)}\right)\nn\\
&&\phantom{\eps^{ikm}D^{kj}}\left[\left(g_{rl^1_1(0)}^{-1}g^{-1}_{l^1_1(0)l_a(0)}g_{l^2_1(0)l_a(0)}g_{rl^2_1(0)}\bX_{\pi_2}^{>\text{cr}}g_{rl^2_1(0)}^{-1}g_{l^2_1(0)l_a(0)}^{-1}g_{l^1_1(0)l_a(0)}g_{rl^1_1(0)}\right)^m-\big(\bX_{\pi_1}^{>\text{cr}}\big)^m\right],\nn\\
\label{case3PBfinal}
\ea
where for the last equality we have used the fact that
\be
g_{rl^1_1(0)}^{-1}g^{-1}_{l^1_1(0)l_a(0)}g_{l^2_1(0)l_a(0)}g_{rl^2_1(0)}=g_{rl^1_1(0)}^{-1}g_{l^1_1(0)l_b(1)}^{-1}h_bg_{l^2_1(0)l_b(0)}g_{rl^2_1(0)},
\ee
which holds since the loop by which these parallel transports differ does not include a vertex.

Therefore, ignoring the various parallel transports (which would be equal to the identity in case of vanishing curvature) in \eqref{case3PBfinal}, we obtain the difference of the parts of the fluxes lying after the crossing. In the case of vanishing curvature, this is simply equal to the flux observable associated to a co-path going from the end point of $\pi_1$ to the end point of $\pi_2$ (see section \ref{sec:geom} on the geometric interpretation of the fluxes).

Finally, we can look at the case in which two co-paths meet at a vertex without crossing each other, and have the same orientation. This situation can be described by the following structure
\bas
\bX_{\pi_1}&=&Z_1+g^{-1}_{rl^1_1(0)}g_{l_1^1(0)l_a(0)}^{-1}X_{l_a}g_{l_1^1(0)l_a(0)}g_{rl^1_1(0)}+g^{-1}_{rl^1_1(0)}g_{l_1^1(0)l_b(0)}^{-1}X_{l_b}g_{l_1^1(0)l_b(0)}g_{rl^1_1(0)},\\
\bX_{\pi_2}&=&Z_2+g^{-1}_{rl^2_1(0)}g_{l^2_1(0)l_a(1)}^{-1}h_a\left(g^{-1}_{l_a(0)l_b(0)}h_b^{-1}Y_bh_bg_{l_a(0)l_b(0)}\right)h_a^{-1}g_{l^2_1(0)l_a(1)}g_{rl^2_1(0)}.\label{touching2}
\eas
With these two integrated fluxes, we will obtain once again two non-vanishing contributions to the Poisson bracket. We denote by $T_1$ the one resulting from the bracket $\lb X_{l_a},h_a^{\pm1}\rb$, and by $T_2$ the one resulting from the bracket $\lb X_{l_b},h_b^{\pm1}\rb$. Denoting by $Y'_b$ the term in parenthesis in \eqref{touching2}, we have
\ba\label{touchingT_1}
T_1^{ij}&=&\lb\left(g^{-1}_{rl^1_1(0)}g_{l_1^1(0)l_a(0)}^{-1}X_{l_a}g_{l_1^1(0)l_a(0)}g_{rl^1_1(0)}\right)^i,\left(g^{-1}_{rl^2_1(0)}g_{l^2_1(0)l_a(1)}^{-1}h_aY'_bh_a^{-1}g_{l^2_1(0)l_a(1)}g_{rl^2_1(0)}\right)^j\rb\nn\\
&=&-\eps^{kjm}D^{ki}\left(g^{-1}_{rl^2_1(0)}g_{l^2_1(0)l_a(1)}^{-1}h_ag_{l_1^1(0)l_a(0)}g_{rl^1_1(0)}\right)\left(g^{-1}_{rl^2_1(0)}g_{l^2_1(0)l_a(1)}^{-1}h_aY_b'h_a^{-1}g_{l^2_1(0)l_a(1)}g_{rl^2_1(0)}\right)^m\nn\\
&=&-\eps^{ikm}D^{kj}\left(g_{rl^1_1(0)}^{-1}g_{l_1^1(0)l_a(0)}^{-1}h_a^{-1}g_{l^2_1(0)l_a(1)}g_{rl^2_1(0)}\right)\nn\\
&&\phantom{-\eps^{ikm}D^{kj}}\left(g_{rl^1_1(0)}^{-1}g_{l_1^1(0)l_a(0)}^{-1}g^{-1}_{l_a(0)l_b(0)}h_b^{-1}Y_bh_bg_{l_a(0)l_b(0)}g_{l_1^1(0)l_a(0)}g_{rl^1_1(0)}\right)^m,
\ea
and for the second contribution we get
\ba\label{touchingT_2}
T_2^{ij}&=&\lb\left(g^{-1}_{rl^1_1(0)}g_{l_1^1(0)l_b(0)}^{-1}X_{l_b}g_{l_1^1(0)l_b(0)}g_{rl^1_1(0)}\right)^i,\left(g^{-1}_{rl^2_2(0)}h_b^{-1}Z_bh_bg_{rl^2_2(0)}\right)^j\rb\nn\\
&=&\eps^{ikm}D^{kj}\left(g^{-1}_{rl^1_1(0)}g_{l_1^1(0)l_b(0)}^{-1}g_{rl^2_2(0)}\right)\left(g^{-1}_{rl^1_1(0)}g_{l_1^1(0)l_b(0)}^{-1}h_b^{-1}Z_bh_bg_{l_1^1(0)l_b(0)}g_{rl^1_1(0)}\right)^m,
\ea
where in the first line we have introduced
\be
g^{-1}_{rl^2_b(0)}\coloneqq g^{-1}_{rl^2_1(0)}g_{l^2_1(0)l_a(1)}^{-1}h_ag^{-1}_{l_a(0)l_b(0)}.
\ee
The holonomies appearing in the two contributions \eqref{touchingT_1} and \eqref{touchingT_2} agree. In particular, we have that
\be
g_{rl^1_1(0)}^{-1}g_{l_1^1(0)l_a(0)}^{-1}h_a^{-1}g_{l^2_1(0)l_a(1)}g_{rl^2_1(0)}=g^{-1}_{rl^1_1(0)}g_{l_1^1(0)l_b(0)}^{-1}g_{rl^2_b(0)},
\ee
since both co-paths $\pi_1$ and $\pi_2$  share the parallel transport $g_{l_a(0)l_b(0)}$. Therefore, we can conclude that 
\be
\lb\bX_{\pi_1}^i,\bX_{\pi_2}^j\rb=0
\ee
for the case in which the fluxes meet at a vertex and do not cross (regardless of the relative orientation of the co-paths). This nicely fits with case 2 treated above in the limit in which the edge $e$ on which the co-paths coincides is associated to a vanishing flux, $X_l\rightarrow0$.

We can now also discuss the case in which $\bX_{\pi_1}$ either starts or ends at a vertex of $\bX_{\pi_2}$. In this case, we obtain as a result of the Poisson bracket either the term $T^{ij}_1$ or $T^{ij}_2$, depending on how the parallel transport for one flux cuts the co-path of the other flux.

\begin{center}
\begin{figure}[h]
\includegraphics[scale=0.7]{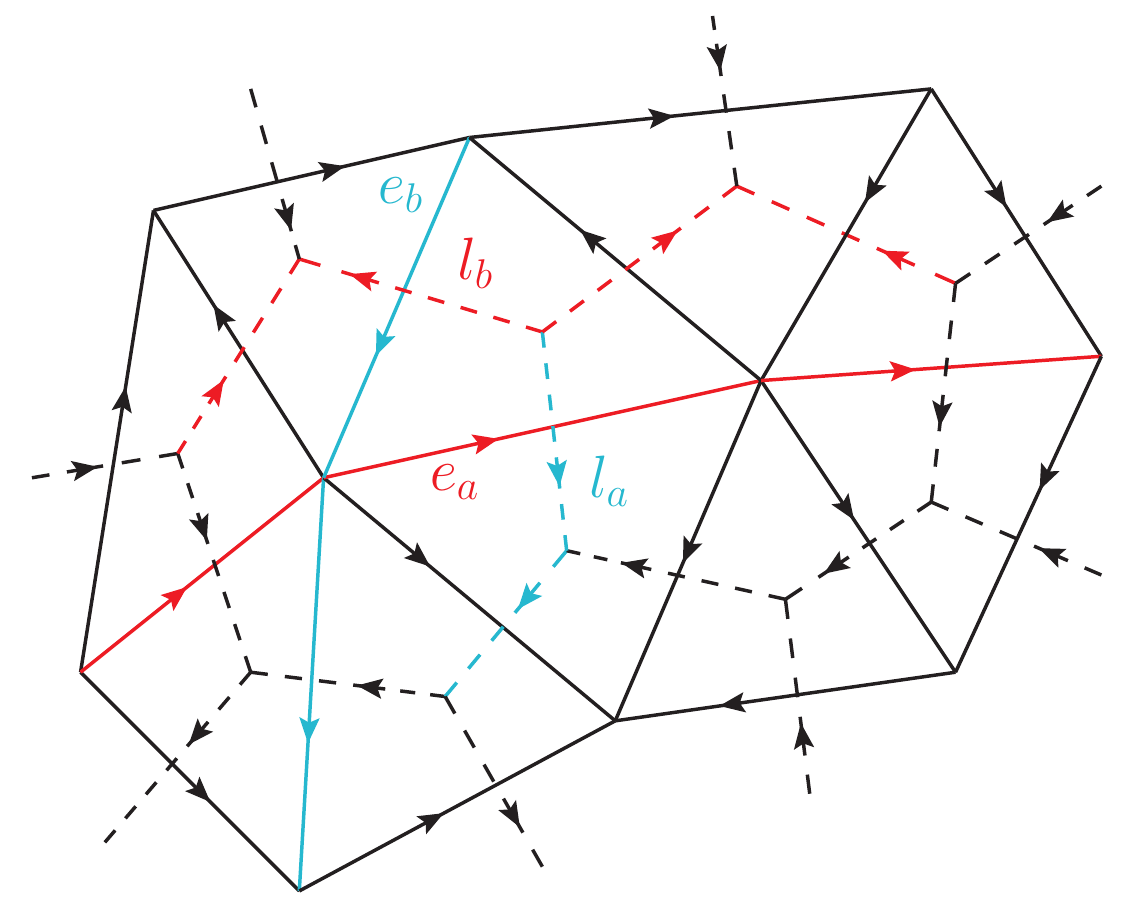}
\caption{Example of co-paths $\pi_1=\ldots\circ e_a\circ\ldots$ (solid red) and $\pi_2=\ldots\circ e_b\circ\ldots$ (solid blue) which are crossing each other at a vertex. One can see that the shadow graph $\Gamma_{\pi_2}$ (dashed blue) of $\pi_2$ is intersecting $\pi_1$, while the shadow graph $\Gamma_{\pi_1}$ (dashed red) of $\pi_1$ intersects $\pi_2$. We assume that the transport to the root of the integrated fluxes defined on $\pi_1$ and $\pi_2$ does not intersect the co-paths.}
\label{fig:Flux-bracket-2d-3}
\end{figure}
\end{center}

\subsubsection{Summary}

\noindent In general, we have seen that we obtain contributions to the Poisson bracket between two fluxes $\bX_{\pi_1}$ and $\bX_{\pi_2}$ in the following situations:

\begin{itemize}
\item The flux $\bX_{\pi_1}$ crosses the parallel transport to the root of the flux $\bX_{\pi_2}$. In this case, we obtain for the Poisson bracket an expression involving $\bX_{\pi_2}$, contracted with an epsilon tensor (and with possible parallel transports).

\item The two fluxes share an edge $e$ dual to link $l$. In this case, the contribution to the Poisson bracket comes from the non-commutativity of $X_l$ with itself. The result will be the flux $X_l$, again contracted with an epsilon tensor and with a possible decoration with some parallel transport.

\item The flux $\bX_{\pi_1}$ contains an edge $e$, dual to a link $l$, which cuts the parallel transport in the shadow graph of the second flux $\bX_{\pi_2}$ (and possibly vice-versa). This is basically a generalization of the first case. The flux $X_l$ will give a non-vanishing Poisson brackets with each of the terms in $\bX_{\pi_2}$ which involve the holonomy $h_l$ in their parallel transport. We therefore do not obtain the full flux $\bX_{\pi_2}$, but only the flux for the part of the path that lies ``after'' $h_l$ (i.e. for which $h_l$ is needed in the parallel transport). If an edge of $\pi_2$ also crosses a parallel transport in the shadow graph of $\bX_{\pi_1}$, the same result applies and we obtain another contribution to the Poisson bracket.
\end{itemize}

If several of the above cases occur, or if a single case occurs several times, the corresponding contributions add up, and in order to compute the term associated to a given case, all the other variables can be treated as constants on the phase space, as we did in the above derivation of case 3. This follows from the Leibniz rule, which holds for the Poisson brackets.

Although we computed the Poisson brackets on a discrete phase space $\mathcal{M}_\Delta$, the structure of the computation and the result is independent of the choice of (sufficiently refined) particular phase space. The structure of the crossings of the co-paths or of the parallel transports will not change under refining operations (remember that the shadow graph has to stay as near as possible to the co-path under refinement).

Therefore, we can conclude an even stronger result for these Poisson brackets than the general statement of section \ref{connecting}, which is that
\be\label{case4}
\lb\big(\bX'_{\pi_1}\big)_\Delta,\big(\bX'_{\pi_2}\big)_\Delta\rb=\lb\big(\bX_{\pi_1}\big)_\Delta,\big(\bX_{\pi_2}\big)_\Delta\rb'+\sum_I\psi_I\mathcal{C}^I_{\Delta,\Delta'}.
\ee
This means that we also determined the extension of the Poisson bracket on the right-hand side, which leads to a vanishing of the terms proportional to the constraints.

In this sense, the computation does actually give us the continuum result for the Poisson brackets. In fact, we can compute for instance the case in which two co-paths intersect, directly with the definition of the flux observables coming from the continuum connection and triad fields in \eqref{simplicial flux}. This computation is presented in appendix \ref{PB continuum}.

The continuum analogues for case 1 and case 2 are very similarly to the (continuum) computation of the Poisson brackets between an holonomy and a flux observable, and between the components of a flux observable. These computations can be found in \cite{qsd7}. All the computations in the continuum lead to results in agreement with the computations on the discrete phase space presented here.

\subsection{Algebra of fluxes in $\boldsymbol{d=3}$ spatial dimensions}

\noindent The computations for $d=3$ spatial dimensions are structurally the same as for $d=2$. For instance, the case in which one flux observable cuts through the parallel transport from the root to another flux observable, will lead to the same Poisson bracket as in case 1 treated above. Likewise, the case in which two fluxes share one triangle (without the parallel transports cutting through the fluxes) will lead to the same Poisson bracket as in case 2 treated above.

Also, for the case in which two surface co-paths intersect or meet at one edge or one vertex, the computation will follow the same structure as in $d=2$. There is however more freedom in the relative orientation of the holonomies in the parallel transports and the (cutting) fluxes, as we will see in the following example.

In this example, we assume that two surface co-paths intersect along an edge, as depicted in figure \ref{fig:Flux-bracket-3d}. The fluxes representing this situation are of the form
\bas
\bX_{\pi_1}&=&Z_1+g^{-1}_{rl^1_1(0)}g_{l^1_1(0)l_a(0)}^{-1}X_{l_a}g_{l^1_1(0)l_a(0)}g_{rl^1_1(0)}+g^{-1}_{rl^1_1(0)}g_{l^1_1(0)l_b(0)}^{-1}h_b^{-1}Y_bh_bg_{l^1_1(0)l_b(0)}g_{rl^1_1(0)},\q\\
\bX_{\pi_2}&=&Z_2+g^{-1}_{rl^2_1(0)}g_{l^2_1(0)l_b(0)}^{-1}X_{l_b}g_{l^2_1(0)l_b(0)}g_{rl^2_1(0)}+g^{-1}_{rl^2_1(0)}g_{l^2_1(0)l_a(0)}^{-1}h_a^{-1}Y_ah_ag_{l^2_1(0)l_a(0)}g_{rl^2_1(0)}.\q
\eas
Therefore, we have a parallel transport in $\bX_{\pi_2}$ cutting through the surface path of $\bX_{\pi_1}$ and vice-versa. Note that we can rearrange the surface trees so that $h_a$ or $h_b$ are both or (individually) replaced by their inverse (this would then also change $Y_a$ and $Y_b$, and lead to the appearance of $h_a$ or $h_b$ in the parallel transport for $X_{l_b}$ or $X_{l_a}$).

There are again two non-vanishing contributions to the Poisson bracket, which can be read off from the corresponding $d=2$ computations. They are given by
\ba
T_1^{ij}&=&\lb\left(g^{-1}_{rl^1_1(0)}g_{l^1_1(0)l_a(0)}^{-1}X_{l_a}g_{l^1_1(0)l_a(0)}g_{rl^1_1(0)}\right)^i,\left(g^{-1}_{rl^2_1(0)}g_{l^2_1(0)l_a(0)}^{-1}h_a^{-1}Y_ah_ag_{l^2_1(0)l_a(0)}g_{r l^2_1(0)}\right)^j\rb\nn\\
&=&\eps^{ikm}D^{kj}\left(g^{-1}_{rl^1_1(0)}g_{l^1_1(0)l_a(0)}^{-1}g_{l^2_1(0)l_a(0)}g_{rl^2_1(0)}\right)\left(g^{-1}_{rl^1_1(0)}g_{l^1_1(0)l_a(0)}^{-1}h_a^{-1}Y_ah_ag_{l^1_1(0)l_a(0)}g_{rl^1_1(0)}\right)^m,\nn\\
\ea
and
\ba
T_2^{ij}&=&\lb\left(g^{-1}_{rl^1_1(0)}g_{l^1_1(0)l_b(0)}^{-1}h_b^{-1}Y_bh_bg_{l^1_1(0)l_b(0)}g_{rl^1_1(0)}\right)^i,\left(g^{-1}_{rl^2_1(0)}g_{l^2_1(0)l_b(0)}^{-1}X_{l_b}g_{l^2_1(0)l_b(0)}g_{rl^2_1(0)}\right)^j\rb\nn\\
&=&\eps^{ikm}D^{kj}\left(g^{-1}_{rl^1_1(0)}g_{l^1_1(0)l_b(0)}^{-1}g_{l^2_1(0)l_b(0)}g_{rl^2_1(0)}\right)\left(g^{-1}_{rl^1_1(0)}g_{l^1_1(0)l_b(0)}^{-1}h_b^{-1}Y_bh_bg_{l^1_1(0)l_b(0)}g_{rl^1_1(0)}\right)^m.\nn\\
\ea
Therefore, the Poisson brackets between the two fluxes,
\be
\lb\bX_{\pi_1}^i,\bX_{\pi_2}^j\rb=T_1^{ij}+T_2^{ij},
\ee
give a flux associated to the piece of the co-path $\pi_1$ that is parallel transported with $h_2$, and a flux associated to the piece of the co-path $\pi_2$ that is parallel transported with $h_1$. These can be rewritten as one flux observable with some additional decoration with parallel transports.

\begin{center}
\begin{figure}[h]
\includegraphics[scale=0.7]{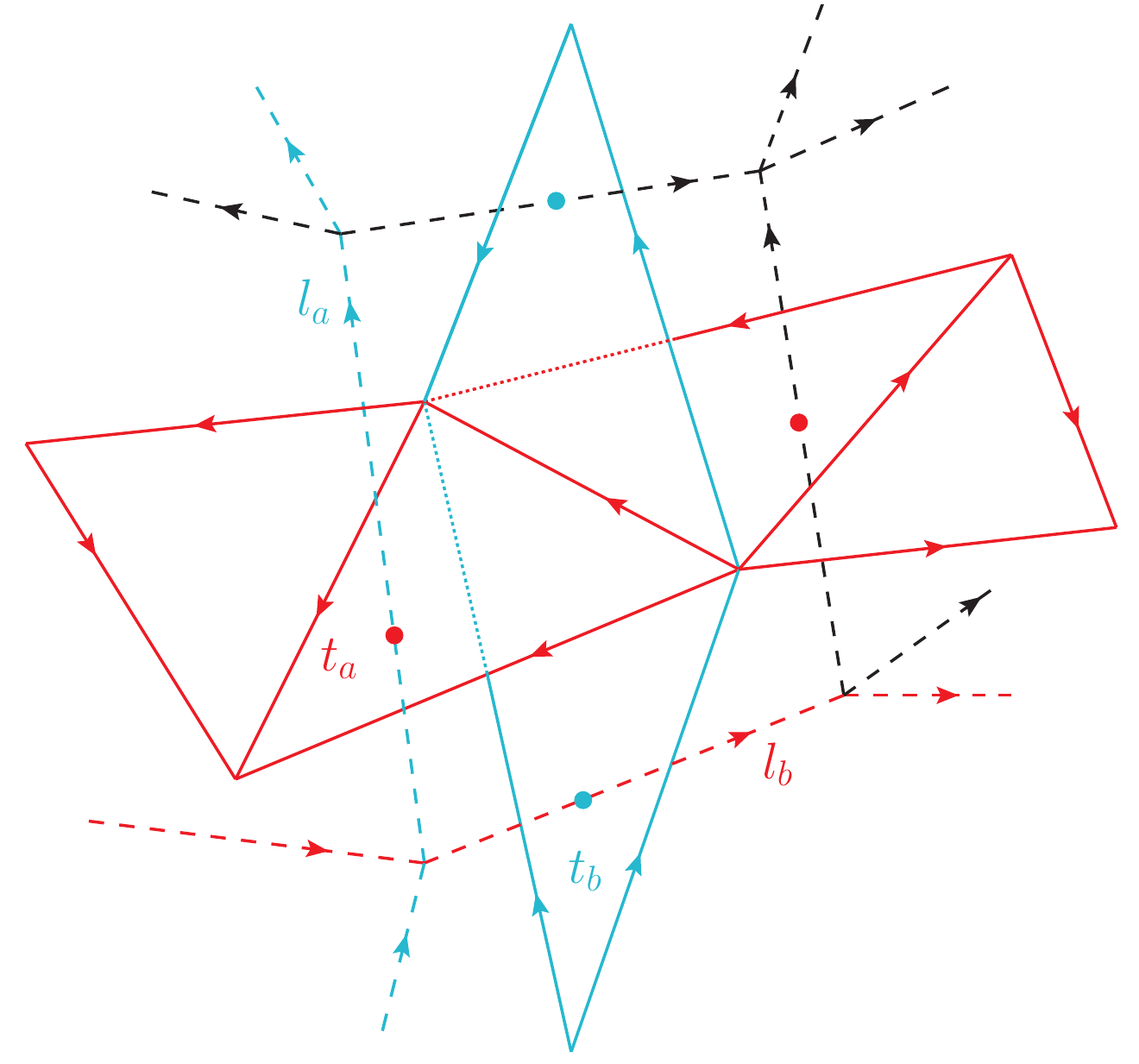}
\caption{Example of two surface co-paths $\pi_1=\ldots\circ t_a\circ\ldots$ (solid red) and $\pi_2=\ldots\circ t_b\circ\ldots$ (solid blue) which are crossing each other along an edge. The shadow graph $\Gamma_{\pi_2}$ (dashed blue) of $\pi_2$ is intersecting $\pi_1$, while the shadow graph $\Gamma_{\pi_1}$ (dashed red) of $\pi_1$ intersects $\pi_2$.}
\label{fig:Flux-bracket-3d}
\end{figure}
\end{center}

\section{Spatial diffeomorphisms in $\boldsymbol{(2+1)}$ dimensions}
\label{sec:diffeos}

\noindent In this section, we would like to comment on the action of spatial diffeomorphisms in the quantum theory. To this end, we will assume a certain structure for the quantization, which has been laid out in \cite{paper1}. In this quantization, we expect that only the exponentiated (integrated) fluxes exist as operators. We will therefore first discuss these in the next subsection. Then, we will start from a definition of the action of spatial diffeomorphisms as displacements of the embedded vertices of the underlying triangulation. Starting from this definition, we will then argue that this action is related to that generated by the spatial diffeomorphism constraints.

\subsection{Exponentiated integrated fluxes}

\noindent As a first step towards the definition of the quantum theory, we can consider the action of the various flux variables which we have introduced on functions of the holonomies. Later on we will consider kinematical states depending on a collection of group elements denoted by $\psi(g_1,\ldots,g_n)$, but for the moment we can consider for simplicity a function $\psi(g)$ of a single element $g\in G$. The action of the fluxes leaves the space of functions over a fixed dual graph invariant.

First, let us recall that the action of the simplicial fluxes \eqref{simplicial flux} on functions $\psi(g)$ over the group $G$ is given by
\be
X_l^i\triangleright\psi(g)=\i\hbar\mathcal{L}^i_l\triangleright\psi(g).
\ee
In this expression, $\mathcal{L}^i_l$ is the left-invariant derivative acting on the group variable $g$ and defined as
\be
\mathcal{L}^i\triangleright\psi(g)=\left.\frac{\de}{\de t}\psi\big(ge^{t\tau^i}\big)\right|_{t=0},
\ee
where $\{\tau^i\}_{i=1,\ldots,\text{dim}(\mathfrak{g})}$ is a basis of generators for the Lie algebra $\mathfrak{g}$. Notice that if we reverse the orientation of the link, the simplicial flux $X^i_{l^{-1}}$ acts like the right-invariant derivative
\be
\mathcal{R}^i\triangleright\psi(g)=\left.\frac{\de}{\de t}\psi\big(e^{-t\tau^i}g\big)\right|_{t=0}.
\ee
This is the reason for which the Poisson brackets \eqref{SU(2)brackets1} between holonomies and fluxes depend on the relative orientation of the two links. Now, by virtue of the elementary Poisson bracket $\lb X_l^i,h_l\rb=h_l\tau^i$, for any flux $\text{Ad}_{g^{-1}}(X^i)=(g^{-1}Xg)^i$ rotated by an adjoint action we have
\be
\lb\big(g^{-1}X^j_l\tau_jg\big)^i,h_l\rb=h_lg\tau^ig^{-1}.
\ee
This type of parallel transport of the basic fluxes is exactly the one appearing in the definition of the integrated rooted fluxes.

We are ultimately interested in the action of the exponentiated (integrated) fluxes. If we exponentiate the flow of a flux, we obtain a right translation by the group element $\exp(\alpha_i\tau^i)$, and we can write that
\be
\exp\left(\alpha_i\lb X_l^i,\cdot\rb\right)=R_l^{\exp(\alpha_i\tau^i)}.
\ee
With a slight abuse of notation, we will introduce the group element $\alpha=\exp(\alpha_i\tau^i)$, and therefore denote the action of and exponentiated flux as
\be
R_l^\alpha\triangleright\psi(g_l)=\psi(g_l\alpha).
\ee
Similarly, the action of left translations is given by
\be
L_l^\alpha\triangleright\psi(g_l)=R_{l^{-1}}^\alpha\triangleright\psi(g_l)=\psi(\alpha^{-1}g_l).
\ee
This can now straightforwardly be applied to the elementary rooted fluxes $\bX_l=g^{-1}_{rl(0)}X_lg_{rl(0)}$ and to the integrated fluxes $\bX_\pi$. If we exponentiate the flow of a transported flux, we obtain a right translation by the group element $g\exp(\alpha_i\tau^i)g^{-1}=g\alpha g^{-1}$, and we can write that
\ba\label{expflux}
\exp\left(\alpha_i\lb\big(g^{-1}X_lg\big)^i,\cdot\rb\right)=R_l^{g\alpha g^{-1}}=R^{\text{Ad}_g(\alpha)}_l.
\ea

\subsection{The generator of spatial diffeomorphisms}

\noindent The quantum theory will eventually support states that are almost everywhere flat but have distributional curvature around a finite number of (embedded) points $x_I\in\Sigma$. Such states are cylindrical over triangulations for which every point $x_I$ describes a vertex $v_I$ of the triangulation. As discussed in \cite{paper1}, we expect that such cylindrical functions can be described by a basis
\ba
\psi_{\alpha_I}\coloneqq\prod_I\delta\big(g_I\alpha_I^{-1}\big),
\ea
where $g_I$ describes a rooted holonomy variable around the vertex $v_I$, and $\alpha_I$ is a set of group elements. The inner product with respect to which this basis becomes normalizable will be discussed in future work. Here we need this basis to motivate the definition of spatial diffeomorphisms.

The basis describes curvature excitations localized at the vertices $v_I$. The position of these vertices is equivalent to the ``embedding information'', which one eventually expects to be modded out by the action of spatial diffeomorphisms.

Therefore, we can define an action of spatial diffeomorphisms by moving the vertices, i.e. by a change of the position of the vertices in $\Sigma$. This can be understood as being dual to the understanding of diffeomorphisms in the AL representation, which change the embedding of the underlying graph $\Gamma$. Vertex displacements define also diffeomorphisms for discrete approaches such as Regge gravity \cite{bahrdittrich09a}.

In what follows, we would like to motivate a particular expression for the generator of these diffeomorphisms, which would correspond to the diffeomorphism constraints. Again, the discussion will be on a heuristic level, since the diffeomorphisms may rather lead to a non-weakly continuous action and thus a generator might not exist. However, the question of defining generators for the diffeomorphisms has also appeared in the AL representation \cite{madhavan}. With this discussion, we want to point out the compelling geometrical interpretation of the BF representation.

Let us consider two geometric triangulations $\Delta$ and $\Delta'$ related by a shift of a vertex, as represented on figure \ref{fig:vertex-move}.

\begin{center}
\begin{figure}[h]
\includegraphics[scale=0.7]{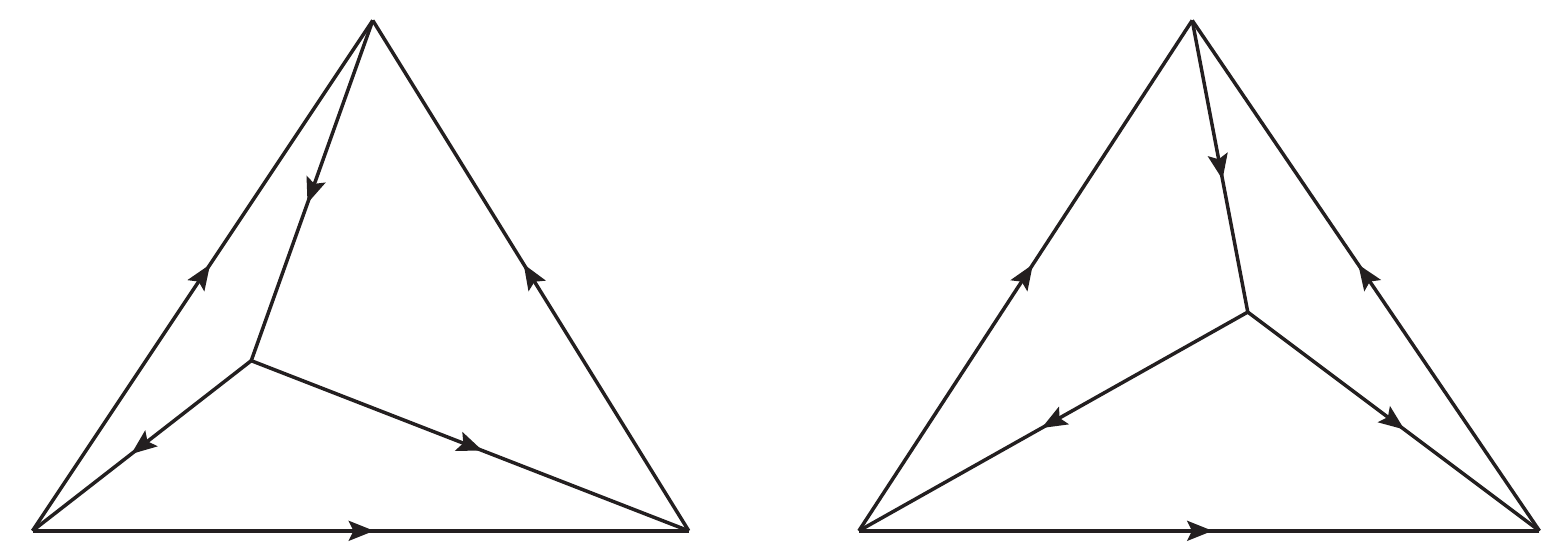}
\caption{Two geometric triangulations related by a shift of a vertex.}
\label{fig:vertex-move}
\end{figure}
\end{center}

If we want to describe the operation that shifts the curvature from one vertex to the other one, we have to consider a common refinement for these two triangulations. Such a common refinement is represented on figure \ref{fig:3d-diffeos}.

\begin{center}
\begin{figure}[h]
\includegraphics[scale=0.7]{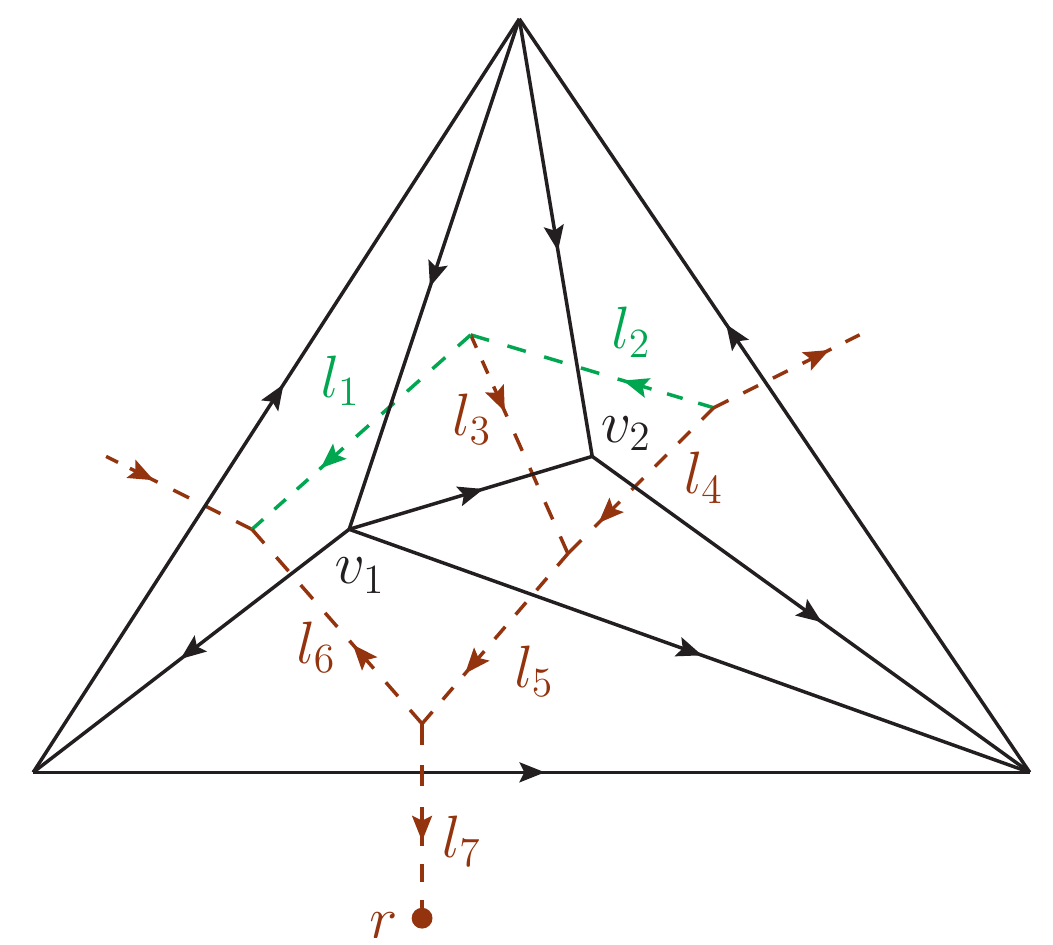}
\caption{Common refinement of the two triangulations represented on figure \ref{fig:vertex-move}, together with its dual graph and a choice of rooted tree. The leaves are represented in dashed green and the branches in dashed brown.}
\label{fig:3d-diffeos}
\end{figure}
\end{center}

Choosing a tree in the dual graph to the triangulation allows us to specify the parallel transport to the root. We denote by $g_{\ell_1}$ the rooted holonomy around the (left) vertex $v_1$, and by $g_{\ell_2}$ the rooted holonomy around the (right) vertex $v_2$ in the triangle of figure \ref{fig:3d-diffeos}. These holonomies will include the group elements $h_{l_1}$ and $h_{l_2}$ associated to the leaves $\ell_1=l_1$ and $\ell_2=l_2$. The definition of these rooted closed holonomies is given by \eqref{leaf closed holonomy}.

Ignoring the dependence of the wave functions on other closed holonomies, we can consider (basis) states defined by
\be
\psi_{\alpha_1,\alpha_2}(g_{\ell_1},g_{\ell_2})=\delta\big(g_{\ell_1}\alpha_1^{-1}\big)\delta\big(g_{\ell_2}\alpha_2^{-1}\big).
\ee
The BF vacuum corresponds to $\alpha_1=\openone=\alpha_2$. This vacuum is clearly invariant under the action of the diffeomorphism moving the vertex, since there is no curvature to move. The embedding maps discussed in section \ref{connecting} impose flatness for all additional cycles. Thus, states which result from a refinement of the left triangulation in figure \ref{fig:vertex-move} have $\alpha_2=\openone$, and states resulting from a refinement of the right triangulation have $\alpha_1=\openone$.

We can move curvature from the vertex $v_1$ to the vertex $v_2$ by using the exponentiated form of the following two rooted fluxes:
\be
\bX_{l_1}=g^{-1}_{rl_1(0)}X_{l_1}g_{rl_1(0)},\q\bX_{l^{-1}_2}=g^{-1}_{rl_2(1)}X_{l^{-1}_2}g_{rl_2(1)}.
\ee
According to \eqref{expflux}, the exponentiated flux $\exp\left(\beta_1^i\lb\big(g^{-1}_{rl_1(0)}X_{l_1}g_{rl_1(0)}\big)^i,\cdot\rb\right)$ gives rise to the action\footnote{Note that we ignore the action of $R_{l_1}$ and $R_{l_2}$ on other parts of the wave function (which amounts to setting these parts to constants). Otherwise, $R_{l_1}$ and $R_{l_2}$ would also shift the curvature around an additional vertex, which in figure \ref{fig:3d-diffeos} is the tip of the triangle. However, with an appropriate choice of prescriptors, the action of both these shift operators on the curvature around this vertex cancel each other. This allows to replace the shifts $R_{l_1}$ and $R_{l_2}$ by $R_{l_3}$ in \eqref{diff14} which then only acts on the curvatures around $v_1$ and $v_2$, but not on the curvature around the tip of the triangle.}
\ba
R_{l_1}^{g_{rl_1(0)}\beta_1g^{-1}_{rl_1(0)}}\triangleright\psi_{\alpha_1,\alpha_2}(g_{\ell_1},g_{\ell_2})&=&\delta\left(g_{l_1(1)r}h_{l_1}g_{rl_1(0)}\beta_1g^{-1}_{rl_1(0)}g_{rl_1(0)}\alpha_1^{-1}\right)\delta\big(g_{\ell_2}\alpha_2^{-1}\big)\nn\\
&=&\delta\big(g_{\ell_1}\beta_1\alpha_1^{-1}\big)\delta\big(g_{\ell_2}\alpha_2^{-1}\big),
\ea
and exponentiated flux $\exp\left(\beta_2^i\lb\big(g^{-1}_{rl_2(1)}X_{l_2^{-1}}g_{rl_2(1)}\big)^i,\cdot\rb\right)$ gives rise to
\be
R_{l_2^{-1}}^{g_{rl_2(1)}\beta_2g^{-1}_{rl_2(1)}}\triangleright\psi_{\alpha_1,\alpha_2}(g_{\ell_1},g_{\ell_2})=\delta\big(g_{\ell_1}\alpha^{-1}_1\big)\delta\big(g_{\ell_2}\beta_2\alpha_2^{-1}).
\ee

Note that, since $l_1(0)=l_2(1)$ in our example, we also have $g_{rl_2(1)}= g_{rl_1(0)}$ and therefore the parallel transport to the root is the same for the two fluxes. If we choose furthermore that $\beta_1=\beta_2=\beta$, we can use the gauge invariance of the states to rewrite the composition of the two exponentiated fluxes as an exponentiated rooted flux for $\bX_{l_3}$ (which again is transported along the same path to the root). Indeed, one has that
\be\label{diff14}
R_{l_2^{-1}}^{g_{rl_2(1)}\beta g^{-1}_{rl_2(1)}}R_{l_1}^{g_{rl_1(0)}\beta g^{-1}_{rl_1(0)}}=R_{l_3}^{g_{rl_1(0)}\beta^{-1}g^{-1}_{rl_1(0)}},
\ee
where the right-hand side is the exponentiated flux corresponding to the link $l_3$.

Now, we want to discuss the situation in which we have a state resulting from a refinement of the triangulation on the left in figure \ref{fig:vertex-move}. In this case we therefore have $\alpha_2=\openone$. In the refined triangle of figure \ref{fig:3d-diffeos}, we want to ``move the curvature'' around the left vertex $v_1$ to the right vertex $v_2$, so that the resulting state coincides with one obtained from the refinement of the right triangle in figure \ref{fig:vertex-move}.

Therefore, we should choose $\beta$ to be an operator, in turn acting as the holonomy $\hat{g}_{\ell_1}$. In the representation we are using, the holonomy operators are diagonal and act as 
\be
D(\hat{g}_{\ell_1})\triangleright\psi_{\alpha_1,\alpha_2}=D(\alpha_1)\psi_{\alpha_1,\alpha_2}
\ee
for any representation $D$ of the group (in contrast, the parallel transport matrices appearing in the exponentiated fluxes are rather a notational device to deal with the gauge-invariant description. If we were to use a gauge fixing, the exponentiated flux would only involve a flux operator).

In \eqref{diff14}, we should therefore (formally) replace $\beta$ with $\hat{g}_{\ell_1}$. However, the expression of this operator as an exponentiated flux with holonomy (operator) dependent parameter is now very involved due to the fact that the flux acts on this holonomy in the higher order terms of the exponential. At the linear level (and ignoring the fact that the exponentiated fluxes might not define a weakly continuous family), we can see that
\be
\left.\left(R_{l_3}^{g_{rl_1(0)}\beta^{-1}g^{-1}_{rl_1(0)}}\right)\right|_{\beta=\hat{g}_{\ell_1}}\sim\openone+\i\hbar\widehat{\big(g^{-1}_{\ell_1}\bX_{l_3}\big)}.
\ee
This can in fact be interpreted as a discretization of the spatial diffeomorphism constraints $C_a=F^i_{ab} E^b_i$ for gauge-theoretical formulations of gravity. 

As is well known, the constraints for $(2+1)$-dimensional gravity can be written in the form $C_\ell=F_\ell\coloneqq g_\ell-g_\ell^{-1}$, which makes them Abelian. A more transparent geometric interpretation can be obtained by contracting the constraints with fluxes, thereby obtaining
\be
C_{Y,\ell}\coloneqq-2\tr\big(FY^i\tau^i\big),
\ee
which is the constraint generating a vertex deformation in the direction of $Y$. On the constraint hypersurface $F=0$, we can replace $Y$ with any phase space dependent function, including a flux. Therefore,
\be\label{conalg}
C_{X,\ell}\coloneqq-2\tr\big(FX^i\tau^i\big)
\ee
is indeed the generator of the vertex displacement in the direction of the flux associated to the edge connecting the old ($v_1$) and new ($v_2$) positions of the vertex being moved.

In particular, it is straightforward to confirm that this constraint generates also a change in the fluxes which is consistent with this deformation (on the constraint hypersurface $F=0$). Therefore, we have derived a connection (at a heuristic level) between the notion of diffeomorphisms acting as moving the embeddings of the vertices, and the notion of diffeomorphism as changing the \textit{physical} metric data\footnote{This however should not be confused with proper Dirac observables. The physical metrical distance is the distance between vertices as given by the fluxes, as opposed to the distance of the vertices with respect to the auxiliary metric, which we use to define the (geodesic) embedding of the edges. However, the vertices themselves are not ``physical'' as long as they are not attached to point particles. The fact that the constraints generate vertex displacements shows that they are rather gauge information.} in a way consistent with the displacement of a vertex. This latter notion is also how diffeomorphisms act in discrete gravity, such as Regge gravity or three-dimensional spin foam models \cite{williams,dittrichreview,bahr,louapre}.

One can derive a consistent constraint algebra for the constraints \eqref{conalg} (defined on a discrete phase space) in the case of $(2+1)$-dimensional gravity \cite{bonzomdittrich}. This ensures that the constraints define a consistent flow, which can then be integrated to define an exponentiated flux with holonomy-dependent parameters, just like on the left hand side of \eqref{diff14}. However, the consistency of the flow depends crucially on the restriction to the constraint hypersurface of vanishing curvature. Thus, already a splitting of the constraints into a closed algebra of diffeomorphisms and a Hamiltonian is difficult if constraints at different vertices are involved. Furthermore, the algebra contains higher order terms (in powers of the constraints).

It would be interesting to establish in $(3+1)$-dimensional gravity a similar connection between the diffeomorphisms moving the embeddings of the vertices, and the diffeomorphisms changing the physical metric. There are however several difficulties with this task. One is that the symmetries of BF theory now rather describe the translation of edges instead of the translation of vertices. It has been proposed that the simplicity constraints should break this edge translation symmetry, so that the remaining symmetries describe the movement of vertices \cite{zapata}. This means that only certain combinations of edge translations should be allowed, which could in turn be interpreted as vertex translations. This could be achieved by demanding that the simplicity constraints have to describe a metrical discrete geometry, and that this property be preserved by the symmetries. These points are however so far not fully realized in the LQG phase space \cite{simplicity2,anza,spinfoamreview1,dittrichryan2}. It would be extremely beneficial to study this problem in more depth, to clarify the geometric interpretation of LQG geometries, and the question of how to impose the diffeomorphism constraints.

\section{Imposing the dynamics}
\label{sec:dyn}

\noindent In this section we would like to briefly discuss possibilities for imposing a dynamics. In  $(2+1)$-dimensional gravity without a cosmological constant, the constraints demand the vanishing of (local) curvature, which can be implemented straightforwardly. In particular, the constraints can be implemented on the discrete phase spaces (that is, they are not ``discretization-changing''), and moreover form a closed algebra \cite{bonzomdittrich}. An infinite refinement of the constraints (as is required for the AL representation \cite{PerezNoui}) is not necessary since flatness is automatically implement away from the vertices.

In the case of a non-trivial (spatial) topology, some global degrees of freedom remain. More degrees of freedom can also be added by allowing particle insertions \cite{FL1,Matschull,karim1,karim2}, which here are naturally represented by the point-like curvature defects.

Time evolution can then also be described by evolution via Pachner moves, as explained in \cite{hoehn1,hoehn2}. The framework in \cite{hoehn1,hoehn2} was developed in order to deal with phase spaces whose (kinematical) dimension can change during discrete time evolution. The need to change the triangulation can arise in the case of particles if these can for instance scatter, or at other transition points \cite{thooft1,thooft2,ziprick1,ziprick2}. The change in phase space dimension leads to pre- and post-constraints, which we discussed in section \ref{connecting} in the context of embedding coarser phase spaces into finer ones. Therefore, this simplicial evolution scheme naturally fits within the framework described in section \ref{connecting}.

A more complicated case arises if we consider $(2+1)$-dimensional gravity with a cosmological constant. Indeed, in this case physical solutions describe homogeneously curved spaces. Thus, we expect that if one starts from the phase space describing almost everywhere flat configurations, we need to go to the limit of infinite refinement if we impose constraints, for instance along the lines of \cite{dittrich14}. It would be interesting to pursue this program (which has also been studied starting from the AL representation in \cite{pranzetti-perez}) further. A question would be whether one can recover a quantum group structure as an effective description, which would describe $(2+1)$-dimensional gravity with a cosmological constant \cite{maite1,maite2,aldoetal,Turaev:1992hq,smolin,NPP,pranzetti}. This effective description could then be used to define a new inductive Hilbert space or modified projective phase space structure, for which the vacuum and embeddings describe homogeneously curved geometries instead of flat geometries. At the classical level (e.g. with the Regge action), it has already been shown that an action describing homogeneously curved tetrahedra arises from coarse graining the standard Regge action (describing flat tetrahedra) with a cosmological constant term \cite{improved}.

The most interesting case is of course that of $(3+1)$-dimensional gravity. There, the dynamics can be imposed either via a Hamiltonian constraint or in a more covariant setup. If we want to impose the Hamiltonian constraint on a given discrete phase space, the problem is to find a discretization leading to a consistent constraint algebra. In fact, since diffeomorphism symmetry is typically broken by discretizations, a discrete evolution scheme rather results in pseudo-constraints \cite{bahrdittrich09a,bahr,gambini1,gambini2,gambini3} which describe equations of motion that couple different time steps only weakly.

A way out of this problem might be to turn to the continuum phase space, which would allow to discuss discretization-changing constraints (similar to Thiemann's constraints \cite{thiemannH1,thiemannH2}). In fact, one would expect that the dynamics will necessarily turn on infinitely many (curvature) excitations, even if one starts from a configuration with only very few curvature excitations. Nevertheless, the question of finding a consistent constraint algebra on this continuum phase space remains open.

An alternative approach to working with constraints is to accept once and for all a discrete dynamics (i.e. a dynamics approximating the continuum one via a discretization), and the fact that this discrete dynamics breaks the diffeomorphism symmetry and the related constraints. One would then rather impose the discrete dynamics via a time evolution generated by an action functional. This can be implemented by Pachner moves that change the spatial triangulation, and can be interpreted in space-time as gluing simplices onto the hypersurface \cite{hoehn1,hoehn2,hoehn3,alesci-rovelli}. In the continuum limit, i.e. in the regime of very fine triangulations, one would expect that symmetries are restored, and that the dynamics imposed in this way projects onto the constraint hypersurface. A discussion about how to construct this continuum limit in the quantum case, and the relation between this construction and a renormalization flow, can be found in \cite{timeevol,dittrich14,bahr14}. Similarly to the case of $(2+1)$-dimensional gravity with a cosmological constant, one can expect that a new (effective) vacuum will arise, which would be a physical state of $(3+1)$-dimensional gravity.

\section{Summary and discussion}
\label{sec:summary}

\noindent In this paper, we have introduced the classical framework underlying a new representation for LQG based on the BF vacuum. This involves in particular a shift in emphasis from holonomies, which in the AL representation generate excitations of spatial geometry, to fluxes, which in the BF representation generate curvature excitations.

The new BF representation is based on an entirely new setup and presents notable differences with the AL representation. Let us recall what the main features of the new representation are, and the key results which we have derived.

\begin{itemize}
\item Whereas the AL representation is based on the partially ordered set of dual graphs and refinement operations on the links, the BF representation requires to introduce the partially ordered set of triangulations and a set of new refinement operations (here the Alexander moves). While in the AL representation the excitations of geometry are carried by the dual graphs, in the BF representation the curvature excitations are carried by the vertices (in $d=2$ spatial dimensions) or by the edges (in $d=3$) of the triangulation.

\item Whereas the AL representation is based on the composition and coarse graining of holonomies, the BF representation requires the composition of fluxes. We have seen that this requires the introduction of new composite objects called the integrated simplicial fluxes, and that the parallel transport involved in the definition of the simplicial fluxes is crucial in order to achieve a geometrically-meaningful composition. The resulting observables, which are related to Wilson surfaces \cite{wilsonsurfaces1,wilsonsurfaces2}, provide an alternative (dual) way of characterizing quantum geometries. Such an alternative method might in turn be very useful for the construction of the continuum limit of LQG \cite{eteradeformed,dittrich14}, since the fluxes encode information about the spatial geometry.

\item We have provided a geometric interpretation for the integrated flux observables, in particular concerning their dependence on the choice of co-path. The composition of the fluxes can be interpreted as defining coarser observables. In this context, we have observed that the coarse-grained fluxes do not necessarily satisfy the (coarse) Gauss constraints, as was also found in the context of coarse graining of spin networks in \cite{eteradeformed} (this should not be confused with a violation of the ``microscopic'' Gauss constraint, or with a violation of gauge invariance). Here, we can interpret this effect as ``curvature-induced torsion'', which only appears with non-Abelian structure groups.

\item Starting from the family of phase spaces based on triangulations, we have defined a continuum phase space. As we have seen, this requires the definition of consistent projection and embedding maps, which turn out to involve pre- and post-constraints. Such constraints are also essential in discussing canonical maps between phase spaces of different dimensions \cite{hoehn1,hoehn2}. In order to deal with this complication, we have therefore introduced a modified projective limit, which takes into account the fact that configurations with curvature cannot be coarse-grained (or projected) onto triangulations that cannot support this curvature. An alternative characterization of the family of discrete phase spaces puts the constraints at center stage. Indeed, we have also seen that coarser phase spaces arise as symplectic reductions of finer phase spaces with respect to the constraints. This was in fact used in \cite{FGZ} in order to define the discrete phase spaces as symplectic reductions from a continuum phase space. The same two characterizations can also be applied to define a phase space underlying the AL representation.

\item This whole framework has enabled us to clarify and to complete the structure of the Poisson algebra of flux observables. In particular, we have found that the Poisson bracket between two fluxes defined on intersecting co-paths does not result in a more singular object. We saw in fact that the algebra of holonomies and fluxes is closed. In all these computations, the use of simplicial (i.e. parallel transported) fluxes is essential. The results of these computations (on the discrete phase space) can also be confirmed by computations with the continuum expression for the fluxes, as presented in the appendix.

\item We have discussed the action of spatial diffeomorphisms as vertex displacements in $d=2$ spatial dimensions, and argued that this action is generated by a candidate spatial diffeomorphism constraint operator.
\end{itemize}

These are the key features and properties of the new representation, and the main new results obtained in this framework.

Importantly, let us point out that the framework presented in \cite{paper1} and developed in the present paper as well as in the follow up work \cite{toappearb}, is different from defining a BF state as a functional on the Ashtekar--Lewandowski Hilbert space, as was attempted for instance in \cite{karim1,mikovic,ATZ}. Indeed, this amounts to defining the components of a BF state, which itself is not an element of the Ashtekar--Lewandowski Hilbert space, in the spin network basis. In contrast, we have defined here a framework which classically results in a different continuum phase space and quantum mechanically in a unitarily-inequivalent Hilbert space (which we define in \cite{toappearb}). The (cyclic) vacuum state underlying this Hilbert space is a state satisfying the BF constraints, and therefore an element of the Hilbert space. Moreover, we not only defined the vacuum BF state, but an entire Hilbert space carrying a representation of the kinematical observable algebra of LQG. Any state in the new Hilbert space would be a non-normalizable state with respect to the Ashtekar--Lewandowski Hilbert space.

Let us now comment on the imposition of the various constraints. In our construction, the Gauss constraint is already implemented (except at the root). We expect that spatial diffeomorphisms will act as displacements of the vertices, as discussed in section \ref{sec:diffeos}. While this is clearer in $d=2$, in $d=3$ it might require an improved understanding of the status of the simplicity constraints \cite{twisted,simplicity2,anza,dittrichryan2}, since these are supposed to reduce the BF symmetries down to diffeomorphism symmetries \cite{zapata}. In section \ref{sec:dyn}, we have discussed various possibilities for imposing the Hamiltonian constraints, in particular for $d=3$. Here, the use of Pachner moves to describe the dynamics and a continuum limit, as discussed in \cite{cylconsis,dittrich14}, seems to be the most promising approach. The present framework based on the fluxes could indeed be ideal in order to discuss and achieve the coarse graining in geometrical terms. For $d=2$, without a cosmological constant, the imposition of the Hamiltonian constraint is already achieved by this framework. Including a cosmological constant could provide a connection to other approaches which directly start from quantum group structures, or teach us more about the construction of physical vacua by coarse graining \cite{smolin,maite1,maite2,aldoetal,improved,timeevol}.

As we have already emphasized, this framework opens up a new path towards the understanding of quantum geometry. Many properties of our construction, for instance the fact that it is based on almost everywhere flat instead of almost everywhere degenerate configurations, make it also more suitable for a discussion of the dynamics and of issues such as that of the continuum limit and coarse graining.

Finally, let us comment on some possible generalizations and connections to other works.

\begin{itemize}
\item We have based our construction on the poset of triangulations. However, we do not see any a priori difficulties in extending this poset to more general polyhedral complexes. Another possible generalization would be to work on the gauge-covariant phase spaces, and to also allow for defects that violate the (microscopic) Gauss constraints. This can accommodate the description of spinning particles \cite{FL1}. Moreover, since the macroscopic Gauss constraints may be violated, such defects might even be necessary in order to describe ``effective'' configurations.

\item At the classical level, there is no obstruction to generalizing this framework to the structure group $\SL(2,\mathbb{C})$. This would correspond to the self-dual theory defined with a complex Barbero--Immirzi parameter (see the recent discussions \cite{complexBI1,complexBI2,complexBI3,complexBI4,complexBI5,complexBI6}). The crucial issue is however a priori in the quantum theory. This being said, it might be that for the BF-based representation the obstructions arising with the AL representation do not apply. In particular, the fact that $\SL(2,\mathbb{C})$ is non-compact (and not amenable) might not be an issue anymore, since one rather needs a compactifaction of the dual of the group, both for $\SU(2)$ and $\SL(2,\mathbb{C})$.

\item For $(2+1)$-dimensional gravity, the BF-based representation can be used to work on the reduced phase space. Thus, a quantization of this setup could potentially be compared with other approaches involving reduced phase space quantization, for instance those based on Chern--Simons theory \cite{meusburger}.

\item It would very be interesting to study how the new BF-based representation affects the description of blak holes and the construction of loop quantum cosmology. A first step in this direction has already been taken in the work \cite{hannoBH}, which discusses an interesting mixture of representations in the context of black hole physics. There, the connections along the black hole horizon are rather treated in a BF-like representation, whereas the connections transversal to the horizon are kept in the AL representation. This is due to the imposition of the (isolated) horizon boundary conditions, which fix the curvature on the horizon. On a more general level, it would be useful to clarify how the use of curvature excitations challenges the usual picture of the black hole state counting, where one has links of the dual spin network graphs puncturing the horizon.

\item The AL and BF-based representations are in a certain sense dual to each other. This duality has to be understood in a statistical physics sense \cite{savit}. The AL representation corresponds to a strong coupling expansion in lattice gauge theory, whereas the BF-based representation describes the weak coupling limit. Wilson loops and Wilson surface operators (which can be seen as generalizations of 't Hooft operators \cite{thooftop1}) provide the order parameters for these two different phases respectively. In terms of states, both vacua are given by totally squeezed states. This opens the question of whether a generalization of Gaussian states is possible. In general, the difficulty is that the coarse graining of Gaussian states leads to change of the widths of the Gaussians, with requires the introduction of additional parameters \cite{holomorphic}. Another point is that the framework described here leads to first class constraints, whose imposition indeed leads to squeezed states. Alternatively, one might work with second class constraints, which indeed come up in the description of (more standard field-theoretic) vacua with Gaussian states \cite{timeevol,master2}. An interesting development in this direction is the recent work \cite{lanery1,lanery2,lanery3,lanery4}, which however still has to face the crucial question of whether (spatial) diffeomorphism constraints (apart from the Gauss constraints) can be implemented.

\item The framework developed here can also be useful for discussing more general representations. We have seen that the question of the composition of observables under coarse graining is crucial for the definition of the various vacua. We think that this property is important in order to simplify the eventual construction of a physical Hilbert space, which indeed should incorporate a coarse graining of the observables in its inductive limit structure, as explained in \cite{timeevol,dittrich14}.
\end{itemize}

\section*{Acknowledgements}

\noindent We would like to thank Abhay Ashtekar, Benjamin Bahr, Klaus Fredenhagen, Jerzy Lewandowski, Aldo Riello, Carlo Rovelli and Wolfgang Wieland for helpful discussions. This research was supported by Perimeter Institute for Theoretical Physics. Research at Perimeter Institute is supported by the Government of Canada through Industry Canada and by the Province of Ontario through the Ministry of Research and Innovation. MG is supported by the NSF Grant PHY-1205388 and the Eberly research funds of The Pennsylvania State University.

\appendix

\section{Inductive and projective limits}
\label{appendix:limits}

\noindent In order to setup an inductive or projective limit (sometimes also respectively called direct and inverse limits), we need a partially ordered and directed set $\mathcal{S}$ of labels $\{\Delta\}$, together with an order relation $\Delta\prec\Delta'$. Then, we associate an object $\mathcal{M}$ (from a given category) to each label, i.e. we consider the collection $\{\mathcal{M}_\Delta\}_{\Delta\in\mathcal{S}}$.

For a projective limit, we need a set of morphisms (i.e. projection maps) $\mathcal{P}_{\Delta',\Delta}:\mathcal{M}_{\Delta'}\rightarrow\mathcal{M}_\Delta$ which satisfy
\ba
&&\text{(a)}\q\mathcal{P}_{\Delta,\Delta}=\text{id}_\Delta,\ \forall\,\Delta\in\mathcal{S},\nn\\
&&\text{(b)}\q\mathcal{P}_{\Delta',\Delta}\circ\mathcal{P}_{\Delta'',\Delta'}=\mathcal{P}_{\Delta'',\Delta},\ \forall\,\Delta\prec\Delta'\prec\Delta''.
\ea
The projective limit can then be defined as a subset of the direct product
\be
\mathcal{M}_\infty=\lb p_\Delta\in{\prod}_\Delta\mathcal{M}_\Delta\ \big|\ \mathcal{P}_{\Delta',\Delta}(p_\Delta')=p_\Delta,\ \forall\,\Delta\prec\Delta'\rb.
\ee

For an inductive limit, we need a set of morphisms (i.e. embedding maps) $\mathcal{E}_{\Delta,\Delta'}:\mathcal{M}_{\Delta}\rightarrow\mathcal{M}_{\Delta'}$ which satisfy
\ba
&&\text{(a)}\q\mathcal{E}_{\Delta,\Delta}=\text{id}_\Delta,\ \forall\,\Delta\in\mathcal{S},\nn\\
&&\text{(b)}\q\mathcal{E}_{\Delta',\Delta''}\circ\mathcal{E}_{\Delta,\Delta'}=\mathcal{E}_{\Delta,\Delta''},\ \forall\,\Delta\prec\Delta'\prec\Delta''.
\ea
The inductive limit can then be defined as the disjoint union on which an equivalence relation is imposed:
\be
\mathcal{M}_\infty=\cup_\Delta\mathcal{M}_\Delta/\sim,
\ee
where $m_\Delta\sim m'_{\Delta'}$ if there exist a $\Delta''$ such that $\mathcal{E}_{\Delta,\Delta''}(m_\Delta)=\mathcal{E}_{\Delta',\Delta''}(m'_{\Delta'})$. In words, this means that two elements are equivalent if they become eventually equal under refinement.

\section{The Lie algebra $\boldsymbol{\su(2)}$}
\label{appendix:1}

\noindent Our choice of generators for the fundamental representation of $\su(2)$ is given by the matrices
\be
\tau_1=-\f{1}{2}
\begin{pmatrix}
0&\i\\
\i&0
\end{pmatrix},\q\q
\tau_2=\f{1}{2}
\begin{pmatrix}
0&-1\\
1&0
\end{pmatrix},\q\q
\tau_3=\f{1}{2}
\begin{pmatrix}
-\i&0\\
0&\i
\end{pmatrix},
\ee
which are defined in terms of the Pauli matrices by $\tau_i=-\i\sigma_i/2$, and satisfy the commutation relations $[\tau_i,\tau_j]=\eps_{ij}^{~~k}\tau_k$ as well as the identity
\be
\tau_i\tau_j=\f{1}{2}\eps_{ij}^{~~k}\tau_k-\f{1}{4}\delta_{ij}\openone.
\ee
Internal $\su(2)$ indices are lowered and raised with the flat Euclidean metric $\eta_{ij}=\text{diag}(1,1,1)$. The commutator between Lie algebra elements is given by
\be
\big[\tau^i,X\big]=\big[\tau^i,X^j\tau_j\big]=\eps^i_{~jk}X^j\tau^k.
\ee
The normalization of the representation matrices is such that $-4\tau_i^2=\openone$. Therefore, given a Lie algebra element $X=X^i\tau_i$, we have that
\be
\tr(X\tau^i)=-\f{1}{2}X^i.
\ee
To every group element $g\in\SU(2)$ we associate an $\SO(3)$ rotation matrix $D(g)$ with components $D^{ij}(g)$ given by
\be
D^{ij}(g)=-2\tr\big(g^{-1}\tau^ig\tau^j\big).
\ee
The action of these rotation matrices on Lie algebra elements is given by
\be
D^{ij}(g)X^j=\big(gXg^{-1}\big)^i.
\ee
Furthermore, the rotation matrices satisfy the property
\be\label{epsilonD}
\eps^{imn}D^{ij}(g)=\eps^{jm'n'}D^{m'm}\big(g^{-1}\big)D^{n'n}\big(g^{-1}\big).
\ee

\section{Poisson brackets of integrated fluxes}
\label{appendix:2}

\noindent In this appendix we give the detailed calculations of the various Poisson brackets between integrated fluxes that are considered in section \ref{fluxalgebra}. Although conceptually straightforward, these calculations are lengthy because they require to carefully keep track of all the holonomies involved in the definition of the integrated fluxes. The computations rely on the properties of the rotation matrices given in appendix \ref{appendix:1}.

\subsection{Case 1}

\noindent We consider the Poisson bracket between the fluxes
\be
\bX_{\pi_1}= g^{-1}_{rl^1_1(0)}\bX_{\pi_1\backslash r}g_{rl^1_1(0)},\q g_{rl^1_1(0)}=g_{l(1)l^1_1(0)}h_lg_{rl(0)},
\ee
and
\ba
\bX_{\pi_2}&=&g^{-1}_{rl^2_1(0)}\left(\ldots+g^{-1}_{l^2_1(0)l(0)}X_lg_{l^2_1(0)l(0)}+\ldots\right)g_{rl^2_1(0)}\nn\\
&=&\ldots+g^{-1}_{rl^2_1(0)}g^{-1}_{l^2_1(0)l(0)}X_l g_{l^2_1(0)l(0)}g_{rl^2_1(0)}+\ldots.
\ea
The Poisson bracket is given by
\ba
&&\lb\bX_{\pi_1}^i,\bX_{\pi_2}^j\rb\nn\\
&=&4\lb\tr\left(g_{rl(0)}^{-1}h_l^{-1}g_{l(1)l^1_1(0)}^{-1}\bX_{\pi_1\backslash r}g_{l(1)l^1_1(0)}h_lg_{rl(0)}\tau^i\right),\tr\left(g^{-1}_{rl^2_1(0)}g^{-1}_{l^2_1(0)l(0)}X_l^k\tau^k g_{l^2_1(0)l(0)}g_{rl^2_1(0)}\tau^j\right)\rb\nn\\
&=&-2\tr\left(g_{rl(0)}^{-1}\lb h_l^{-1}g_{l(1)l^1_1(0)}^{-1}\bX_{\pi_1\backslash r}g_{l(1)l^1_1(0)}h_l,X^k_l\rb g_{rl(0)}\tau^i\right)D^{kj}\left(g_{l^2_1(0)l(0)}g_{rl^2_1(0)}\right).
\ea
Now, using the fact that
\ba\label{app1}
\lb h_l^{-1}Zh_l,X_l^k\rb=\eps^{kmn}\big(h_l^{-1}Zh_l\big)^m \tau^n,
\ea
together with formula \eqref{epsilonD},we obtain the result
\ba
\lb\bX_{\pi_1}^i,\bX_{\pi_2}^j\rb
&=&\eps^{kmn}D^{ni}\big(g_{rl(0)}\big)\left(h_l^{-1}g_{l(0)l^1_1(0)}^{-1}\bX_{\pi_1\backslash r}g_{l(0)l^1_1(0)}h_l\right)^mD^{kj}\left(g_{l^2_1(0)l(0)}g_{rl^2_1(0)}\right)\nn\\
&=&\eps^{ikm}D^{kj}\left(g_{rl(0)}^{-1}g_{l^2_1(0)l(0)}g_{rl^2_1(0)}\right)\bX_{\pi_1}^m.
\ea
The case in which $g_{rl_1^1(0)}=g_{l(0)l^1_1(0)}h^{-1}_lg_{rl(1)}$ is computed in a similarly way, but instead of \eqref{app1} one has to use
\ba
\lb h_lZh^{-1}_l,X_l^k\rb=-\eps^{kmn}Z^mh_l\tau^nh_l^{-1}.
\ea

\subsection{Case 2}

\noindent We have to consider the Poisson bracket between the two fluxes defined by
\be
\bX_{\pi_1}=Z_1+g_{rl^1_1(0)}^{-1}g_{l^1_1(0)l(0)}^{-1}X_lg_{l^1_1(0)l(0)}g _{rl^1_1(0)},\q\bX_{\pi_2}=Z_2+g_{rl^2_1(0)}^{-1}g_{l^2_1(0)l(0)}^{-1}X_lg_{l^2_1(0)l(0)}g _{rl^2_1(0)}.\ee
The calculation proceeds straightforwardly using the identities in appendix \ref{appendix:1}
\ba
&&\lb\bX_{\pi_1}^i,\bX_{\pi_2}^j\rb\nn\\
&=&4\lb\tr\left(g_{rl^1_1(0)}^{-1}g_{l^1_1(0)l(0)}^{-1}X_l^k\tau^kg_{l^1_1(0)l(0)}g _{rl^1_1(0)}\tau^i\right),\tr\left(g_{rl^2_1(0)}^{-1}g_{l^2_1(0)l(0)}^{-1}X_l^m\tau^mg_{l^2_1(0)l(0)}g _{rl^2_1(0)}\tau^j\right)\rb\nn\\
&=&\lb X_l^k,X^m_l\rb D^{ki}\left(g_{l^1_1(0)l(0)}g _{rl^1_1(0)}\right)D^{mj}\left(g_{l^2_1(0)l(0)}g _{rl^2_1(0)}\right)\nn\\
&=&\eps^{kmn}X^n_lD^{ki}\left(g_{l^1_1(0)l(0)}g _{rl^1_1(0)}\right)D^{mj}\left(g_{l^2_1(0)l(0)}g _{rl^2_1(0)}\right)\nn\\
&=&\eps^{imn}D^{mj}\left(g_{rl^1_1(0)}^{-1}g_{l^1_1(0)l(0)}^{-1}g_{l^2_1(0)l(0)}g _{rl^2_1(0)}\right)\left(g_{rl^1_1(0)}^{-1}g_{l^1_1(0)l(0)}^{-1}X_lg_{l^1_1(0)l(0)}g _{rl^1_1(0)}\right)^n.
\ea

\subsection{Poisson brackets on the continuum phase space}
\label{PB continuum}

\noindent In this section, we compute in $d=2$ the Poisson bracket between two integrated flux observables whose co-paths cut each other. This computation will be performed at the level of the continuum phase space.

In terms of the continuum electric field $E^b_k(x)$, the integrated fluxes are defined in a manner similar to the simplicial fluxes \eqref{simplicial flux}, namely as
\ba
\bX_{\pi_I}^i&=&-2\int_{\pi_I}\tr\left(g^{-1}_{r\pi_I(s)}\tau^kg_{r\pi_I(s)}\tau^i\right)\eps_{ba}\dot{\pi}^a_I(s)E^{bk}\big(\pi_I(s)\big)\de s\nn\\
&=&\int_{\pi_I}D_I^{ki}(s)\eps_{ba}\dot{\pi}^a_I(s)E^{bk}\big(\pi_I(s)\big)\de s,
\ea
where we have introduced the notation $D_I^{ki}(s)\coloneqq D^{ki}\big(g_{r\pi_I(s)}\big)$. Here, $g_{r\pi_I(s)}$ is the holonomy going from the root to $\pi_I(0)$ and then along the co-path $\pi_I$ to the point $\pi_I(s)$. This holonomy is a function of the connection $A_a^i$. Since the only non-vanishing Poisson bracket is between the connection and the electric field, i.e. (in $d=2$ spatial dimensions)
\ba
\lb A_a^i(x),E^b_k(y)\rb=\delta^{(2)}(x,y)\delta_a^b\delta^i_k,
\ea 
we can write the Poisson bracket between the two integrated fluxes as
\be\label{rech31}
\lb\bX_{\pi_1}^i,\bX_{\pi_2}^j\rb=\int_{\pi_1}\lb D_1^{ki}(s),\bX_{\pi_2}^j\rb\eps_{ba}\dot{\pi}^a_1(s)E^{bk}\big(\pi_1(s)\big)\de s+\int_{\pi_2}\lb\bX_{\pi_1}^i,D_2^{kj}(s)\rb\eps_{ba}\dot{\pi}^a_2(s)E^{bk}\big(\pi_2(s)\big)\de s.
\ee
In order to find $\lb D_1^{ki}(s),\bX_{\pi_2}^j\rb$, we consider the Poisson bracket
\ba\label{rech32}
\lb g^{-1}_{r\pi_1(s)}\tau^kg_{r\pi_1(s)},\bX_{\pi_2}^j\rb&=&-g^{-1}_{r\pi_1(s)}\lb g_{r\pi_1(s)},\bX_{\pi_2}^j\rb g^{-1}_{r\pi_1(s)}\tau^kg_{r\pi_1(s)}+g^{-1}_{r\pi_1(s)}\tau^k\lb g_{r\pi_1(s)},\bX_{\pi_2}^j\rb\nn\\
&=&\left[g^{-1}_{r\pi_1(s)}\tau^kg_{r\pi_1(s)}\ ,\ g^{-1}_{r\pi_1(s)}\lb g_{r\pi_1(s)},\bX_{\pi_2}^j\rb\right].
\ea
We have to find the continuum Poisson bracket between an holonomy and an integrated flux. To this end, we write the holonomy as
\ba
g_{r\pi_1(s')}&=&g_{\pi_1(1/2+\epsilon)\pi_1(s')}g_{\pi_1(1/2-\epsilon)\pi_1(1/2+\epsilon)}g_{r\pi_1(1/2-\epsilon)}\nn\\
&=&g_{\pi_1(1/2+\epsilon)\pi_1(s')}\left[\openone-\int_{1/2-\epsilon}^{1/2+\epsilon}\tau^i\dot{\pi}_1^c(s)A^i_c\big(\pi_1(s)\big)\de s+O\big(\epsilon^2\big)\right]g_{r\pi_1(1/2-\epsilon)}.
\ea
Here, we assume that the intersection point between the co-path $\pi_2$ and the curve underlying the holonomy is located at $s=1/2$, and that $s'>1/2+\epsilon$. Now, introducing the function $\theta_{1/2}(s')$ which is such that $\theta_{1/2}(s')=1$ for $s'>1/2$, that $\theta_{1/2}(1/2)=1/2$, and which vanishing otherwise, we get that
\ba
\lb g_{r\pi_1(s')},\bX^j_{\pi_2}\rb&=&-\theta_{1/2}(s')g_{\pi_1(1/2+\epsilon)\pi_1(s')}\tau^ig_{r\pi_1(1/2-\epsilon)}\times\nn\\
&&\int_{\pi_1}\int_{\pi_2}\lb\dot{\pi}_1^c(s_1)A^i_c\big(\pi_1(s_1)\big),D^{kj}_2(s_2)\eps_{ba}\dot{\pi}^a_2(s_2)E^{b}_k\big(\pi_2(s_2)\big)\rb\de s_1\de s_2\nn\\
&=&-\theta_{1/2}(s')g_{\pi_1(1/2+\epsilon)\pi_1(s')}\tau^kg_{r\pi_1(1/2-\epsilon)}\times\nn\\
&&\int_{\pi_1}\int_{\pi_2}\delta^{(2)}\big(\pi_1(s_1),\pi_2(s_2)\big)\left(\eps_{ba}\dot{\pi}^b_1(s_1)\dot{\pi}^a_2(s_2)\right)D^{kj}_2(s_2)\de s_1\de s_2\nn\\
&\stackrel{\epsilon\rightarrow0}{=}&-\theta_{1/2}(s')\eps_{12}g_{\pi_1(1/2)\pi_1(s')}\tau^kg_{r\pi_1(1/2)}D_2^{kj}(1/2).
\ea
Here we have used the notation $\eps_{12}\coloneqq\text{sign}\left(\eps_{ba}\dot{\pi}^b_1(1/2)\dot{\pi}^a_2(1/2)\right)$. Using this result in \eqref{rech32}, we find that
\be
\lb D_1^{ki}(s),\bX_{\pi_2}^j\rb=-\theta_{1/2}(s)\eps_{12}\eps^{lmi}D^{nj}_2(1/2)D^{kl}_1(s)D^{nm}_1(1/2).
\ee
This enables us to conclude that the first term on the right-hand side of \eqref{rech31} is given by
\ba
\int_{\pi_1}\lb D_1^{ki}(s),\bX_{\pi_2}^j\rb\eps_{ba}\dot{\pi}^a_1(s)E^{bk}\big(\pi_1(s)\big)\de s&=&-\eps_{12}\eps^{lmi}D^{nj}_2(1/2)D^{nm}_1(1/2)\bX^l_{\pi_1[1/2]}\nn\\
&=&\eps_{12}\eps^{iml}D^{mj}\left(g^{-1}_{r\pi_1(1/2)}g_{r\pi_2(1/2)}\right)\bX^l_{\pi_1[1/2]},
\ea
where
\be
\bX^i_{\pi_I[1/2]}=\int_{\pi_I}\theta_{1/2}(s)D_I^{ki}(s)\eps_{ba}\dot{\pi}^a_I(s)E^{bk}\big(\pi_I(s)\big)\de s
\ee
is the part of the integrated flux observable associated to the part of the co-path lying after the crossing.

The second term on the right-hand side of \eqref{rech31} can be computed in the same way, and putting everything together we find that
\be
\lb\bX_{\pi_1}^i,\bX_{\pi_2}^j\rb=\eps_{12}\eps^{ikm}D^{kj}\left(g^{-1}_{r\pi_1(1/2)}g_{r\pi_2(1/2)}\right)\left(\bX^m_{\pi_1[1/2]}-D^{mn}\left(g^{-1}_{r\pi_1(1/2)}g_{r\pi_2(1/2)}\right)\bX^n_{\pi_2[1/2]}\right).\nn\\
\ee
This coincides with the result obtained in \eqref{case3PBfinal} with the computation at the level of the discrete phase space (there, we have that $\eps_{12}=-1$).

\end{document}